\documentclass[12pt]{article}
\usepackage[left=1in,top=1in,right=1in,bottom=1in,head=.1in,nofoot]{geometry}

\setlength{\footskip}{24pt} %
\usepackage{setspace,url,bm,amsmath} %

\usepackage[small]{titlesec}
\titlelabel{\thetitle.\quad}
\titleformat*{\section}{\bf\large\center}
\titlespacing*{\section}
{0pt}{0ex}{0ex}
\titlespacing*{\subsection}
{0pt}{0ex}{0ex}
\titlespacing*{\subsubsection}
{0pt}{0ex}{0ex}

\usepackage{graphicx} %
\usepackage{bbm}
\usepackage{latexsym}
\usepackage{caption}
\usepackage[margin=20pt]{subcaption}
\usepackage{titletoc}
\usepackage{hyperref}
\usepackage{booktabs}
\usepackage{multirow}

\usepackage{enumitem}

\usepackage[table]{xcolor}

\newcommand{\GG}[1]{}

\usepackage{amsthm}
\usepackage{amsmath}

\usepackage{newtxtext,newtxmath}

\usepackage{color}

\usepackage{comment}
\theoremstyle{definition}
\newtheorem{assumption}{Assumption}
\newtheorem*{theorem*}{Theorem}
\newtheorem{theorem}{Theorem}
\newtheorem*{rmk*}{Remark}

\newtheorem{proposition}{Proposition}
\newtheorem{lemma}{Lemma}

\newtheorem{remark}{Remark}

\newtheorem*{corollary*}{Corollary}

\usepackage{natbib} %
\bibpunct{(}{)}{;}{a}{}{,} %

\usepackage{etoolbox} %
\apptocmd{\sloppy}{\hbadness 10000\relax}{}{} %

\usepackage{color}
\usepackage{listings}

\def\ind{\begin{picture}(9,8)
         \put(0,0){\line(1,0){9}}
         \put(3,0){\line(0,1){8}}
         \put(6,0){\line(0,1){8}}
         \end{picture}
        }

\def\Pr{\mathbb{P}}

\def\converged{\stackrel{d}{\longrightarrow}}

\def\I{\mathbbm{1}}

\allowdisplaybreaks

\usepackage{textcomp}
\usepackage{xr}
\usepackage[textsize=tiny, textwidth = 2cm, shadow]{todonotes}

\usepackage{soul}

\def\liminf{\underline{\lim}}

\def\Ker{\text{Ker}}

\def\P{\mathbb{P}}

\def\COR{\textup{COR}}
\def\CRR{\textup{CRR}}

\def\OR{\textup{OR}}

\def\I{\mathbbm{1}}
\def\ind{\begin{picture}(9,8)
         \put(0,0){\line(1,0){9}}
         \put(3,0){\line(0,1){8}}
         \put(6,0){\line(0,1){8}}
         \end{picture}
        }

\def\rev{\color{black}}

\newcommand\redsout{\bgroup\markoverwith{\textcolor{red}{\rule[0.5ex]{2pt}{0.4pt}}}\ULon}

\allowdisplaybreaks

\begin{document}

\def\spacingset#1{\renewcommand{\baselinestretch}%
{#1}\small\normalsize} \spacingset{1}

\title{\bf 
\LARGE
Sensitivity Analysis for the Test-Negative Design
}
\author{
	Soumyabrata Kundu, Peng Ding, Jingshu Wang and Xinran Li
\footnote{
    Soumyabrata Kundu is a Ph.D. graduate, Department of Statistics, University of Chicago, Chicago, IL 60637 (E-mail: \href{mailto:soumyabratakundu@uchicago.edu}{soumyabratakundu@uchicago.edu}). 
    Peng Ding is Associate Professor, Department of Statistics, University of California, Berkeley, CA 94720 (E-mail: \href{mailto:pengdingpku@berkeley.edu}{pengdingpku@berkeley.edu}). 
    Jingshu Wang and Xinran Li are Assistant Professors, Department of Statistics, University of Chicago, Chicago, IL 60637 (E-mail: 
    \href{mailto:jingshuw@uchicago.edu}{jingshuw@uchicago.edu}
    and \href{mailto:xinranli@uchicago.edu}{xinranli@uchicago.edu}).\newline
    P.D. is partly supported by the National Science Foundation (DMS-2514234).
    J.W. is partly supported by the National Science Foundation (DMS-2113646 and DMS-2238656)
    and 
    National Institute of General Medical Sciences (R35GM162500).
    X.L. is partly supported by the National Science Foundation
(DMS-2400961).
}
}
\date{}
\maketitle

\begin{abstract}
\singlespacing
The test-negative design has become popular for evaluating the effectiveness of post-licensure vaccines using observational data. 
In addition to its logistical convenience on data collection, 
the design is also believed to control for the differential health-care-seeking behavior between vaccinated and unvaccinated individuals, 
an important while often unmeasured confounder between the vaccination and infection.
Hence, the design has been employed routinely to   monitor seasonal flu vaccines and more recently to measure the COVID-19 vaccine effectiveness.  
Despite its popularity, the design has been questioned, in particular about its ability to fully control for the unmeasured confounding. 
In this paper, we explore deviations from a perfect test-negative design, and propose various sensitivity analysis methods for estimating the effect of vaccination measured by the causal odds ratio on the subpopulation of individuals with good health-care-seeking behavior.  
We start with point identification of the causal odds ratio under a test-negative design, 
comparing different forms of identification assumptions and their corresponding estimands.
We then propose
two approaches for conducting sensitivity analysis, addressing the influence of the unmeasured confounding in two different ways. 
Specifically, one approach investigates partial control for unmeasured confounding in the test-negative design, while the other examines the impact of unmeasured confounding on both vaccination and infection. 
Furthermore, we combine these approaches to provide narrower bounds on the true causal odds ratio, 
and further 
sharpen the bounds by restricting
the treatment effect heterogeneity. 
Finally, we apply the proposed methods to evaluate the effectiveness of COVID-19 vaccines using observational data from test-negative designs. 
\end{abstract}

\noindent%
{\it Keywords:}  Potential outcome; Vaccine effectiveness; Unmeasured confounding; Odds ratio; Outcome-dependent sampling.

\doublespacing

\section{Introduction}
Large-scale randomized, placebo-controlled phase 3 trials have confirmed the efficacy of currently authorized COVID-19 vaccines \citep{polack2020safety, Baden2021}. However, with the continuous virus evolution and the development of new vaccines, randomized experiments may not be available to evaluate the vaccine effectiveness in a timely manner.
By contrast, observational studies can provide invaluable sources for understanding the real-world vaccine effectiveness. 
In particular, 
the test-negative design has emerged as an
efficient 
and practical 
approach
to
assess vaccine effectiveness compared with the traditional cohort or case-control designs, 
and 
has been used extensively during the COVID-19 pandemic \citep{Dean2021, thompson2021effectiveness, Abu21} and even before the pandemic to estimate vaccine effectiveness against seasonal influenza \citep{Jackson2013}. 
In general, 
the test-negative design recruits units that have some illness symptoms, attend some healthcare facility and test for a particular disease. %
The cases are then those who test positive, and the non-cases or controls are those who test negative. 
Finally, the vaccine efficacy is estimated by $1$ minus the odds ratio of testing positive between vaccinated and unvaccinated groups.

Compared with traditional designs, the test-negative design is often logistically simpler, applicable to large electronic health records, and thus more cost-effective, making it increasingly popular in measuring and monitoring vaccine efficacy \citep{sullivan2014potential, Dean2021}. 
While recognizing its advantages, it is crucial to understand whether the test-negative design yields unbiased estimation for vaccine efficacy and, if it does, in what sense.
\citet{Jackson2013} studied the rationale of the test-negative design and showed its validity  under certain assumptions. 
Importantly, they pointed out that the test-negative design
could help 
to remove confounding due to the unmeasured differential health-care-seeking behavior between vaccinated and unvaccinated individuals, which is a key advantage over the traditional designs. 
However, the assumptions justifying the validity of the test-negative design are likely to fail; see, e.g.,  \citet{sullivan2016theoretical}, \citet{shi2023current} and \citet{ortiz-brizuela_potential_2025} for illustrations using causal diagrams. 
Recently, 
the test-negative design has received growing interest in statistics. 
For example, 
\cite{wang2023randomization} studied cluster-randomized test-negative designs, and 
\citet{Tchetgen23} and \citet{yu2023test} studied test-negative designs with additional information such as negative controls and various reasons for testing.

In this paper, we propose methods for sensitivity analysis under the test-negative design. 
Specifically, we start with an ideal scenario in which the test-negative design is able to completely control for the unmeasured confounding such as health-care-seeking behavior, 
and show  
that the observed odds ratio regarding the vaccination's effect on infection from a test-negative design can be consistent for the 
corresponding true
causal odds ratio among individuals who will seek care when ill. 
Such an identification of causal odds ratio has been demonstrated in 
\citet{ciocuanea2021adjustment} and is closely related to the identification of causal odds ratio under outcome-dependent sampling \citep{Didelez2010}, which includes both the test-negative designs and the classical case-control studies as special cases. 
We also discuss and compare alternative identification strategies that rely on different identification assumptions and target distinct causal estimands.

We will then conduct two forms of sensitivity analysis,
and investigate how sensitive the observed odds ratio will be for estimating the causal odds ratio.  
One allows the test-negative design to only partially remove the unmeasured confounding, 
whereas the other investigates how the strength of unmeasured confounder affects the inference of the causal odds ratio.
More interestingly, 
we can combine
these two sensitivity analysis approaches to provide narrower bounds on the true causal odds ratio. 
Additional 
restrictions on
the treatment effect heterogeneity across different levels of unmeasured confounding can also be incorporated into our sensitivity analysis to sharpen the inference. 

Our sensitivity analysis is 
closely related to \cite{Lee2015}, \cite{smith2019bounding}, and \cite{gabriel2022causal}, which focused on selection bias and outcome-dependent sampling.
This is because in a test-negative design we can only observe individuals who attend some healthcare facility for testing.
Different from the existing literature, our sensitivity analysis methods are tailored for the test-negative design, in terms of both the causal estimand and the constraints on the unmeasured confounding.

The paper proceeds as follows. Section \ref{sec:framework} introduces the test-negative design under the potential outcome framework. 
Section \ref{sec:sen_partial} studies identification and sensitivity analysis for the test-negative design with fully or partially controlled unmeasured confounding. 
Section \ref{sec:sen_conf_strength} investigates the design's sensitivity to the strength of unmeasured confounding. 
Section \ref{sec:integrate} combines the two sensitivity analysis methods in Sections \ref{sec:sen_partial} and \ref{sec:sen_conf_strength} to obtain narrower bounds on the causal parameter. 
Section \ref{sec:categorical} extends our methods to scenarios with categorical exposure, outcome and a general unmeasured confounder.  
Section \ref{sec:sen_effect_heter} considers further constraints on the treatment effect heterogeneity. 
Section \ref{sec:prac} discusses some practical implementation issues, such as sensitivity parameter specification and confidence bounds construction. 
Section \ref{sec:app} applies our methods to evaluate the effectiveness of COVID-19 vaccines.
Section \ref{sec:conclusion} concludes with discussion. 
The supplementary material contains the proofs of all theorems, additional numerical results, and additional technical details.

\section{Framework, Notation, and Assumption}\label{sec:framework}

\subsection{Exposure, potential outcomes, and causal effects}\label{sec:potential}

We start with the setting 
of
a binary exposure indicating, e.g., whether an individual takes the vaccine or not, and a binary outcome indicating, e.g., whether an individual is infected or not, 
and focus on evaluating the causal effect of the exposure on the outcome. 
Let $Z\in \{0,1\}$ denote the binary exposure, 
$Y\in \{0,1\}$ denote the binary outcome, and $C$ denote the observed 
covariates.
{\rev 
For example, $Y$ may denote the infection status when an individual is tested for a particular disease, $Z$ may denote the vaccination status prior to the potential infection, and $C$ may denote some pretreatment covariates measured or determined before the vaccination decision, such as age and sex.} 
We further introduce a binary $U\in \{0,1\}$ to denote the unmeasured confounder that can affect both the exposure and outcome. 
For example,
$U$ is the unmeasured health-care-seeking behavior that can affect both vaccination and infection. 
In Section \ref{sec:categorical},
we will extend our analysis to include categorical exposures that can correspond to various types and levels of vaccination, 
categorical outcomes that can correspond to different levels of symptom severity in addition to infection, 
and general unmeasured confounders that can correspond to different levels of health-care-seeking behavior.

Following the potential outcome framework, 
under the stable unit treatment value assumption \citep[SUTVA;][]{Rubin:1980},
let $Y(1)$ and $Y(0)$ be the potential outcomes of an individual with and without the exposure, respectively. 
Note that in infectious disease studies, SUTVA may be violated due to interference \citep{hudgens2008toward, schnitzer_estimands_2022}. For simplicity, we maintain this assumption here; extending the analysis to allow interference is challenging both theoretically and in terms of data availability, and we leave this for future work.
We are interested in the causal effectiveness of the exposure on the outcome, measured by $1$ minus the causal odds ratio \citep{world2020design}
within stratum defined by the levels of 
observed covariates and unmeasured confounder: 
\begin{align}\label{eq:COR}
    \COR_{uc} \equiv \frac{\P(Y(1)=1\mid U = u, C = c)/\P(Y(1) = 0 \mid U = u, C=c)}{\P(Y(0)=1\mid U = u, C = c)/\P(Y(0) = 0 \mid U = u, C=c)}. 
\end{align}
We will focus on $\COR_{uc}$ at a particular level of $U$ that is controlled, or at least partially controlled, by the test-negative design 
and any given level of the observed covariates $C$.
Scientifically, it is also
interesting to infer the causal odds ratio for units with the other unmeasured confounding level or both unmeasured confounding levels. However, it may be challenging due to either the nature of the test-negative design or the noncollapsibility of the odds ratio \citep{Robins99, Geng1995}; see related discussions at the end of Section \ref{sec:sen_strength} and the beginning of Section \ref{sec:sen_effect_heter}.

\begin{remark}
Sometimes, 
the vaccine efficacy is defined as 1 minus the causal risk ratio. 
We mainly focus on the causal odds ratio under the test-negative design 
because, under outcome-dependent sampling, the causal odds ratio is the only identifiable parameter without further assumptions \citep{Didelez2010}. 
The causal odds ratio approximates the causal risk ratio for rare outcomes, but they can be different in general.
The existing literature has also considered inferring the causal risk ratio under the test-negative design 
\citep{Jackson2013, yu2023test}. 
We will discuss the difference between the identification assumptions for these two causal estimands in detail in Section 
\ref{sec:sen_partial}, {\rev as well as a connection between them in Section \ref{sec:cat_connect}}. 
\end{remark}

\begin{remark}
{\rev 
Under additional assumptions that may be plausible when $Y$ is a properly defined categorical outcome with more than two levels, the causal estimand in \eqref{eq:COR} reduces to a causal risk ratio; see Section \ref{sec:cat_connect} for details.}
\end{remark}

\subsection{Test-negative design}\label{sec:tnd}

Figure \ref{fig:dag} illustrates the test-negative design. 
In Figure~\ref{fig:dag}, the binary $T$ indicates whether 
or not an individual is 
included in 
the test-negative design. 
As shown in Figure \ref{fig:dag}, 
$T$ 
can
depend on the outcome, the observed covariates and the unobserved confounder.
For example, in COVID-19 studies, the test-negative design may include only individuals who have the disease symptoms and seek for a RT-PCR test, under which the selection of study units $T$ could depend on both the infection status $Y$ and the unmeasured health-care-seeking behavior $U$.
From the observed samples with $T=1$, we can only identify 
\begin{align}\label{eq:o_zy}
    \pi_{zy\mid c} = \P( Z = z, Y= y \mid C = c, T = 1), \quad (\forall z, y, c), 
\end{align}
which represents a \(2\times2\) contingency table of vaccination and infection statuses for each observed covariate value $c$.

\begin{figure}[htb]
    \centering
    \includegraphics[width=0.33\linewidth]{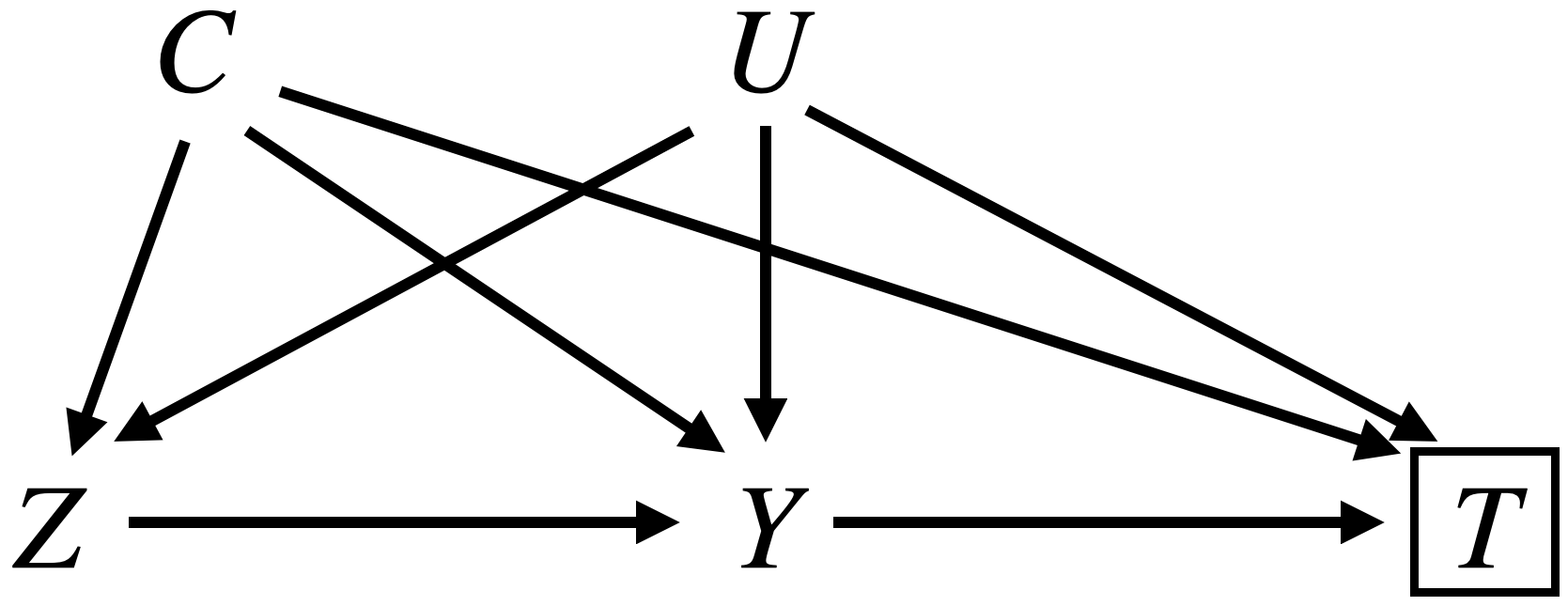}
    \caption{
    A causal diagram for the test-negative design. 
    $Z$ denotes the exposure, $Y$ denotes the outcome, $C$ denotes the observed covariates, $U$ denotes the unmeasured confounder, 
    and $T$ denotes the inclusion indicator. 
    Similar to \citet{pearl12},
    we put a box around $T$ since the design involves selection on $T=1$. 
    In particular, the design may include only units who test for a certain disease.
    }
    \label{fig:dag}
\end{figure}

A key assumption depicted in Figure~\ref{fig:dag} is that
the 
inclusion indicator
$T$ and the exposure $Z$ are conditionally independent
given the outcome $Y$, covariates $C$, and unmeasured confounder $U$. 
Additionally, we assume that, given the covariates $C$ and unmeasured confounder $U$, the exposure $Z$ is conditionally independent of the potential outcomes $Y(z)$, $z=0,1$. 
We emphasize that the causal diagram in Figure \ref{fig:dag} is mainly for illustration purpose, and the assumptions are formally stated below in terms of potential outcomes and conditional independence. 

\begin{assumption}\label{asmp:U}
(i) $Z \ind Y(z) \mid U, C$ for all exposure level $z$; \ 
(ii) $Z \ind T \mid Y, U, C$.

\end{assumption}
Assumption \ref{asmp:U}(i) is frequently invoked in a sensitivity analysis \citep{Rubin1983, ding2016sensitivity}. 
Assumption \ref{asmp:U}(ii) indicates that 
the exposure has no direct effect on 
the inclusion indicator, 
except for the indirect effect mediated through the outcome $Y$. 
In practice, we may want to include more information in $Y$ to block the direct causal path from $Z$ to $T$; see Sections \ref{sec:cat_connect} and \ref{sec:categorical}. 
{\rev 
Assumption \ref{asmp:U}(ii) can also be written in terms of potential outcomes: for any possible values $y, u$ and $c$ of the outcome, unmeasured confounder and observed covariates, 
\begin{align*}
        \P(T=1\mid Y(1) = y, Z=1, U=u, C=c) = \P(T=1\mid Y(0) = y, Z=0, U=u, C=c).
\end{align*}
This states that treated units with potential outcome $Y(1)=y$ and control units with potential outcome $Y(0)=y$ have the same probability of being included in the test-negative design, conditional on having the same values of the unmeasured confounder and observed covariates.}
Assumption \ref{asmp:U} is similar to the assumptions invoked for identifying causal effects under outcome-dependent sampling \citep[][Theorem 7]{Didelez2010}, except that it allows for an unmeasured confounder $U$.

Throughout the paper, our analysis for the test-negative design will always condition on the observed covariates $C=c$. 
We will make this conditioning explicit for clarity.

\section{Identification and sensitivity analysis under fully or partially controlled unmeasured confounding}\label{sec:sen_partial}

\subsection{Identification 
under a perfect test-negative design}\label{sec:perfect_test}

A key advantage of the test-negative design, compared with traditional designs, is that it can help 
reduce unmeasured confounding
between the exposure and outcome.
We illustrate this advantage using the 
example in \citet{Jackson2013}. 
Suppose we are interested in assessing the influenza vaccine effectiveness. 
Health-care-seeking behavior is an important but often unmeasured confounder that influences both an individual's inclination to receive a vaccine and their susceptibility to influenza infection.
The test-negative design collects only units that go to hospital for testing after they have 
influenza-like symptoms. 
These collected units generally 
have positive or good health-care-seeking behavior.  
If 
units in the test-negative design could all have the same level of the unmeasured confounder, 
then 
the unmeasured confounder is implicitly controlled for. 
Below we formally discuss the identification of causal effects under such a \emph{perfect} test-negative design. 

We consider a specific level $c$ of the baseline covariates.
Without loss of generality, we use $0$ to denote the level of the unmeasured confounder that the test-negative design is believed to control for.  
We then invoke the following assumption that the test-negative design fully controls for the unmeasured confounder. 

\begin{assumption}[A perfect test-negative design]\label{asmp:U_T}
$\P( U = 0 \mid T = 1, C = c ) = 1$. 
\end{assumption}

Under Assumptions \ref{asmp:U} and \ref{asmp:U_T}, 
we can identify the causal odds ratio for units with unmeasured confounder $U=0$ and baseline covariates $C=c$, using the observed odds ratio between the exposure and outcome among tested units with baseline covariates $C=c$: 
\begin{align}\label{eq:OR}
    \OR_c
    & = \frac{\P(Y=1\mid Z=1, T = 1, C = c)/\P(Y = 0 \mid Z = 1, T=1, C=c)}{\P(Y=1\mid Z=0, T=1,  C = c)/\P(Y = 0 \mid Z=0, T= 1, C=c)}
    \nonumber
    \\
    & = 
    \frac{\pi_{11\mid c} \pi_{00\mid c}}{
    \pi_{10\mid c} \pi_{01\mid c}}. 
\end{align}

\begin{proposition}\label{prop:cor_perfect_test}
Under Assumptions \ref{asmp:U} and \ref{asmp:U_T}, 
we have 
$\COR_{0c} = \OR_c$. 
\end{proposition}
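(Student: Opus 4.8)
The plan is to exploit the well-known symmetry of the odds ratio under interchange of its two binary arguments, which lets me route around the fact that selection into the study through $T$ depends on the outcome $Y$. First I would use Assumption \ref{asmp:U_T}: since $\P(U=0\mid T=1, C=c)=1$, conditioning on the event $\{T=1, C=c\}$ is identical to conditioning on $\{T=1, U=0, C=c\}$, so every conditional probability appearing in $\OR_c$ may be augmented with ``$U=0$'' at no cost. The tempting next move is to equate $\P(Y=y\mid Z=z, T=1, U=0, C=c)$ directly with $\P(Y(z)=y\mid U=0, C=c)$ via consistency and Assumption \ref{asmp:U}(i); but this is exactly where a naive argument breaks. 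Because $T$ is permitted to depend on $Y$, conditioning on $T=1$ distorts the conditional law of $Y$ given $Z$, so the selected outcome distribution is not the potential-outcome distribution.

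The key step is to rewrite $\OR_c$ in its symmetric form, factoring the $2\times2$ table of $(Z,Y)$ in the $Z\mid Y$ direction rather than the $Y\mid Z$ direction:
\begin{align*}
\OR_c = \frac{\P(Z=1\mid Y=1, T=1, C=c)/\P(Z=0\mid Y=1, T=1, C=c)}{\P(Z=1\mid Y=0, T=1, C=c)/\P(Z=0\mid Y=0, T=1, C=c)}.
\end{align*}
Both factorizations equal $o_{11\mid c}o_{00\mid c}/(o_{10\mid c}o_{01\mid c})$, so this is merely an algebraic identity. Now, conditioning on $Y$ (and, for free, on $U=0$ through Assumption \ref{asmp:U_T}), I would invoke Assumption \ref{asmp:U}(ii), $Z\ind T\mid Y, U, C$, to strip off the selection event: $\P(Z=z\mid Y=y, T=1, U=0, C=c)=\P(Z=z\mid Y=y, U=0, C=c)$. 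This re-expresses $\OR_c$ as the odds ratio between $Z$ and $Y$ conditional on $\{U=0, C=c\}$ in the full, unselected population.

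Finally I would flip the factorization back to the $Y\mid Z$ direction using odds-ratio symmetry once more, obtaining a ratio of $\P(Y=y\mid Z=z, U=0, C=c)$ terms, and then apply consistency $Y=Y(Z)$ together with Assumption \ref{asmp:U}(i), $Z\ind Y(z)\mid U, C$, to replace $\P(Y=y\mid Z=z, U=0, C=c)$ by $\P(Y(z)=y\mid U=0, C=c)$. Substituting these four quantities reproduces $\COR_{0,c}$ exactly as defined in \eqref{eq:COR}. The main obstacle is conceptual rather than computational: one must recognize that the outcome-dependent selection on $T$ forbids the direct $Y\mid Z$ argument, and that the odds ratio's invariance to the direction of conditioning is precisely what aligns the available independence (Assumption \ref{asmp:U}(ii), which is phrased conditional on $Y$) with the target estimand. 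Once that pivot is made, the remaining manipulations are routine.
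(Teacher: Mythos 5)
Your proposal is correct and is essentially the paper's own proof traversed in the opposite direction: where the paper starts from $\COR_{0,c}$ and applies Assumption \ref{asmp:U}(i), odds-ratio invariance, Assumption \ref{asmp:U}(ii), Assumption \ref{asmp:U_T}, and invariance again to arrive at $\OR_c$, you start from $\OR_c$ and apply exactly the same five ingredients in reverse order. The pivot you emphasize---using the invariance of the odds ratio to move to the $Z \mid Y$ factorization so that Assumption \ref{asmp:U}(ii) can strip off the outcome-dependent selection event $T=1$---is precisely the mechanism in the paper's argument.
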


The identification of causal odds ratio in Proposition \ref{prop:cor_perfect_test} has been shown by \citet{ciocuanea2021adjustment}, 
which follows from 
\citet[][Theorem 7]{Didelez2010} once we notice that the test-negative design 
contains only units with
the unmeasured confounder $U=0$ under Assumption \ref{asmp:U_T}. 
Below we include a proof of Proposition \ref{prop:cor_perfect_test}, since it is elementary and
helpful for understanding our later discussion. 
In addition, it also shows the crucial role of  Assumptions \ref{asmp:U} and \ref{asmp:U_T}, as well as the {\it invariance} property of the odds ratio. 
The invariance means that, for any two binary random variables $A$ and $B$, the ratio between the odds of $A=1$ given $B=1$ and that of $A=1$ given $B=0$ will be invariant when we switch the roles of $A$ and $B$.

\begin{proof}[Proof of Proposition \ref{prop:cor_perfect_test}]
By Assumptions \ref{asmp:U} and \ref{asmp:U_T}, 
we have, 
\begin{align*}%
    \COR_{0c} & = \frac{\P(Y=1\mid Z=1, U = 0, C = c)/\P(Y = 0 \mid Z = 1, U = 0, C=c)}{\P(Y=1\mid Z=0, U = 0, C = c)/\P(Y = 0 \mid Z=0, U = 0, C=c)} 
    \nonumber
    \\
    & \quad \ \text{(\textit{by Assumption \ref{asmp:U}(i)})}
    \nonumber
    \\
    & = 
    \frac{\P(Z=1\mid Y=1, U = 0, C = c)/\P(Z = 0 \mid Y = 1, U = 0, C=c)}{\P(Z=1\mid Y=0, U = 0, C = c)/\P(Z = 0 \mid Y=0, U = 0, C=c)} 
    \nonumber
    \\
    & \quad \ \text{(\textit{by invariance of the odds ratio})}
    \nonumber
    \\
    & = 
    \frac{\P(Z=1\mid Y=1, T = 1, U = 0, C = c)/\P(Z = 0 \mid Y = 1, T=1, U = 0, C=c)}{\P(Z=1\mid Y=0, T=1, U = 0, C = c)/\P(Z = 0 \mid Y=0, T= 1, U = 0, C=c)} 
    \nonumber
    \\
    & \quad \ \text{(\textit{by Assumption \ref{asmp:U}(ii)})}
    \nonumber
    \\
    & = 
    \frac{\P(Z=1\mid Y=1, T = 1, C = c)/\P(Z = 0 \mid Y = 1, T=1, C=c)}{\P(Z=1\mid Y=0, T=1,  C = c)/\P(Z = 0 \mid Y=0, T= 1, C=c)}
    \nonumber
    \\
    & \quad \ \text{(\textit{by Assumption \ref{asmp:U_T}})}
    \nonumber
    \\
    & = \OR_c.
    \\
    & \quad \ \text{(\textit{by invariance of the odds ratio})}
    \nonumber,
\end{align*}
where the reason for each equality is explained in the parentheses. 
\end{proof}

Proposition \ref{prop:cor_perfect_test} implies that, when the test-negative design can fully control for the unmeasured confounding between the exposure and outcome, 
we can perform usual analyses for the exposure-outcome odds ratio using the observed data, taking into account the measured baseline covariates.

\subsection{Comparison with alternative identification strategies for the causal risk ratio}\label{sec:comp_ident}

In contrast to Proposition \ref{prop:cor_perfect_test} that focuses on the identification of causal odds ratio, 
the existing literature on test-negative designs has focused primarily on the causal risk ratio:
\begin{align}\label{eq:CRR}
	 \CRR_{0,c} &  \equiv \frac{\P(Y(1)=1 \mid U=0, C=c)}{\P(Y(0)=1 \mid U=0, C=c)} . 
\end{align}
Below we discuss alternative identification strategies for causal risk ratio
 in 
\citet{Jackson2013} and \citet{yu2023test}. 
We will formalize their identification using the potential outcome framework and discuss the difference between the identification assumptions for causal odds ratios and risk ratios.
Because both strategies invoke similar assumptions, we focus on the identification in \citet{Jackson2013}, and relegate the detailed discussion about \citet{yu2023test} to the supplementary material. 
The identification of the causal risk ratio in \citet{Jackson2013} 
also requires Assumptions \ref{asmp:U}(i) and \ref{asmp:U_T},
that is, 
$U$ is the unmeasured confounder between the exposure and the outcome, and the test-negative design contains only units with positive health-care-seeking behavior (denoted by $U=0$). 
\citet{Jackson2013} made the following two additional assumptions, either implicitly or explicitly.

\begin{assumption}\label{asmp:jackson_all}
\begin{enumerate}[label={(\roman*)}, topsep=1ex,itemsep=-0.3ex,partopsep=1ex,parsep=1ex]
    \item $\Pr( T = 1\mid Z= z, Y = 1, U=0, C = c ) = 1$ for $z=0,1$.
    \item $\Pr( Y=0, T=1\mid Z=1, U=0, C = c ) = \Pr( Y=0, T=1\mid Z=0, U=0, C = c )$. 
\end{enumerate}
\end{assumption}

We first give some intuition for Assumption \ref{asmp:jackson_all}(i). 
Assumption \ref{asmp:jackson_all}(i) can be relaxed to 
\begin{align}\label{eq:weaker_Jackson_1}
	Z\ind T \mid Y=1, U=0, C=c
\end{align}
which is part of Assumption \ref{asmp:U}(ii) except that it requires only conditional independence between the exposure and testing among infected units. 
Moreover,
\citet{Jackson2013} focused on influenza vaccine and considered the setting where units with positive health-care-seeking behavior ($U=0$) must seek for testing when they are infected with either influenza or other pathogens, and they will not seek for testing when they are not infected, regardless of their vaccination status. 
Under their setting, 
Assumption \ref{asmp:jackson_all}(i) must hold.

We then give some intuition for Assumption \ref{asmp:jackson_all}(ii). 
Under the setting considered by \citet{Jackson2013}, 
units with positive health-care-seeking behavior ($U=0$)
will seek testing without having an influenza infection ($Y=0$ and $T=1$) if and only if they are infected with other pathogens.
Consequently, 
$\I\{Y=0, T=1\} = \I\{\text{infected with other pathogens}\}$ for units with $U=0$, and 
Assumption \ref{asmp:jackson_all}(ii) reduces to 
\begin{align}\label{eq:jackson_other_pathogens}
   \I\{\text{infected with other pathogens}\} \ind Z \mid U=0, C=c;
\end{align}
that is, given positive health-care-seeking behavior $U=0$ and observed covariates $C=c$, whether a unit is infected with other pathogens is independent of the unit's vaccination status.

The proposition below shows that the causal risk ratio can be identified by the observed odds ratio under the above assumptions.

\begin{proposition}\label{prop:iden_jackson}
    Under Assumptions \ref{asmp:U}(i), \ref{asmp:U_T}, and \ref{asmp:jackson_all}, 
    we have $\CRR_{0,c} = \OR_c$. 
\end{proposition}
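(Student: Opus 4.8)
The plan is to evaluate the observed odds ratio $\OR_{c}$ in \eqref{eq:OR} directly and show that its ingredients reorganize into the causal risk ratio $\CRR_{0,c} = \P(Y(1)=1\mid U=0, C=c)/\P(Y(0)=1\mid U=0, C=c)$. The first move is to use Assumption \ref{asmp:U_T} to append ``$U=0$'' to every conditioning event: since $\P(U=0\mid T=1,C=c)=1$, conditioning on $(T=1,C=c)$ is the same as conditioning on $(T=1,U=0,C=c)$. Then, within each exposure arm $Z=z$, I would rewrite the conditional odds $\P(Y=1\mid Z=z,T=1,U=0,C=c)/\P(Y=0\mid Z=z,T=1,U=0,C=c)$ as the ratio of joint probabilities $\P(Y=1,T=1\mid Z=z,U=0,C=c)/\P(Y=0,T=1\mid Z=z,U=0,C=c)$, the common normalizing factor $\P(T=1\mid Z=z,U=0,C=c)$ cancelling within the arm.

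Next I would simplify numerator and denominator separately. For the numerator, Assumption \ref{asmp:jackson_all}(i) gives $\P(T=1\mid Z=z,Y=1,U=0,C=c)=1$, so that $\P(Y=1,T=1\mid Z=z,U=0,C=c)=\P(Y=1\mid Z=z,U=0,C=c)$; invoking consistency $Y=Y(z)$ on $\{Z=z\}$ together with Assumption \ref{asmp:U}(i), this equals $\P(Y(z)=1\mid U=0,C=c)$, the causal infection risk in arm $z$. For the denominator, I would simply apply Assumption \ref{asmp:jackson_all}(ii), which asserts that $\P(Y=0,T=1\mid Z=z,U=0,C=c)$ takes a common value, say $q$, for $z=0$ and $z=1$.

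Assembling the two arm-specific odds into $\OR_{c}$, the denominators are both $q$ and cancel, leaving
\[
\OR_{c} = \frac{\P(Y(1)=1\mid U=0,C=c)/q}{\P(Y(0)=1\mid U=0,C=c)/q} = \frac{\P(Y(1)=1\mid U=0,C=c)}{\P(Y(0)=1\mid U=0,C=c)} = \CRR_{0,c}.
\]
The step that deserves the most care, and which I regard as the crux, is the denominator: unlike Proposition \ref{prop:cor_perfect_test}, the invariance of the odds ratio is not what drives this result. Rather, it is Assumption \ref{asmp:jackson_all}(ii) — equivalently the conditional independence \eqref{eq:jackson_other_pathogens} of ``infected with other pathogens'' from $Z$ — that forces the test-negative denominators to be exposure-free, so the observed odds ratio degenerates into a ratio of causal infection probabilities. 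Without this balance the two $q$'s would not cancel and $\OR_{c}$ would estimate an odds ratio rather than a risk ratio, while Assumption \ref{asmp:jackson_all}(i) plays the complementary role of making the $Y=1$ cell reflect the entire infected population rather than only the tested-and-infected subset. I would also flag consistency/SUTVA explicitly, since it is used implicitly in passing from $Y$ to $Y(z)$.
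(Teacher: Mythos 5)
Your proof is correct and follows essentially the same route as the paper's own proof: insert $U=0$ into all conditioning events via Assumption \ref{asmp:U_T}, rewrite the arm-specific odds as ratios of joint probabilities with $T=1$, use Assumption \ref{asmp:jackson_all}(i) to reduce the $Y=1$ cells to unconditional infection risks and Assumption \ref{asmp:jackson_all}(ii) to cancel the $Y=0$ cells, then apply Assumption \ref{asmp:U}(i) to pass to potential outcomes. Your explicit remarks that only the weaker condition $Z\ind T\mid Y=1, U=0, C=c$ is needed in the numerator and that consistency is used in passing from $Y$ to $Y(z)$ are both consistent with (and in the first case identical to) observations the paper itself makes.
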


{\rev 
\citet{schnitzer_estimands_2022} studied a similar identification strategy in which the outcome is jointly defined by infection, symptoms, and hospitalization, and further proposed an inverse probability weighting approach to estimate the marginal causal risk ratio, averaged over the covariate distribution.
}

We then give a proof of Proposition \ref{prop:iden_jackson} below, since it is elementary and also helpful for understanding the importance of the identification assumptions. 

\begin{proof}[Proof of Proposition \ref{prop:iden_jackson}]
Under Assumptions \ref{asmp:U}(i), \ref{asmp:U_T}, and \ref{asmp:jackson_all}, we have
\begin{align*}
    \OR_c
    & = \frac{\P(Y=1\mid Z=1, T = 1, C=c)/\P(Y = 0 \mid Z = 1, T=1, C=c)}{\P(Y=1\mid Z=0, T=1, C=c)/\P(Y = 0 \mid Z=0, T=1, C=c)}
    \nonumber\\
    & \quad \ \text{(\textit{by definition})}
    \nonumber
    \\
    & = \frac{\P(Y=1\mid Z=1, T = 1, U=0, C=c)/\P(Y = 0 \mid Z = 1, T=1, U=0, C=c)}{\P(Y=1\mid Z=0, T=1, U=0, C=c)/\P(Y = 0 \mid Z=0, T=1, U=0, C=c)}
    \nonumber
    \\
    & \quad \ \text{(\textit{by Assumption \ref{asmp:U_T}})}
    \nonumber
    \\
    & = \frac{\P(Y=1, T=1 \mid Z=1, U=0, C=c)}{\P(Y=1, T=1 \mid Z=0, U=0, C=c)}
    \cdot 
    \frac{\P(Y = 0, T=1 \mid Z=0, U=0, C=c)}{\P(Y = 0, T=1 \mid Z = 1, U=0, C=c)}
    \\
    & \quad \ \text{(\textit{by some algebra})}
    \\
    & =  \frac{\P(Y=1 \mid Z=1, U=0, C=c)}{\P(Y=1 \mid Z=0, U=0, C=c)}
    \cdot 1
    \nonumber
    \\
    & \quad \ \text{(\textit{the first term follows from Assumption \ref{asmp:jackson_all}(i) or the weaker assumption in \eqref{eq:weaker_Jackson_1}, }} \\
    & \quad \ \text{\textit{
    the second term follows from Assumption \ref{asmp:jackson_all}(ii)})}
    \nonumber
    \\
    & = \frac{\P(Y(1)=1 \mid U=0, C=c)}{\P(Y(0)=1 \mid U=0, C=c)} \\
    & \quad \ \text{(\textit{by Assumption \ref{asmp:U}(i)})}
    \nonumber
    \\
    & = \CRR_{0,c}, \\
    & \quad \ \text{(\textit{by definition})}
\end{align*}
where the reason for each equality is explained in the parentheses. 
\end{proof}

Assumption \ref{asmp:jackson_all}(ii) or the intuitive version \eqref{eq:jackson_other_pathogens} is the key for the identification of the causal risk ratio. 
However, it may not hold 
when the vaccination
increases the infection from  viruses other than the target one, due to, say, the virus interference. 
The virus interference refers to the phenomenon 
where natural influenza infection may decrease the risk of noninfluenza respiratory viruses (\citet{Alharbi2022}). 
In this case, vaccinated individuals could face a higher risk of noninfluenza respiratory virus infections than the unvaccinated. 
See \citet{Alharbi2022} for more related references and real-data examples of such phenomena. 
\citet{HABER2015} has also conducted simulation under violations of Assumption \ref{asmp:jackson_all}(ii) due to, say, the virus interference, and showed that the bias of vaccine effectiveness estimate from a test-negative design can be quite substantial. 
In addition, they found that the test-negative design can be more sensitive to violations of Assumption \ref{asmp:jackson_all}(ii) than the traditional case-control design.

By contrast, 
the identification for causal odds ratio in Proposition \ref{prop:cor_perfect_test} does not require Assumption \ref{asmp:jackson_all}(ii) that may fail due to virus interference. Instead, it requires Assumption \ref{asmp:U}(ii) that the choice of testing is independent of exposure for both infected and uninfected units given their observed covariates and unmeasured confounder. 
This assumption can be made more plausible when we include richer information in the outcome, such as both infection and symptoms; see Sections \ref{sec:cat_connect} and \ref{sec:categorical}.

{\rev 
\subsection{Categorical outcome and connections between the two identification strategies}\label{sec:cat_connect}

Because the test-negative design may include only tested individuals with disease-like illness symptoms, 
Assumption \ref{asmp:U}(ii) may be violated when the outcome $Y$ is defined solely as a binary infection status and the exposure has direct effects on symptoms that are not mediated through infection. 
To address this issue, we can redefine the outcome $Y$ as a categorical variable 
that jointly captures the infection status and the illness symptoms. 
To be more specific, let $I$ denote the binary indicator of infection status and $S$ denote the binary indicator of illness symptoms; that is, $I=1$ if and only if the individual is infected, and $S=1$ if and only if the individual has illness symptoms. 
We then define the outcome $Y$ as a four-level categorical variable corresponding to the joint levels of $(I,S)$.
In particular, let $Y=1$ if $I=S=1$ (i.e., the individual is infected and symptomatic), and $Y=0$ if $I=0$ and $S=1$ (i.e., the individual is symptomatic but not infected with the disease of interest); the remaining two levels of $(I,S)$ can be encoded as $Y=2$ and $Y=3$.
Here we consider a test-negative design in which the inclusion indicator $T=1$ if and only if an individual has illness symptoms and seeks testing. Accordingly, units in the design must have outcome $Y \in \{0,1\}$. 

Importantly, with the outcome $Y$ incorporates both infection status and symptoms, the conditional independence between the exposure $Z$ and inclusion indicator $T$ given $(Y, U, C)$ in Assumption \ref{asmp:U}(ii) is more likely to hold. 
Suppose that Assumption \ref{asmp:U} holds with this categorical outcome. We can verify that the derivation in the proof of Proposition \ref{prop:cor_perfect_test} continues to hold. 
Consequently, the observed odds ratio $\OR_c$ defined as in \eqref{eq:OR} still identifies the causal estimand $\COR_{0c}$ defined as in \eqref{eq:COR}; see also related discussion in Section \ref{sec:categorical}.
It is worth pointing out that $\COR_{0c}$ is no longer a conventional odds ratio, because the potential outcomes $Y(1)$ and $Y(0)$ take more than two levels. In contrast, $\OR_c$ remains an odds ratio, as the design includes only units with outcome $Y\in \{0,1\}$.

If we further assume that 
\begin{align}\label{eq:asmp_no_virus_int}
    \P(Y(1) = 0 \mid U = 0, C=c) = \P(Y(0) = 0 \mid U = 0, C=c),
\end{align}
then the causal estimand $\COR_{0c}$ defined as in \eqref{eq:COR} reduces to the causal risk ratio $\CRR_{0c}$ defined as in \eqref{eq:CRR}. 
Recall that the outcome equals $0$ if and only if the individual has illness symptoms but is not infected with the disease of interest, i.e., the symptoms are due to other diseases.
Thus, the assumption in \eqref{eq:asmp_no_virus_int} is analogous to the assumption considered by \citet{Jackson2013} and discussed in Section \ref{sec:comp_ident}. 
In other words, if we are willing to assume that, conditional on the observed covariates and unmeasured confounder, the exposure has no effect on the risk of illness symptoms not caused by the disease of interest, then the causal estimand $\COR_{0c}$ reduces to the causal risk ratio in \eqref{eq:CRR}, which targets the effectiveness of the exposure against symptomatic infection and may be easier to interpret.
Note that the assumption in \eqref{eq:asmp_no_virus_int} is generally not reasonable when the outcome is binary; 
it would imply $\COR_{0c}=1$, which may be inconsistent with the observed data from a test-negative design.

Throughout the paper, we focus first on sensitivity analysis with binary exposure, outcome and unmeasured confounder, and then extend the framework to more general settings. As discussed above, when the outcome has more than two levels and an assumption analogous to \eqref{eq:asmp_no_virus_int} hold, the causal estimand $\COR_{0c}$ in \eqref{eq:COR} that we focus on reduces to the corresponding causal risk ratio $\CRR_{0c}$ in \eqref{eq:CRR}. 

}

\subsection{Sensitivity analysis for an imperfect test-negative design}\label{sec:sen_U_not_0}

As discussed in Section \ref{sec:perfect_test}, 
a perfect test-negative design can fully control for the unmeasured confounding and identify the causal odds ratio at a given level of measured and unmeasured confounders simply by the observed odds ratio. 
The identification
needs 
Assumption \ref{asmp:U_T}, which requires that the unmeasured confounder is constant  among  units in the test-negative design (referred to as tested units for simplicity). 
However, this assumption can be strong and questionable in practice, 
challenging the interpretation of results from test-negative designs.
{\rev 
In addition, as studied in \citet{ortiz-brizuela_potential_2025}, some recent applications of the test-negative design include all tested individuals rather than restricting to those with symptoms; in such settings, individuals may seek testing for reasons other than symptoms---such as exposure to the virus or requirements for travel or work---and health-care-seeking behavior is therefore less likely to be fully controlled.}
In the following, we will consider sensitivity analysis for the violation of Assumption \ref{asmp:U_T} to assess the robustness of results from a test-negative design. 
We further 
assume that all $\pi_{zy \mid c}$s are positive to avoid the complication due to undefined odds ratios.

While Assumption \ref{asmp:U_T} may not be entirely realistic, it is still plausible that the test-negative design can 
help partially control for 
the unmeasured confounder by 
bounding its probability mass at zero. 
This motivates us to consider sensitivity analysis with an upper bound on the probability 
of a nonzero unmeasured confounder in the design.

\begin{assumption}\label{asmp:bound_u_not_0}
    For some $0\le \delta_c \le 1$, we have
    $\P(U\ne 0\mid C=c, T=1) \le \delta_c$. 
\end{assumption}

In Assumption \ref{asmp:bound_u_not_0}, $\delta_c$ is a sensitivity parameter that needs to be specified in data analysis and can depend on the value of observed covariates; Assumption \ref{asmp:U_T} is a special case with $\delta_c = 0$.
We will investigate how the inferred causal conclusions change as $\delta_c$ varies. 
If the conclusions are robust to a wide range of $\delta_c$, then the results from the test-negative design will be more credible. 
We further introduce 
\begin{align}\label{eq:pzy_u}
    p_{zy\mid u c} = \P(Z=z, Y=y \mid U = u, C=c, T=1) \quad (z=0,1; y=0,1; u=0,1)
\end{align}
to 
denote the joint probability of exposure and outcome at given levels of observed covariates and unmeasured confounder among tested units. 
Different from the identifiable probabilities $\pi_{zy\mid c}$s in \eqref{eq:o_zy}, 
the $p_{zy\mid uc}$s in \eqref{eq:pzy_u} condition further on the unmeasured confounder and are thus generally unidentifiable from the observed data without  additional assumptions. 
The theorem below gives the sharp bounds on the true causal odds ratio 
using the observed data distribution from the test-negative design. 
Let $(c)_{+} = \max\{c,0\}$ for $c\in \mathbb{R}$.

\begin{theorem}\label{thm:simple_bound_cor}
Under Assumptions \ref{asmp:U} and \ref{asmp:bound_u_not_0}, 
the causal odds ratio for units with unmeasured confounder zero is bounded by  
\begin{align*}
    \min\left\{ \frac{(\pi_{11\mid c}-\delta_c)_{+}\pi_{00\mid c}}{\pi_{10\mid c} \pi_{01\mid c}}, \  \frac{\pi_{11\mid c}(\pi_{00\mid c}-\delta_c)_{+}}{\pi_{10\mid c} \pi_{01\mid c}} \right\} 
    \le \COR_{0c} \le 
    \max\left\{ \frac{\pi_{11\mid c}\pi_{00\mid c}}{(\pi_{10\mid c}-\delta_c)_{+}\pi_{01\mid c}}, \frac{\pi_{11\mid c}\pi_{00\mid c}}{\pi_{10\mid c}(\pi_{01\mid c}-\delta_c)_{+}} \right\}, 
\end{align*}
where $\pi_{zy\mid c}$s are defined as in \eqref{eq:o_zy}. Equivalently, 
\begin{align*}
    \min\left\{ (1-\delta_c/\pi_{11\mid c})_{+}, \ (1-\delta_c/\pi_{00\mid c})_{+} \right\} 
    \le 
    {\COR_{0c}}/{\OR_c}
    \le 
    \max\left\{ 
    {1}/{(1-\delta_c/\pi_{10\mid c})_{+}}, \ {1}/{(1-\delta_c/\pi_{01\mid c})_{+}}
    \right\}.
\end{align*}
These bounds are sharp and attainable when one of the $p_{zy\mid 1c}$ approaches 1 while the other three approach 0, and $\P(U\ne 0 \mid C=c, T=1)$ is as close as possible to its upper bound $\delta_c$.
\end{theorem}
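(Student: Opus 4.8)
The plan is to reduce $\COR_0$ to an optimization over the unidentified split of the observed cell probabilities $o_{zy}$ across the two levels of $U$, and then solve that optimization explicitly. First, repeating the argument in the proof of Proposition \ref{prop:cor_perfect_test} but conditioning on the event $U=0$ rather than invoking Assumption \ref{asmp:U_T}, Assumption \ref{asmp:U} lets me rewrite $\COR_0$ as the exposure--outcome odds ratio computed from the joint law $p_{zy\mid 0}$ among tested units with $U=0$:
\begin{align*}
    \COR_0 = \frac{p_{11\mid 0}\, p_{00\mid 0}}{p_{10\mid 0}\, p_{01\mid 0}}.
\end{align*}
This converts a statement about potential outcomes into one about the (in principle) contingency table $p_{zy\mid 0}$, which is unidentified but constrained by the observed $o_{zy}$.

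Next I would introduce the mixture decomposition. Writing $q = \P(U\ne 0\mid T=1)$, the law among tested units splits as $o_{zy} = (1-q)\,p_{zy\mid 0} + q\,p_{zy\mid 1}$, so that $p_{zy\mid 0} = (o_{zy} - q\,p_{zy\mid 1})/(1-q)$. Substituting into the odds-ratio expression, the common factor $1-q$ cancels and
\begin{align*}
    \COR_0 = \frac{(o_{11}-q\,p_{11\mid 1})(o_{00}-q\,p_{00\mid 1})}{(o_{10}-q\,p_{10\mid 1})(o_{01}-q\,p_{01\mid 1})}.
\end{align*}
The feasible region is then described by $0\le q\le \delta$ (Assumption \ref{asmp:bound_u_not_0}), a probability vector $(p_{zy\mid 1})$, and the nonnegativity constraints $q\,p_{zy\mid 1}\le o_{zy}$ that keep each $p_{zy\mid 0}$ a valid probability. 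Maximizing $\COR_0$ over this set yields the upper bound and minimizing yields the lower bound.

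The optimization itself is the technical core, carried out through a short sequence of monotonicity and concavity reductions. For the upper bound, each factor is monotone in the corresponding $p_{zy\mid 1}$, so it is optimal to put $p_{11\mid 1}=p_{00\mid 1}=0$ (enlarging the numerator) and to concentrate the remaining mass on the $(1,0)$ and $(0,1)$ cells (shrinking the denominator). Writing $p_{10\mid 1}=a$ and $p_{01\mid 1}=1-a$, the denominator $(o_{10}-qa)\,(o_{01}-q(1-a))$ has $a^2$-coefficient $-q^2<0$ and is therefore concave in $a$, so its minimum over $a\in[0,1]$ is attained at an endpoint $a\in\{0,1\}$; this produces the two candidate values $o_{11}o_{00}/\{(o_{10}-q)o_{01}\}$ and $o_{11}o_{00}/\{o_{10}(o_{01}-q)\}$. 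Each is increasing in $q$, so I push $q$ to its boundary $\delta$ (subject to feasibility) and take the larger of the two candidates, giving the stated upper bound. The lower bound follows from the mirror-image argument: set $p_{10\mid 1}=p_{01\mid 1}=0$, concentrate mass on the $(1,1)$ or $(0,0)$ cell, use concavity of the numerator to reduce to the two corners, and push $q=\delta$, where now each candidate is decreasing in $q$.

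The main obstacle I expect is the bookkeeping around the feasibility constraints and the truncations $(\cdot)_{+}$. When $\delta\ge o_{10}$ (respectively $\delta\ge o_{01},o_{11},o_{00}$), the corresponding extremal configuration drives a cell of $p_{zy\mid 0}$ to zero, so the odds ratio diverges (or vanishes): the bound is $\infty$ (or $0$) and is only approached, never attained, which is exactly the ``as close as possible to its upper bound $\delta$'' caveat in the statement. I would therefore first treat the interior case $\delta<o_{zy}$, verifying that the point-mass choice of $(p_{zy\mid 1})$ together with $q=\delta$ yields a genuinely feasible joint distribution that attains the bound, and then handle the boundary case $\delta\ge o_{zy}$ as a separate limiting argument, which simultaneously certifies the sharpness claim.
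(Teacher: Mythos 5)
Your proposal is correct and takes essentially the same route as the paper's proof: reduce $\COR_0$ to the odds ratio of the hidden table $p_{zy\mid 0}$ via Assumption \ref{asmp:U}, use the mixture decomposition $o_{zy}=(1-q)p_{zy\mid 0}+q\,p_{zy\mid 1}$ with cancellation of $1-q$, extremize by pushing mass to the favorable cells with a concavity-at-endpoints argument, exploit monotonicity in $q$ to set $q=\delta$, and certify sharpness with the point-mass construction. The only cosmetic differences are that you parametrize the extremal split as $(a,1-a)$ where the paper instead substitutes $p_{00\mid 1}\le 1-p_{11\mid 1}$ and applies concavity in $p_{11\mid 1}$, and that the paper's sharpness construction with $w=\min\{\delta,o_{11}\}<1$ attains even the degenerate bound (e.g., $\COR_0=0$ when $\delta\ge o_{11}$) exactly rather than only as a limit.
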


From Theorem \ref{thm:simple_bound_cor}, the ratio between the causal and observed odds ratios depends on $\delta_c$ specified in Assumption \ref{asmp:bound_u_not_0} and the probabilities $\pi_{zy\mid c}$s in \eqref{eq:o_zy} for tested units. 
When $\delta_c = 0$, the upper and lower bounds on $\COR_{0c}$ reduce to $\OR_c$, and $\COR_{0c}$ is identifiable.
Therefore, Proposition \ref{prop:cor_perfect_test} is a special case of Theorem \ref{thm:simple_bound_cor}. 
As $\delta_c$ increases, $\COR_{0c}$ can differ from $\OR_c$.
Moreover, the sharp bounds on the causal odds ratio under our sensitivity analysis 
depend on 
the whole contingency table $\pi_{zy\mid c}$s. Consequently, studies with the same observed odds ratio can have different sensitivity to imperfections of test-negative designs; see the supplementary material for numerical illustration.

From Theorem \ref{thm:simple_bound_cor}, 
the sharp bounds of $\COR_{0c}$ are attained when the joint distribution $p_{zy\mid 1c}$s of $(Z, Y)$ for tested units at unmeasured confounding level $U=1$ place all the mass at one cell of the $2\times 2$ contingency table, with $\Pr(U=1\mid C=c, T=1)$ being as close to $\delta_c$ as possible. 
The corresponding values of $p_{zy\mid 0c}$s for tested units at unmeasured confounding level $U=0$ can be derived accordingly based on the $\pi_{zy\mid c}$s. 
In Theorem \ref{thm:simple_bound_cor}, 
we use $p_{zy\mid 1c}$s to denote the scenarios in which the sharp bounds are attained, since it is more convenient and intuitive.

\begin{remark}\label{rmk:sharp}
    In Theorem \ref{thm:simple_bound_cor} and the following theorems, the sharpness of the bounds means that there exists a joint distribution of $(Z, Y(1), Y(0), U, C, T)$
    such that 
    (i) Assumption \ref{asmp:U} holds, 
    (ii) the sensitivity analysis constraints, such as Assumption \ref{asmp:bound_u_not_0} here, hold,  
    (iii) the implied distributions of $(Z, Y)$ conditional on $C$ and $T=1$ is coherent with the observed data distribution, 
    and 
    (iv) the corresponding causal odds ratio $\COR_{0c}$ attains the lower or upper bound. 
    We verify the sharpness 
    in the supplementary material.
\end{remark}

\section{Sensitivity analysis for the strength of unmeasured confounding}\label{sec:sen_conf_strength}

In this section, we consider an alternative sensitivity analysis based on the strength of the unmeasured confounding, measured by the confounder's association with the exposure and outcome as quantified shortly. 
Unlike Section \ref{sec:sen_partial} that is tailored for test-negative design, 
the discussion throughout this section can also be applied to more general outcome-dependent sampling such as case-control studies. 
More importantly, this alternative sensitivity analysis can be combined with that in Section \ref{sec:sen_partial}, based on which we can derive narrower bounds for the causal effect of interest, as detailed in Section  \ref{sec:integrate}.

\subsection{Sensitivity analysis based on 
unmeasured confounding
strength}\label{sec:sen_strength}

We quantify the unmeasured confounding strength in the test-negative design by the  probability ratios between the joint distributions of exposure and outcome across tested units with different confounding levels.
Specifically, recalling the definition of $p_{zy\mid uc}$s in \eqref{eq:pzy_u}, we invoke the following assumption.
\begin{assumption}\label{asmp:bound_density}
    For some $\Gamma_c \ge 1$,
    we have 
    $1/\Gamma_c \le p_{zy\mid 1c}/p_{zy\mid 0c} \le \Gamma_c$ for all $z,y\in \{0,1\}$. 
\end{assumption}

In Assumption \ref{asmp:bound_density}, $\Gamma_c$ is a sensitivity parameter that needs to be specified in data analysis and can depend on the value of the observed covariates.
The form of Assumption \ref{asmp:bound_density} on unmeasured confounding is 
reminiscent of
Rosenbaum's sensitivity analysis model for matched observational studies \citep{Rosenbaum02a}, although their meanings are quite different.
When $\Gamma_c=1$, Assumption \ref{asmp:bound_density} 
implies that $(Z, Y) \ind U \mid C=c, T=1$, 
under which the causal odds ratio 
can be identified by the observed odds ratio, 
as shown in the following proposition. 
\begin{proposition}\label{prop:iden_no_conf_strength}
    Under Assumptions \ref{asmp:U} and \ref{asmp:bound_density} with $\Gamma_c = 1$, $\COR_{0c}=\OR_c$.
\end{proposition}

When $\Gamma_c$ becomes larger, 
the strength of unmeasured confounder increases, and there will be more uncertainty, or equivalently wider bounds,  on the underlying true causal odds ratio. 
Below we bound the true causal odds ratio $\COR_{0c}$ based on the observed data distribution $\pi_{zy\mid c}$s in \eqref{eq:o_zy} from the test-negative design and the constraint on the unmeasured confounding strength from Assumption  \ref{asmp:bound_density}. 

\begin{theorem}\label{thm:cor_bound_density}
Under Assumptions \ref{asmp:U} and \ref{asmp:bound_density}, 
the $p_{zy\mid 0c}$ in \eqref{eq:pzy_u} is bounded
by $l_{zy\mid c} \le p_{zy\mid 0c} \le u_{zy\mid c}$, where
\begin{align}\label{eq:lu}
    l_{zy\mid c} = \frac{\pi_{zy\mid c}}{\Gamma_c}
    \quad \text{and} \quad
    u_{zy\mid c} = \min \{\pi_{zy\mid c} \Gamma_c, \  1 \},  
    \quad \ \ 
    (z, y \in \{0,1\}).
\end{align} 
\begin{enumerate}[label={(\roman*)}, topsep=1ex,itemsep=-0.3ex,partopsep=1ex,parsep=1ex]
\item If $l_{11\mid c}+l_{00\mid c}+u_{01\mid c} + u_{10\mid c}\geq 1$, 
then the 
sharp lower bound
on $\COR_{0c}$
is $\underline{\COR}_{0c} = q_{11} q_{00} / (q_{10} q_{01})$, with
$q_{11} = l_{11\mid c}$, $q_{00} = l_{00\mid c}$,
\[
q_{10} = \min\!\left\{
  \max\!\left\{
    l_{10\mid c},\,
    1 - l_{11\mid c} - l_{00\mid c} - u_{01\mid c},\,
    \tfrac{1 - l_{11\mid c} - l_{00\mid c}}{2}
  \right\},
  \, u_{10\mid c},\,
  1 - l_{11\mid c} - l_{00\mid c} - l_{01\mid c}
\right\}, 
\]
and $q_{01} = 1 - q_{11} - q_{00} - q_{10}$; 
otherwise,
the 
sharp lower bound
on $\COR_{0c}$ is 
$\underline{\COR}_{0c} = \min\{{\COR}_{0c}^{(1)}, {\COR}_{0c}^{(2)}\}$, where 
${\COR}_{0c}^{(j)} = q_{11}^{(j)} q_{00}^{(j)} / (q_{10}^{(j)} q_{01}^{(j)})$ with 
$q_{10}^{(j)} = u_{10\mid c}$, $q_{01}^{(j)} = u_{01\mid c}$, 
\begin{align*}
    q_{11}^{(j)} =  
    \begin{cases}
       \max\left\{ l_{11\mid c}, 1-u_{10\mid c}-u_{01\mid c}-u_{00\mid c} \right\}, & \text{ if } j=1, \\
       \min\left\{ u_{11\mid c}, 1-u_{10\mid c}-u_{01\mid c}-l_{00\mid c} \right\},  & \text{ if } j=2, 
    \end{cases}
\end{align*}
and $q_{00}^{(j)} = 1 -q_{10}^{(j)} - q_{01}^{(j)} - q_{11}^{(j)}$. 

\item 
 The 
sharp upper bound
on $\COR_{0c}$, denoted by $\overline{\COR}_{0c}$,
is equal to the reciprocal of 
$\underline{\COR}_{0c}$ calculated using relabeled exposure.
\end{enumerate}

\end{theorem}

Below we explain the rationale for Theorem \ref{thm:cor_bound_density}. 
As verified in the supplementary material,
Assumption \ref{asmp:bound_density} 
imposes constraints on the joint probabilities of exposure and outcome $p_{zy\mid 0c}$s for tested units at unmeasured confounding level $U=0$, as shown in \eqref{eq:lu}. 
We can verify that 
the causal odds ratio $\COR_{0c}$ equals the odds ratio from the joint probabilities $p_{zy\mid 0c}$s, 
and the bounds on $\COR_{0c}$ depend on the observed data distribution $\pi_{zy\mid c}$s and the sensitivity parameter $\Gamma_c$ only through the bounds $l_{zy}$s and $u_{zy}$s in \eqref{eq:lu}. 
We can further show that the minimum of the causal odds ratio $\COR_{0c} = p_{11\mid 0c} p_{00 \mid 0}/(p_{10\mid 0c} p_{01 \mid 0})$ under the constraint in \eqref{eq:lu} must be obtained when either $(p_{11\mid 0c}, p_{00 \mid 0})$ in the numerator take their smallest possible values $(l_{11\mid c}, l_{00\mid c})$ or  
$(p_{10\mid 0c}, p_{01 \mid 0})$ in the denominator take their largest possible values $(u_{10\mid c}, u_{01\mid c})$, which correspond to the two cases in Theorem \ref{thm:cor_bound_density}. 

As a side remark, the bounds from Theorem \ref{thm:cor_bound_density} also work for the true causal odds ratio $\COR_{1c}$ at unmeasured confounding level $U=1$, due to the symmetry of the sensitivity analysis constraint in Assumption \ref{asmp:bound_density}. 
In general, the test-negative design is believed to, at least partially, control for the unmeasured confounding; for example, as discussed before, units in the design may mostly have positive health-care-seeking behavior. Therefore, 
under a test-negative design, 
different levels of unmeasured confounding is generally not symmetric, and we will mainly focus on the causal odds ratio at 
the unmeasured confounding level that 
can be partially controlled by the test-negative design.

\section{A combined sensitivity analysis 
}\label{sec:integrate}

In Sections \ref{sec:sen_partial} and \ref{sec:sen_conf_strength},  we consider two distinct 
sensitivity analyses with closed-form sharp bounds on the true causal odds ratio. 
Assumptions \ref{asmp:bound_u_not_0} and \ref{asmp:bound_density} for the two sensitivity analyses in  Sections \ref{sec:sen_partial} and \ref{sec:sen_conf_strength} have quite different meanings.  
Assumption \ref{asmp:bound_u_not_0} comes from the violation of Assumption \ref{asmp:U_T} regarding how the choice of testing is associated the unmeasured confounding, while Assumption \ref{asmp:bound_density} focuses on the 
strength of the unmeasured confounding in terms of its impact on the exposure and outcome.
Below we consider a combined sensitivity analysis incorporating both Assumptions \ref{asmp:bound_u_not_0} and \ref{asmp:bound_density}.

\begin{theorem}\label{thm:cor_bound_additional}
Under Assumptions \ref{asmp:U}, \ref{asmp:bound_u_not_0} and \ref{asmp:bound_density},  
the $p_{zy\mid 0c}$ in \eqref{eq:pzy_u} is bounded by $\tilde{l}_{zy\mid c} \le p_{zy\mid 0c} \le \tilde{u}_{zy\mid c}$ for $z,y=0,1$, where 
\begin{align}\label{eq:lu_delta_Gamma}
    \tilde{l}_{zy\mid c} = \max \left\{ \frac{\pi_{zy\mid c}}{\delta_c\Gamma_c+(1-\delta_c)},  \  \frac{\pi_{zy\mid c}-\delta_c}{1-\delta_c} \right\}, 
    \quad \ \ 
    \tilde{u}_{zy\mid c} = \min \left\{\frac{\pi_{zy\mid c} \Gamma_c }{\delta_c+(1-\delta_c)\Gamma_c}, \  1 \right\}, 
\end{align}
and $(\pi_{zy\mid c}-\delta_c)/(1-\delta_c)$ is defined as $0$ when $\delta_c = 1$. 
The
sharp bounds 
on the causal odds ratio $\COR_{0c}$ are achieved 
in the same way as that in Theorem \ref{thm:cor_bound_density}, except that $l_{zy\mid c}$s and $u_{zy\mid c}$s there are replaced by $\tilde{l}_{zy\mid c}$s and $\tilde{u}_{zy\mid c}$s in \eqref{eq:lu_delta_Gamma}. 
\end{theorem}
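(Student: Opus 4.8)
The plan is to reduce the combined problem to the single optimization already solved in Theorem \ref{thm:cor_bound_density} by establishing that the feasible set of the vector $(p_{zy\mid 0})_{z,y}$ is again a box intersected with the probability simplex, now with corners $\tilde l_{zy}$ and $\tilde u_{zy}$. The starting point is the mixture identity for binary $U$: writing $\pi = \P(U=0\mid T=1)$, the definitions in \eqref{eq:o_zy} and \eqref{eq:pzy_u} give $o_{zy} = \pi\,p_{zy\mid 0} + (1-\pi)\,p_{zy\mid 1}$ for every cell $(z,y)$. Assumption \ref{asmp:bound_u_not_0} constrains $\pi \in [1-\delta, 1]$, while Assumption \ref{asmp:bound_density} says $p_{zy\mid 1} = r_{zy}\,p_{zy\mid 0}$ with $r_{zy} \in [1/\Gamma, \Gamma]$.

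First I would derive the cellwise bounds in \eqref{eq:lu_delta_Gamma} as \emph{necessary} constraints. Solving the mixture identity gives $p_{zy\mid 0} = o_{zy}/[\pi + (1-\pi) r_{zy}]$, so extremizing $p_{zy\mid 0}$ amounts to extremizing the denominator over $\pi \in [1-\delta, 1]$ and $r_{zy} \in [1/\Gamma, \Gamma]$, subject to $0 \le p_{zy\mid 1} \le 1$. Minimizing the denominator (take $r_{zy} = 1/\Gamma$, $\pi = 1-\delta$) yields the upper bound $o_{zy}\Gamma/[\delta + (1-\delta)\Gamma]$, capped at $1$; maximizing it (take $r_{zy} = \Gamma$, $\pi = 1-\delta$) yields the first candidate lower bound $o_{zy}/[\delta\Gamma + (1-\delta)]$. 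The second term $(o_{zy}-\delta)/(1-\delta)$ in $\tilde l_{zy}$ appears precisely when the choice $r_{zy} = \Gamma$ would force $p_{zy\mid 1} > 1$; in that regime the binding constraint is $p_{zy\mid 1} = 1$ with $\pi = 1-\delta$, which gives this value. A short comparison shows the two candidate lower bounds agree at the switching point $o_{zy}\Gamma = \delta\Gamma + (1-\delta)$ and that the larger of the two is always correct, justifying the $\max$ in \eqref{eq:lu_delta_Gamma}.

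The crux is sharpness: I must show every $(p_{zy\mid 0})$ in the box $\tilde l_{zy} \le p_{zy\mid 0} \le \tilde u_{zy}$ with $\sum_{zy} p_{zy\mid 0} = 1$ is \emph{jointly} realizable, even though a single $\pi$ must serve all four cells simultaneously. The key observation is that the uniform choice $\pi = 1-\delta$ works for the entire box at once (taking $0<\delta<1$; the boundary cases follow the conventions in the statement). Setting $p_{zy\mid 1} = [o_{zy} - (1-\delta)p_{zy\mid 0}]/\delta$, the four requirements $p_{zy\mid 1} \ge 0$, $p_{zy\mid 1} \le 1$, $p_{zy\mid 1} \le \Gamma p_{zy\mid 0}$, and $p_{zy\mid 1} \ge p_{zy\mid 0}/\Gamma$ each rearrange to exactly one of the four box inequalities defining $[\tilde l_{zy}, \tilde u_{zy}]$, and $\sum_{zy} p_{zy\mid 1} = 1$ follows from $\sum_{zy} o_{zy} = 1$. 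Hence the box together with the simplex is exactly the feasible region for $(p_{zy\mid 0})$, and $\P(U\ne 0\mid T=1) = \delta$ meets Assumption \ref{asmp:bound_u_not_0}. I expect this decoupling to be the main obstacle, since a priori the shared $\pi$ could couple the cells; the payoff is that it does not.

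Finally, by Proposition \ref{prop:iden_no_conf_strength} the causal odds ratio equals $\COR_0 = p_{11\mid 0}\,p_{00\mid 0}/(p_{10\mid 0}\,p_{01\mid 0})$, a function of $(p_{zy\mid 0})$ alone. Since its feasible set is a box-and-simplex region of the identical form treated in Theorem \ref{thm:cor_bound_density}, the very same optimization, and hence the same closed-form minimizer, applies verbatim with $l_{zy}, u_{zy}$ replaced by $\tilde l_{zy}, \tilde u_{zy}$, giving part (i). For part (ii), relabeling the exposure $z \mapsto 1-z$ swaps the numerator and denominator cells, turning $\COR_0$ into $1/\COR_0$ while mapping the feasible box to the box of the relabeled problem; therefore the sharp upper bound is the reciprocal of the sharp lower bound computed after relabeling, as claimed.
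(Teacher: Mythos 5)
Your proposal is correct and takes essentially the same route as the paper: your mixture identity, the cellwise necessity bounds, and the decoupling observation that the single choice $\pi = 1-\delta$ (i.e., $w=\delta$) realizes the whole box jointly are exactly the content of Lemmas \ref{lemma:cond_w_gamma_lu} and \ref{lemma:cond_delta_gamma_lu}, and invoking Theorem \ref{thm:cor_bound_density}'s box-and-simplex optimization verbatim corresponds to Lemma \ref{lemma:optim}. The only substantive things the paper adds are (a) the explicit check that $\sum_{z,y}\tilde{l}_{zy}\le 1$ and $\sum_{z,y}\tilde{u}_{zy}\ge 1$ — the hypotheses of Lemma \ref{lemma:optim}, which constitute the paper's entire displayed proof of this theorem but which follow in one line in your setup since $\tilde{l}_{zy}\le o_{zy}\le \tilde{u}_{zy}$ places $o_{**}$ in the box — and (b) the realizability of any coherent specification of $(w, p_{**\mid 0}, p_{**\mid 1})$ by an actual data-generating process satisfying Assumption \ref{asmp:U} (Lemma \ref{lemma:useful}), which your sharpness claim uses implicitly; also note that your stated four-to-four correspondence is slightly off, since $p_{zy\mid 1}\ge 0$ rearranges to $p_{zy\mid 0}\le o_{zy}/(1-\delta)$ (implied by the ratio constraint) while the cap $\tilde{u}_{zy}\le 1$ instead comes from $p_{zy\mid 0}$ being a probability — a trivially fixable slip, not a gap.
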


The proof of Theorem \ref{thm:cor_bound_additional} follows by the same logic as that for Theorem \ref{thm:cor_bound_density}, 
because
Assumptions \ref{asmp:bound_u_not_0} and \ref{asmp:bound_density} impose the bounds in \eqref{eq:lu_delta_Gamma} on the $p_{zy\mid 0c}$s. 
From Propositions \ref{prop:cor_perfect_test} and \ref{prop:iden_no_conf_strength}, when either $\delta_c = 0$ or $\Gamma_c = 1$, the bounds on the causal odds ratio $\COR_{0c}$ from our sensitivity analysis in Theorem \ref{thm:cor_bound_additional} will shrink to a point, which is the observed odds ratio $\OR_c$. When $\delta_c$ and $\Gamma_c$ increase, $\COR_{0c}$ differs from $\OR_c$.
We defer the discussion on specification of these sensitivity parameters to Section \ref{sec:specify}.

Both Theorems \ref{thm:simple_bound_cor} and \ref{thm:cor_bound_density} are special cases of the combined sensitivity analysis in 
Theorem \ref{thm:cor_bound_additional}, with either $\Gamma_c=\infty$ or $\delta_c = 1$. 
Below we explain how the combined sensitivity analysis improves the previous individual ones. 
First, the bounds in Theorem \ref{thm:simple_bound_cor} are achieved at 
very extreme and probably unrealistic scenarios in which the joint distribution of the exposure and outcome for tested units with confounding level $U=1$  places all the mass at one point. Such extreme scenarios will be avoided once we have Assumption \ref{asmp:bound_density}. 
Second, with the partial control of unmeasured confounding under Assumption \ref{asmp:bound_u_not_0}, 
we have narrower bounds on $p_{zy\mid 0c}$s 
compared to that 
in Theorem \ref{thm:cor_bound_density}. 
Specifically, 
for every $z,y\in \{0,1\}$, 
the bound $[\tilde{l}_{zy\mid c}, \tilde{u}_{zy\mid c}]$ for $p_{zy\mid 0c}$ in \eqref{eq:lu_delta_Gamma} becomes tighter as $\delta_c$ decreases.

\section{Extensions to categorical exposure and outcome with a general unmeasured confounder}\label{sec:categorical}

Sections \ref{sec:framework}--\ref{sec:integrate} focused on sensitivity analysis with binary unmeasured confounder, outcome and exposure. 
We now 
generalize our methods
to allow categorical values for the exposure and outcome, and general values for the unmeasured confounder, 
as detailed below.
In summary, the theory below demonstrates that the sensitivity analysis methods for binary exposure, outcome and unmeasured confounder in Sections \ref{sec:framework}--\ref{sec:integrate} are still applicable for 
this more general setting, 
but require adjustments in constraints and interpretation to accommodate both the categorical nature of the exposure and outcome and the general values of the unmeasured confounder. 
Therefore, the theory below greatly enhances the applicability of the proposed sensitivity analysis.

In real-world scenarios, exposures and outcomes often have multiple levels. For instance, in the context of Covid-19, the exposure could be different vaccines or various dosages of the same vaccine, and the outcome could be different levels of severity, such as mild, moderate, severe, and critical illness
\citep{COVID19Guidelines}. 
Besides, 
allowing multiple levels of the outcome is also helpful for ensuring Assumption \ref{asmp:U}(ii), 
which requires the outcome to block all direct paths from the exposure to testing. 
For example, 
similar to the discussion in Section \ref{sec:cat_connect}, 
Assumption \ref{asmp:U}(ii) is more plausible when the outcome also contains the severity of the symptoms in addition to the infection status. 
Furthermore, the unmeasured confounding may also have multiple levels, representing different levels of health-care-seeking behavior.

Suppose that the exposure \(Z\) and outcome \(Y\) take values, respectively, in \(\{0, 1, \ldots, I\}\) and \(\{0, 1, \ldots, J\}\) for some \(I, J \geq 1\), and the unmeasured confounder $U$ takes values in a general set $\mathcal{U}$. 
We consider again the causal estimand as in \eqref{eq:COR}, comparing two treatment levels in terms of two outcome values:
for $0\le l \ne l'\le I$, $0\le j \ne j'\le J$, and $u \in \mathcal{U}$, 
\begin{align}\label{eq:COR_cat}
    \COR_{uc}(l, l'; j,j') \equiv \frac{\P(Y(l)=j\mid U = u, C = c)/\P(Y(l) = j' \mid U = u, C=c)}{\P(Y(l')=j\mid U = u, C = c)/\P(Y(l') = j' \mid U = u, C=c)}. 
\end{align}
When the outcome has two levels, \eqref{eq:COR_cat} reduces to the usual causal odds ratio. 
We assume that the two treatment levels of interest are $Z=0,1$ and  the two outcome values of interest are $Y=0,1$, 
and consider the causal estimand $\COR_{uc}$ as in \eqref{eq:COR} and \eqref{eq:COR_cat}. 
This does not lose any generality since we can always relabel the exposure and outcome.

Define 
\begin{align}\label{eq:T_tilde}
    \tilde{T} = T \cdot \I(Z\in \{0,1\}) \cdot \I(Y \in \{0,1\}).
\end{align}
By \eqref{eq:T_tilde}, $\tilde{T}=1$ if and only if the unit is included in the test-negative design and the exposure and outcome values fall into the categories of interest. 
We denote the joint distribution of exposure and outcome among units with $\tilde{T}=1$ and covariate value $c$ by: 
\begin{equation}\label{eq:o_tilde}
    \tilde{\pi}_{zy\mid c} = \P(Z=z, Y=y\mid C=c, \tilde{T} = 1),
\end{equation}
which will be used to infer the casual estimand defined as in \eqref{eq:COR}.  
We define further 
\begin{align}\label{eq:p_tilde}
    \tilde{p}_{zy\mid uc} = \P(Z=z, Y=y\mid U=u, C=c, \tilde{T}=1)
\end{align}
for $z,y\in \{0,1\}$ and $u\in \mathcal{U}$, analogous to \eqref{eq:pzy_u} but conditioning on $\tilde{T}$ instead of $T$.

We now consider the sensitivity analysis. 
Without loss of generality, we still assume that $U=0\in \mathcal{U}$ is the level that the test-negative design aims to control for,
and 
focus on the causal effect $\COR_{0c}$ at level $U=0$ defined as in \eqref{eq:COR} and \eqref{eq:COR_cat}.
The following assumptions generalize Assumptions \ref{asmp:bound_u_not_0} and \ref{asmp:bound_density}, respectively, to allow for categorical exposure and outcome. 

\begin{assumption}\label{asmp:categorical_ZY_not_0}
    For some $0\le \delta_c \le 1$, we have 
   $\Pr(U\ne 0 \mid C=c, \tilde{T}=1) \le \delta_c$. 
\end{assumption}

\begin{assumption}\label{asmp:categorical_ZY_density_ratio}
     For some $\Gamma_c\ge 1$, we have $1/\Gamma_c \leq \tilde{p}_{zy\mid uc}/\tilde{p}_{zy\mid 0c} \leq \Gamma_c$ for all $z, y\in \{0,1\}$ and $u \in \mathcal{U}$.
\end{assumption}

Theorem \ref{thm:cat_ZYU} below shows that the resulting sensitivity analysis for $\COR_{0c}$ based on $\tilde{\pi}_{zy\mid c}$s in \eqref{eq:o_tilde} and Assumptions \ref{asmp:U}, \ref{asmp:categorical_ZY_not_0} and \ref{asmp:categorical_ZY_density_ratio} can be conducted in the same way as that in Theorem \ref{thm:cor_bound_additional}. 

\begin{theorem}\label{thm:cat_ZYU}
    The sharp bounds on the causal effect $\COR_{0c}$ under Assumptions \ref{asmp:U}, \ref{asmp:categorical_ZY_not_0} and \ref{asmp:categorical_ZY_density_ratio} are the same as those in Theorem \ref{thm:cor_bound_additional} under Assumptions \ref{asmp:U},
    \ref{asmp:bound_u_not_0} and \ref{asmp:bound_density}, 
    with $\pi_{zy\mid c}$s there replaced by $\tilde{\pi}_{zy\mid c}$s and the same specification for $(\delta_c, \Gamma_c)$. 
\end{theorem}

Theorem \ref{thm:cat_ZYU} has two implications. First, %
when the exposure and outcome have more than two levels, we can simply focus on the two exposure levels and two outcome values of interest, and 
analyze
the resulting $2\times 2$ contingency tables. 
The subsequent sensitivity analysis can be done in the same way as that for the binary exposure and outcome. 
Such a strategy has also been used in the real-data application in Section \ref{sec:app}.

Second, 
when Assumption \ref{asmp:categorical_ZY_not_0} holds with $\delta_c = 0$, i.e., the test-negative design perfectly controls the unmeasured confounder, 
$\COR_{0c}$ is identifiable and equals $\widetilde{\OR}_c \equiv \tilde{\pi}_{11\mid c}\tilde{\pi}_{00\mid c}/(\tilde{\pi}_{10\mid c}\tilde{\pi}_{01\mid c})$, which is the odds ratio based on  \eqref{eq:o_tilde}. 
Moreover, $\widetilde{\OR}_c$ 
is equivalent to the expression of $\OR_c$ in \eqref{eq:OR} with $Z$ and $Y$ potentially having more than two levels. 
This implies that Proposition \ref{prop:cor_perfect_test} holds more generally 
with categorical exposure and outcome. 
This further generalizes the discussion in Section \ref{sec:cat_connect}.

{\rev

Below we provide several remarks on the potential benefits of a carefully chosen categorical outcome.
First, as discussed above, Assumption \ref{asmp:U}(ii) is more likely to hold when $Y$ contains richer information.
Moreover, the definition of $Y$ and the outcome values of interest may depend on the inclusion criterion of the test-negative design. 
For example, when the design includes only units who seek testing for COVID-19 and have COVID-19-like illness symptoms leading to hospitalization lasting more than 24 hours, we may focus on two outcome levels: $Y=1$ corresponding to COVID-19 infection with illness symptoms leading to hospitalization of more than 24 hours, and $Y=0$ corresponding to no COVID-19 infection with such illness symptoms.

Second, with an appropriate choice of the two outcome values of interest, the unmeasured confounder may be better controlled, so that Assumption \ref{asmp:categorical_ZY_not_0} holds with a smaller $\delta_c$.
For example, some test-negative designs include all units who seek testing regardless of symptoms. In this setting, if we focus on the two outcome values defined by infection status together with illness symptoms (infection with illness symptoms versus no infection with illness symptoms), then conditioning on $\tilde{T}$ in \eqref{eq:T_tilde} effectively restricts the sample to units with illness symptoms. This can be viewed as modifying the selection criterion, under which unmeasured health-care-seeking behavior may be better controlled; see also \citet{ortiz-brizuela_potential_2025} for related discussion.

Third, among the two outcome values we focus on, in addition to the one whose risk is of primary interest, we may choose the other to be less affected by the exposure. If an assumption analogous to \eqref{eq:asmp_no_virus_int} holds, then the causal estimand $\COR_{0c}$ reduces to the causal risk ratio $\CRR_{0c}$ as defined in \eqref{eq:CRR}, which may be easier to interpret. 
}

\section{Sensitivity analysis with further constraints on treatment effect heterogeneity}\label{sec:sen_effect_heter}

\subsection{Binary exposure, outcome and unmeasured confounder}\label{sec:sen_effect_heter_binary}
As briefly discussed at the end of Sections \ref{sec:potential} and \ref{sec:sen_strength}, 
we focus mainly on the causal odds ratio for units with a particular unmeasured confounding level.  
Understanding the variation of treatment effects across different levels of the unmeasured confounding could be challenging
due to lack of information in the test-negative design.
Below we will consider the treatment effect heterogeneity from a different perspective. Specifically, we will instead impose constraint on the treatment effect heterogeneity across different unmeasured confounding levels, to improve the bounds on $\COR_{0c}$. 
For convenience, we first consider the case where the unmeasured confounder, exposure and outcome are all binary, and then consider the extension to general case in the next subsection. 

\begin{assumption}\label{asmp:effect_heter}
    For some $\xi_c \ge 1$, we have  
    $1/\xi_c \le \COR_{0c}/\COR_{1c} \le \xi_c$. 
\end{assumption}

In Assumption \ref{asmp:effect_heter}, $\xi_c$ is a sensitivity parameter that we need to specify in data analysis and can depend on the value of the observed covariates.
Below we give two remarks regarding Assumption \ref{asmp:effect_heter}. 
First, 
unlike Assumptions \ref{asmp:bound_u_not_0} or \ref{asmp:bound_density} under which the causal odds ratio $\COR_{0c}$ becomes identifiable when $\delta_c=0$ or $\Gamma_c=1$, 
the $\COR_{0c}$ is not identifiable even if Assumption \ref{asmp:effect_heter} holds with $\xi_c = 1$.
This is due to the well-known noncollapsibility issue of the odds ratios. 
Second, 
the bounds on $\COR_{0c}$ under Assumptions \ref{asmp:bound_u_not_0} and \ref{asmp:effect_heter} are the same as that in Theorem \ref{thm:simple_bound_cor} under Assumption \ref{asmp:bound_u_not_0} alone. 
The reason is that the information on odds ratios cannot rule out extreme configurations of contingency tables. 
Specifically, 
for any $\Lambda>0$, 
there always exists 
$p_{zy\mid 1c}$s such that the corresponding $\COR_{1c}$ is arbitrarily close to $\Lambda$ and the joint distribution places most of its mass at one point so that $\COR_{0c}$ can be arbitrarily close to its bounds in Theorem \ref{thm:simple_bound_cor}. 
For example, 
we can set 
$(p_{11\mid 1c}, p_{00\mid 1c}, p_{10\mid 1c}, p_{01\mid 1c}) = (1-\Lambda\varepsilon^2-2\varepsilon, \Lambda\varepsilon^2, \varepsilon, \varepsilon)$ 
for an arbitrarily small $\varepsilon>0$. 
Nevertheless, 
if additionally Assumption \ref{asmp:bound_density} holds with $\Gamma_c<\infty$, Assumption \ref{asmp:effect_heter} on treatment effect heterogeneity can help narrow the bounds on $\COR_{0c}$; see Section \ref{sec:app} for examples. 
This also explains why we focus on Assumptions \ref{asmp:bound_u_not_0} and \ref{asmp:bound_density} for the sensitivity analyses in Sections \ref{sec:sen_partial}--\ref{sec:integrate}.

Under Assumptions \ref{asmp:U}, \ref{asmp:bound_u_not_0}, \ref{asmp:bound_density} and \ref{asmp:effect_heter} for some $\delta_c\in [0,1]$ and $\Gamma_c, \xi_c \ge 1$,
it is challenging to derive the closed-form bounds on the causal odds ratio $\COR_{0c}$. 
Fortunately, we can 
formulate 
the sensitivity analysis of $\COR_{0c}$ as a quadratically constrained quadratic programming problem. 
Quadratic programming is in general difficult. 
Fortunately, our numerical experience shows that the quadratic programming for the sensitivity analysis of $\COR_{0c}$
can often be efficiently solved 
using some standard software such as Gurobi \citep{gurobi}.

\begin{theorem}\label{thm:sen_3_constr}
Suppose that Assumptions \ref{asmp:U}, \ref{asmp:bound_u_not_0}, \ref{asmp:bound_density} and \ref{asmp:effect_heter} hold.  
\begin{enumerate}[label={(\roman*)}, topsep=1ex,itemsep=-0.3ex,partopsep=1ex,parsep=1ex]
\item The sharp lower bound on the causal odds ratio $\COR_{0c}$ is 
the solution to the following  quadratic programming problem:  
minimize $t_1 t_3$ subject to: 
\begin{align}\label{eq:qp}
    & \pi_{zy\mid c} = p_{zy\mid 0c}(1-w) + p_{zy\mid 1c}w, 
    \\ 
    &\sum_{z,y} {p_{zy\mid 0c}} = 1,
    \quad 
    0\leq p_{zy\mid 0c}, p_{zy\mid 1c} \leq 1,
    \label{eq:qp_prob}
    \\
    & { 0\le \ } w\le \delta_c, \quad 
    r_{zy} p_{zy\mid 1c} = p_{zy\mid 0c}, \quad 
    1/\Gamma_c \leq r_{zy} \leq \Gamma_c, 
    \label{eq:qp_u0_density}
    \\
    & 
    s_{1} = r_{11}r_{00} \quad 
    s_2 = r_{10}r_{10} \quad 
    s_2s_3 = 1, \quad 
    1/\xi_c \leq s_1s_3 \leq  \xi_c,
    \label{eq:qp_heter}
    \\
    & 
    t_{1} = p_{11\mid 0c}p_{00\mid 0c}, \quad 
    t_2 = p_{01\mid 0c}p_{10\mid 0c}, \quad 
    t_2t_3 = 1.  
    \label{eq:qp_objective}
\end{align}
\item The sharp upper bound on the causal odds ratio $\COR_{0c}$ is the solution to the quadratic programming problem: maximize $t_1 t_3$ subject to the same constraints as in \eqref{eq:qp}--\eqref{eq:qp_objective}. 
\end{enumerate}
\end{theorem}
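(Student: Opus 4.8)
The plan is to reduce the sharp-bounding problem to an optimization over the unidentifiable nuisance parameters and then recognize that optimization as the stated program. First I would recall, from the proof of Proposition \ref{prop:iden_no_conf_strength}, that under Assumption \ref{asmp:U} the causal odds ratio at either confounding level equals the odds ratio of the corresponding conditional joint distribution, i.e. $\COR_u = p_{11\mid u}p_{00\mid u}/(p_{10\mid u}p_{01\mid u})$ for $u=0,1$. The only link between these unidentifiable quantities and the data is the law of total probability $o_{zy} = (1-w)p_{zy\mid 0} + w\,p_{zy\mid 1}$ with $w = \P(U=1\mid T=1)$, which is exactly \eqref{eq:qp}, together with the requirement that $p_{\cdot\mid 0}$ and $p_{\cdot\mid 1}$ be genuine probability vectors, i.e. \eqref{eq:qp_prob}. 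I would then argue attainability: any collection $(p_{zy\mid 0}, p_{zy\mid 1}, w)$ satisfying these identities and the three sensitivity constraints can be completed to a joint law of $(Z,Y,U,T)$ obeying Assumption \ref{asmp:U}, so that the sharp bounds on $\COR_0$ are precisely the extrema of $p_{11\mid 0}p_{00\mid 0}/(p_{10\mid 0}p_{01\mid 0})$ over this feasible set.

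Next I would translate each sensitivity assumption into the algebraic constraints appearing in \eqref{eq:qp_u0_density}--\eqref{eq:qp_objective}. Assumption \ref{asmp:bound_u_not_0} is simply $w\le\delta$. Assumption \ref{asmp:bound_density} is encoded by introducing $r_{zy}$ through the bilinear defining equation $r_{zy}\,p_{zy\mid 1}=p_{zy\mid 0}$, so that $r_{zy}=p_{zy\mid 0}/p_{zy\mid 1}$ and the density-ratio bound becomes the box $1/\Gamma\le r_{zy}\le\Gamma$. For Assumption \ref{asmp:effect_heter}, a direct computation gives $\COR_0/\COR_1 = (r_{11}r_{00})/(r_{10}r_{01})$; introducing $s_1=r_{11}r_{00}$, $s_2=r_{10}r_{01}$, and $s_3$ with $s_2 s_3 = 1$ renders $\COR_0/\COR_1 = s_1 s_3$, turning $1/\xi\le\COR_0/\COR_1\le\xi$ into the quadratically constrained box $1/\xi\le s_1 s_3\le\xi$. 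Finally, setting $t_1=p_{11\mid 0}p_{00\mid 0}$, $t_2=p_{01\mid 0}p_{10\mid 0}$, and $t_3$ with $t_2 t_3 = 1$ makes the objective $\COR_0 = t_1 t_3$. Each auxiliary variable is pinned down by a single quadratic equality, and since the $p_{zy\mid u}$'s, and hence $t_2$ and $s_2$, are strictly positive away from the boundary, the reciprocal variables $t_3$ and $s_3$ are well-defined; this makes the reformulation exact rather than a mere relaxation.

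Having verified that the auxiliary system faithfully reproduces both the feasible set and the objective, I would conclude that minimizing (resp. maximizing) $t_1 t_3$ subject to \eqref{eq:qp}--\eqref{eq:qp_objective} returns the sharp lower (resp. upper) bound on $\COR_0$, giving parts (i) and (ii). As an alternative route to part (ii), and consistently with Theorem \ref{thm:cor_bound_additional}(ii), the upper bound can be obtained from the lower bound under the exposure-relabeling $Z=0\leftrightarrow Z=1$, which maps $\COR_0$ to its reciprocal while preserving all three sensitivity constraints by symmetry.

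The hardest part will be the attainability argument in the first step: one must exhibit, for every feasible nuisance configuration, an explicit joint distribution of $(Z,Y,U,T)$ consistent with Assumption \ref{asmp:U} that realizes the prescribed $p_{zy\mid u}$'s and $w$, since only then are the program's extrema genuinely attained and the bounds provably unimprovable. A secondary technical point is the careful treatment of boundary configurations where some $p_{zy\mid 0}$ approaches $0$, so that $t_2$ or $s_2$ approaches $0$ and the reciprocal variable diverges; there the sharp bound is a limiting value, matching the \emph{attainable in the limit} phrasing used for the earlier theorems.
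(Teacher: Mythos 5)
Your proposal is correct and follows essentially the same route as the paper's proof: the paper likewise reduces the sharpness claim to optimizing $p_{11\mid 0}p_{00\mid 0}/(p_{10\mid 0}p_{01\mid 0})$ over all $(w, p_{zy\mid 0}, p_{zy\mid 1})$ coherent with the observed $o_{zy}$s, using Lemma \ref{lemma: COR p_xy} for the identity $\COR_u = p_{11\mid u}p_{00\mid u}/(p_{10\mid u}p_{01\mid u})$ and Lemma \ref{lemma:useful} --- an explicit construction of a joint law of $(Z, Y(z), U, T)$ satisfying Assumption \ref{asmp:U} --- for exactly the attainability step you correctly flag as the crux, before translating the three assumptions into the auxiliary variables $r_{zy}, s_j, t_j$ just as you do. Your secondary worry about boundary configurations is moot here: since every $o_{zy}>0$ and $p_{zy\mid 1}\le \Gamma\, p_{zy\mid 0}$, any feasible point satisfies $p_{zy\mid 0}\ge o_{zy}/(w\Gamma + 1 - w)>0$, so $t_2$ and $s_2$ are bounded away from zero, the feasible set is compact, and the extrema are genuinely attained rather than limiting values (and your $s_2 = r_{10}r_{01}$ is indeed the intended reading of the statement's typo $s_2 = r_{10}r_{10}$).
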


In the quadratic programming in Theorem \ref{thm:sen_3_constr}, 
we introduce $w$ to denote $\Pr(U\ne 0 \mid C=c, T=1)$, which is bounded by $\delta_c$ under Assumption \ref{asmp:bound_u_not_0}, as shown in \eqref{eq:qp_u0_density}.  
In addition, we introduce $r_{zy}$ to denote the probability ratio $p_{zy\mid 0c}/p_{zy\mid 1c}$ for $z,y \in {0,1}$,  $(t_1,t_2,t_3)$ to help represent the causal odds ratio $\COR_{0c} = p_{11\mid 0c}p_{00\mid 0c}/(p_{10\mid 0c}p_{01\mid 0c})$, 
and $(s_1,s_2,s_3)$ to help represent the ratio $\COR_{0c}/\COR_{1c}$ between the two causal odds ratios. 
We introduce them
to express both the objective and the constraints in quadratic form.
Below we briefly explain the constraints in the quadratic programming in \eqref{eq:qp}--\eqref{eq:qp_objective}. 
The constraints in \eqref{eq:qp} link the observed contingency table in \eqref{eq:o_zy} and the hidden ones in \eqref{eq:pzy_u}, where $w$ represents the conditional probability that $U$ is not zero. 
The constraints in \eqref{eq:qp_prob} put natural requirement on the probabilities $p_{zy\mid uc}$s.
The constraints in \eqref{eq:qp_u0_density} impose Assumptions \ref{asmp:bound_u_not_0} and \ref{asmp:bound_density}. The constraints in \eqref{eq:qp_heter} impose Assumption \ref{asmp:effect_heter}, because $\COR_{uc} = p_{11\mid uc} p_{00\mid uc}/(p_{10\mid uc} p_{01\mid uc})$ as discussed after Theorem \ref{thm:cor_bound_density}. 
The constraints in \eqref{eq:qp_objective} are for representing the objective function $\COR_{0c}$.

\subsection{Categorical exposure and outcome with a general unmeasured confounder}\label{sec:cat_3_constraints}

We now consider the same setting in Section \ref{sec:categorical} with categorical exposure,  categorical outcome and a general unmeasured confounder. 
As detailed below, by modifying the constraints and interpretation, 
the sensitivity analysis for the binary case in Theorem \ref{thm:sen_3_constr} also works for the general case. 
This enhances the applicability of the
proposed sensitivity analysis.

Compared with 
Assumptions \ref{asmp:bound_u_not_0} and \ref{asmp:bound_density}, 
the extension of Assumption \ref{asmp:effect_heter} to a general $U$ is more involved. 
Intuitively, a natural extension of Assumption \ref{asmp:effect_heter} is to consider bounds on the ratio between the causal effect $\COR_{uc}$s defined in \eqref{eq:COR} and \eqref{eq:COR_cat} at different levels of the unmeasured confounder. 
However, due to the noncollapsibility of the odds ratio, the resulting sensitivity analysis bounds on \(\COR_{0c}\) may vary with the number of possible values of $U$,
which complicates the sensitivity analysis and makes it harder to interpret. 
Therefore, we instead extend Assumption \ref{asmp:effect_heter} to restrict the heterogeneity of odds ratio among tested units with zero and nonzero unmeasured confounders. 
Similar to the discussion after Theorem \ref{thm:cor_bound_density}, 
the causal effect
$\COR_{uc}$ for units with unmeasured confounding $U=u$ can be equivalently written as 
$\COR_{uc} = \tilde{p}_{11\mid uc}\tilde{p}_{00\mid uc}/(\tilde{p}_{10\mid uc}\tilde{p}_{01\mid uc})$, 
recalling the definition of $\tilde{p}_{zy\mid uc}$s in \eqref{eq:p_tilde}. 
Define 
$
\tilde{p}_{zy\mid  \ne 0,c} = \P(Z=z, Y=y\mid U \ne 0, C = c, \tilde{T}=1)
$
for $z,y\in \{0,1\}$  
and \(\OR_{\ne 0, c} = \tilde{p}_{11\mid \ne 0,c} \tilde{p}_{00\mid \ne 0,c}/(\tilde{p}_{10\mid \ne 0,c} \tilde{p}_{01\mid \ne 0,c})\). They denote, respectively, the joint distribution of the exposure and outcome and the corresponding odds ratio for tested units with unmeasured confounding $U\ne 0$ and exposure and outcome falling in the categories of interest.

\begin{assumption}\label{asmp:categorial_U_effect_heter}
    For some $\xi_c\ge 1$, we have \(\xi_c^{-1} \le \OR_{\ne 0, c}/\COR_{0c}\le \xi_c\).
\end{assumption}

\begin{theorem}\label{thm:cat_ZYU_supp}
The sharp bounds on the causal effect $\COR_{0c}$ 
under  Assumptions \ref{asmp:U}, \ref{asmp:categorical_ZY_not_0}, \ref{asmp:categorical_ZY_density_ratio} and \ref{asmp:categorial_U_effect_heter}
can be obtained in the same way as those in Theorem \ref{thm:sen_3_constr} under Assumptions \ref{asmp:U},
\ref{asmp:bound_u_not_0}, \ref{asmp:bound_density} and \ref{asmp:effect_heter}, 
with $\pi_{zy\mid c}$s there replaced by $\tilde{\pi}_{zy\mid c}$s and the same specification for $(\delta_c, \Gamma_c, \xi_c)$.    
\end{theorem}

\section{Practical implementation}\label{sec:prac}

\subsection{Specification of the sensitivity parameters}\label{sec:specify}

The general sensitivity analysis in Theorem \ref{thm:sen_3_constr} involves three sensitivity parameters $(\delta_c, \Gamma_c, \xi_c)$ from Assumptions \ref{asmp:bound_u_not_0}, \ref{asmp:bound_density} and \ref{asmp:effect_heter}, which characterize different aspects of the unmeasured confounding as discussed before. 
The choice of these sensitivity parameters 
is a researcher choice and 
is important for the proposed sensitivity analysis. 
We first discuss the relation among these three sensitivity parameters. 
The parameters $\delta_c$ and $(\Gamma_c, \xi_c)$ are free in the sense that one does not restrict the other. 
Mathematically, $\delta_c$ is about the conditional distribution of $U$ given $(C, T)$, whereas $(\Gamma_c, \xi_c)$ is about the conditional distribution of $(Z,Y)$ given $(U,C, T)$.
On the contrary, Assumption \ref{asmp:bound_density} for a given $\Gamma_c$ implies Assumption \ref{asmp:effect_heter} with $\xi_c = \Gamma_c^4$. However, we may believe more restrictive bounds on the treatment effect heterogeneity, which can be useful for further narrowing the causal bounds.

Below we discuss the specification of these sensitivity parameters in practice. 
If prior or domain knowledge is available, then we can specify reasonable ranges for $(\delta_c, \Gamma_c, \xi_c)$ and investigate the possible range of the causal odds ratio. 
In general, we can explore a wide range of sensitivity parameters. Specifically, supposing that the observed odds ratio is less than 1, we can try different values of $\delta_c$, and create heatmaps to display the upper bounds of the causal odds ratio under various combinations of $(\Gamma_c, \xi_c)$. 
In addition, we can plot the boundary curve that delineates the minimum amount of unmeasured confounding, expressed in terms of $(\Gamma_c, \xi_c)$, needed to nullify the observed association (e.g., to make the upper bound of the causal odds ratio equal to $1$).
This is illustrated in the application in Section \ref{sec:app}.

We can also gain insights into plausible ranges of the sensitivity parameters through the observed covariates. Similar ideas have been used in other sensitivity analyses; see, e.g., \citet{zhang2020calibrated} and \citet{lu2023flexible}.  
Note that both sensitivity parameters $\Gamma_c$ and $\xi_c$ depend on the heterogeneity of the distribution of $(Z, Y)$ across different levels of $U$, conditional on the observed covariates and being tested (i.e., $C=c$ and $T=1$). 
Intuitively, the heterogeneity of the distribution of $(Z, Y)$ with respect to the observed covariates can provide useful guidance for calibrating and specifying the sensitivity parameters $\Gamma_c$ and $\xi_c$. 
For example, we can investigate this heterogeneity across different levels of a subset of observed covariates, conditional on the remaining observed covariates and being tested. 
In addition, when some of the observed covariates can serve as a good proxy for the unmeasured $U$, for example by capturing the health-care-seeking behavior, 
then it can also be used to calibrate the sensitivity parameter $\delta_c$. 
We illustrate such calibration in the application in Section \ref{sec:app}.

\begin{remark}
The sensitivity analyses in Theorems \ref{thm:simple_bound_cor}-- \ref{thm:cor_bound_additional}
involve less sensitivity parameters compared to the most general one in Theorem \ref{thm:sen_3_constr}. 
Equivalently, they set some parameters to their largest possible values, i.e., $\delta_c=1$, $\Gamma_c = \infty$ or $\xi_c = \infty$. 
There are trade-offs among these sensitivity analysis methods. 
Sensitivity analysis with fewer sensitivity parameters requires less effort for specifying these parameters, 
but may provide wider bounds on the causal effect of interest. 
It is worth noticing that once we set $\Gamma_c = \infty$, smaller value of $\xi_c$ is not helpful for narrowing the bounds on $\COR_{0c}$, as discussed in Section \ref{sec:sen_effect_heter_binary}.  
\end{remark}

\subsection{Confidence bounds}\label{sec:conf_bound}

The sensitivity analysis methods proposed in Sections \ref{sec:sen_partial}--\ref{sec:sen_effect_heter} all presume that the true observed frequencies $\pi_{zy\mid c}$s 
under the test-negative design are known exactly without any uncertainty. 
This, however, is rarely the case in practice. 
Suppose instead that we have confidence sets for $\pi_{zy\mid c}$ under suitable models for the conditional distribution of $(Z,Y)$ given $C=c$ and $T=1$; see the supplementary materials for examples, including fully saturated models for categorical $C$ and parametric models for general $C$.
We 
can then construct confidence bounds on the true causal odds ratio $\COR_{0c}$ using the proposed sensitivity analysis, 
taking into account the uncertainty in estimating the $\pi_{zy\mid c}$s from the observed data. 
We briefly explain the main idea here, 
and relegate the details to the supplementary material.

Specifically, we first construct a confidence set for the $\pi_{zy\mid c}$s based on the observed data for any prespecified covariate value $c$. 
In particular, the confidence set can be asymptotically valid either point-wise at a given $c$ or uniformly across all levels of $c$; see the supplementary material for details on its construction. 
We then search for the worst-case causal bounds over all possible $\pi_{zy\mid c}$s in the confidence set. 
We propose closed-form confidence bounds on $\COR_{0c}$ under Assumptions \ref{asmp:bound_u_not_0} and \ref{asmp:bound_density} for any given sensitivity parameters $(\delta_c, \Gamma_c)$. 
With additional constraint on treatment effect heterogeneity as in Assumption \ref{asmp:effect_heter}, 
we can compute the confidence bounds 
again through quadratic programming, by augmenting \eqref{eq:qp}--\eqref{eq:qp_objective} with the constraint that $\pi_{zy\mid c}$s belong to the confidence set.
We present the computational details in the supplementary material. 

\begin{remark}
Our analysis conditions on the observed covariates and thus allows us to examine effect heterogeneity across different covariate levels.
In practice, especially when covariates are continuous, it may be desirable to summarize the results into a single causal estimand.
One approach is to take a weighted average of the $\COR_{0c}$s with weights given by the marginal distribution of the covariates; the corresponding confidence bounds can be obtained by applying the same weighting to the simultaneous confidence bounds on the $\COR_{0c}$s.
Moreover, we may specify different sensitivity parameters for different covariate levels using the calibration strategy in Section~\ref{sec:specify}. Alternatively, we can adopt common sensitivity parameters across all covariate levels and examine how the resulting inference varies with these values.
\end{remark}

\section{Illustration}\label{sec:app}

\begin{figure}[htbp]
    \centering
    \includegraphics[width=1\linewidth]{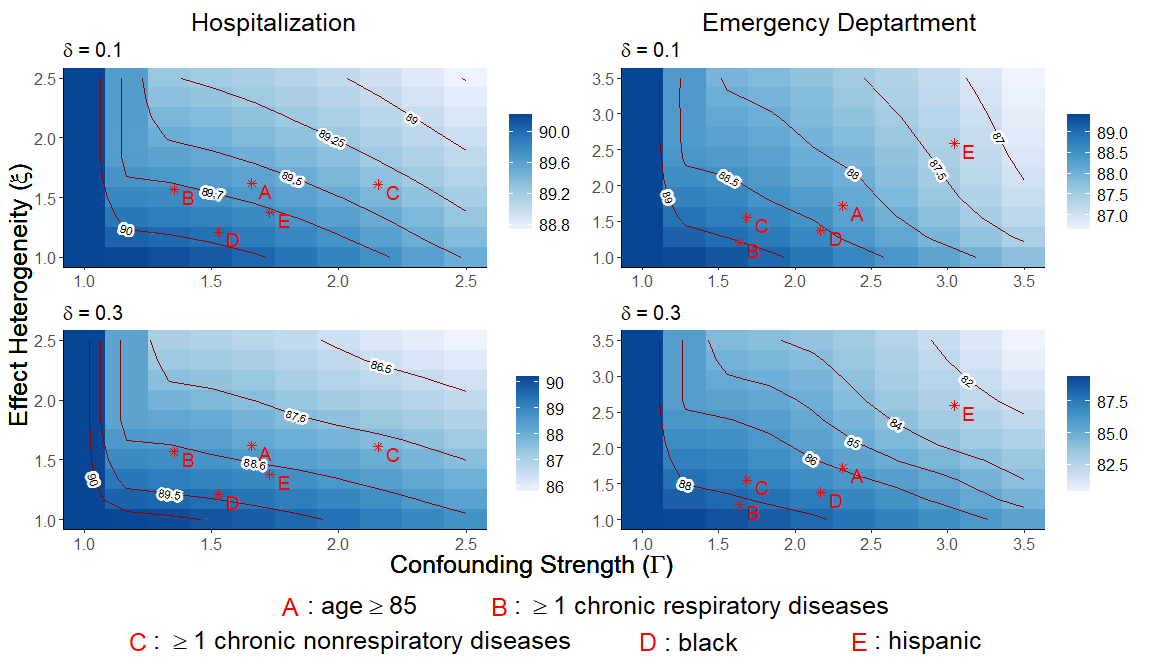}
    \caption{
    (i) Each panel presents the lower limits of the 95\% confidence bounds for the efficacy of full two-dose mRNA vaccination under various combinations of sensitivity parameters $(\delta, \Gamma, \xi)$ and two selection criteria—hospitalization and emergency department/urgent care visits. The $x$- and $y$-axes represent $\Gamma$ and $\xi$, respectively. The top (bottom) panels correspond to $\delta = 0.1$ ($\delta = 0.3$), and the left (right) panels correspond to hospitalization (emergency department/urgent care visits) as the selection criterion.  
    (ii) The brown contour lines indicate combinations of $(\Gamma, \xi)$, for fixed $\delta = 0.1$ or $0.3$, at which the lower limits of the confidence bounds attain the annotated values.  
    (iii) The five labeled points, ``A''--``E'', illustrate the heterogeneity in the joint probabilities and odds ratios of the exposure--outcome distribution across different covariate strata under the test-negative design. Specifically, ``A''--``E'' correspond, respectively, to the binary covariates: age $\geq 85$ years, presence of at least one chronic respiratory condition, presence of at least one chronic nonrespiratory condition, Black, and Hispanic.
    }
    \label{fig:heatmap}
\end{figure}

We illustrate the proposed sensitivity analysis methods using data from the study in \citet{Thompson2021} that involves adults aged 50 years or older with Covid-19–like illness. The study includes 41,552 hospital admissions 
and 21,522 emergency departments or urgent care clinics visits 
in multiple states across the US from January 1 through June 22, 2021, 
and investigates the effectiveness of three popular vaccines developed by Pfizer, Moderna, and Johnson~\&~Johnson, respectively. 
The study employed a test-negative design to estimate vaccine effectiveness by comparing the odds of testing positive for SARS-COV-2 infection between vaccinated and unvaccinated patients.

{\rev 
In the study, each medical visit was assigned an index date, defined as either the date of respiratory specimen collection associated with the most recent test result prior to the medical visit, or the date of the medical visit if testing occurred only after the admission or visit date.
For each individual, 
COVID-19 vaccination status was defined by whether vaccination occurred prior to the index date, based on data from state immunization registries, electronic health records, and claims data.
Specifically, an individual can be fully vaccinated (receiving a single dose of the Johnson \& Johnson vaccine or the second dose of the Pfizer or Moderna vaccine at least 14 days before the index date), partially vaccinated, or unvaccinated. Here we focus on comparing two levels of the exposure $Z$: fully vaccinated versus unvaccinated.

Following \citet{Thompson2021}, we consider three selection criteria, where the inclusion indicator $T=1$ if an individual is hospitalized for more than 24 hours, admitted to the ICU (a subset of hospitalization), or visits an emergency department or urgent care clinic, respectively. 
Accordingly, as discussed in Sections \ref{sec:cat_connect} and \ref{sec:categorical}, we can define the outcome $Y$ as a categorical variable encoding both vaccination status and the symptoms leading to hospitalization, ICU admission, or an emergency department or urgent care clinic visit. We focus on two levels of $Y$: fully vaccinated with the corresponding symptoms and unvaccinated with the corresponding symptoms.

The study also collected covariate information, including age, race/ethnicity, and medical conditions. However, some potential confounders, such as health-care-seeking behavior, may be unmeasured.}
In the main paper, we perform the analysis without conditioning on any observed covariates, unless otherwise stated; we will thus omit the conditioning on $c$ in this section. 
In the supplementary material, 
we also perform analysis conditioning on some observed covariates, such as age, race/ethnic group, and medical conditions. 
However, because individual-level data are not available, 
we are not able to perform analyses conditioning on all these covariates simultaneously.

We focus first on the effectiveness of full two-dose mRNA vaccination, considering either hospitalization or emergency department/urgent care visits as the selection criterion. 
When the test-negative design perfectly controls for unmeasured confounding due to, for example, health-care-seeking behavior, vaccine efficacy, defined as 
$1-\COR_0$ in our notation, 
can be estimated by $1$ minus the observed odds ratio (i.e., $1-\OR$). Under the two selection criteria, these estimates are $92.02\%$ and $91.80\%$, with corresponding 95\% confidence intervals $(91.01\%,93.02\%)$ and $(90.54\%,93.05\%)$. Both suggest a significant and substantial protective effect of vaccination against COVID-19 infection.  

Nevertheless, the validity of the test-negative design may be questioned because it might not fully control for the unmeasured health-care-seeking behavior. To address this concern, we conduct the proposed sensitivity analysis. Figure \ref{fig:heatmap} presents the lower limits of the 95\% confidence bounds on vaccine efficacy (i.e., $1-\COR_0$) under a range of specifications for the sensitivity parameter  $(\delta, \Gamma, \xi)$, following the framework in Section \ref{sec:sen_effect_heter}. In the heatmap, the intensity of the blue shading represents the lower confidence limits under varying sensitivity parameter values, while the brown contour lines depict combinations of $(\Gamma, \xi)$ at fixed $\delta$ such that the lower limit reaches the annotated values. 
It is worth noting that, as discussed in Section \ref{sec:specify}, 
the effect heterogeneity in Assumption \ref{asmp:effect_heter} is bounded between $1/\min\{\Gamma^4, \xi\}$ and $\min\{\Gamma^4, \xi\}$, implying that $\xi$ plays no role when $\Gamma$ is small.
From Figure \ref{fig:heatmap}, under both selection criteria, vaccination remains highly effective across a broad range of sensitivity parameter values.  

To further inform the choice of sensitivity parameters, we apply the strategy in Section \ref{sec:specify}. Specifically, \citet{Thompson2021} reported aggregate data for subgroups in the test-negative design defined by five covariates, separately: age $\geq 85$ years, presence of at least one chronic respiratory condition, presence of at least one chronic nonrespiratory condition, Black, and Hispanic. We compute the heterogeneity in joint probabilities and odds ratios of the exposure–outcome distribution across the two levels of each covariate, and plot these benchmarks in Figure \ref{fig:heatmap} to calibrate the sensitivity parameters $(\Gamma, \xi)$, which have analogous meanings but for the unmeasured confounder. These values serve as useful references for gauging the potential influence of the unmeasured confounding. For instance, if the influence of unmeasured confounding does not exceed that of the strongest observed covariate, and the test-negative design controls confounding for at least $1-0.3=70\%$ of individuals, then vaccine efficacy remains at least about $88\%$ and $82\%$ under the two selection criteria, respectively.  

The observed data also provide guidance for specifying $\delta$. The proportions of individuals with at least one chronic respiratory condition are $65.97\%$ and $33.78\%$ under the two selection criteria, both likely exceeding the prevalence in the general U.S. population. For example, the prevalence of chronic obstructive pulmonary disease among adults aged 50 years or older was approximately $9.88\%$ in 2020 \citep{Meng2024}. Intuitively, individuals with chronic respiratory conditions are more likely to have positive health-care-seeking behavior, particularly for respiratory infectious diseases like COVID-19. If we posit that all such individuals exhibit positive health-care-seeking behavior (i.e., $U=0$ in our notation), then $\delta$ is at most approximately $0.3$ when hospitalization is used as the selection criterion, which motivates our choice of $\delta$ in Figure~\ref{fig:heatmap}.

\begin{figure}[htb]
    \centering
    \includegraphics[width=1\linewidth]{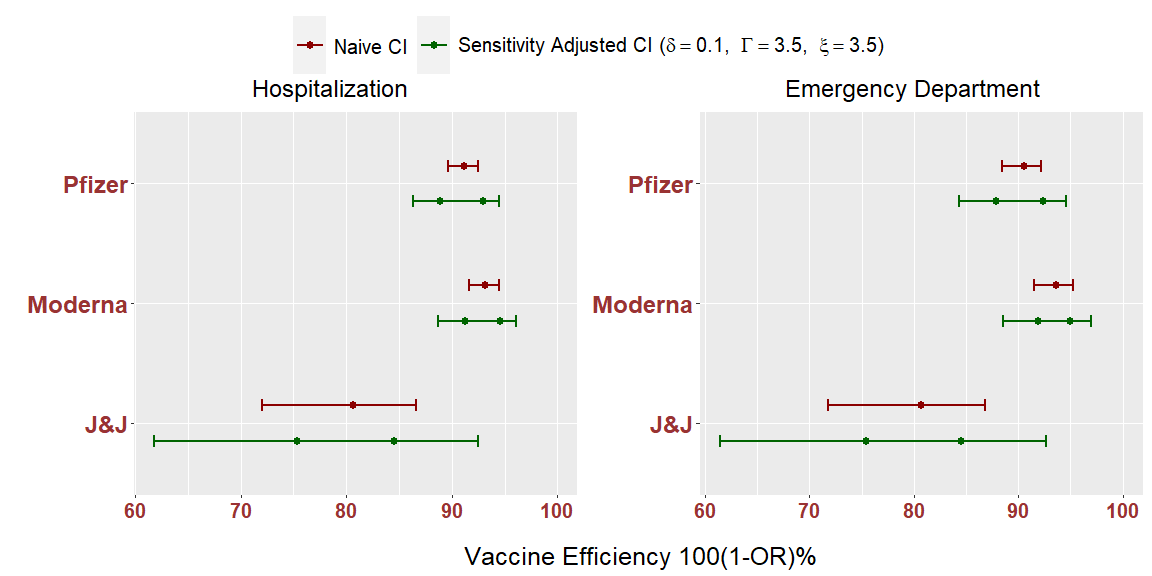}
    \caption{The two panels show analysis results from test-negative designs using two selection criteria: hospitalization and emergency department/urgent care visits, respectively.
Specifically, they display the vaccine efficacy estimates derived from one minus the observed odds ratio (brown points) and the corresponding $95\%$ confidence intervals (brown intervals), assuming perfect test-negative designs.
They also present the causal bounds (green points) and the corresponding $95\%$ confidence-level causal bounds (green intervals) accounting for sampling uncertainty, under the proposed sensitivity analysis with $(\delta, \Gamma, \xi) = (0.1, 3.5, 3.5)$.}
    \label{fig:interval_each_vaccine}
\end{figure}

We next evaluate the effectiveness of full vaccination for each of the three vaccines developed by Pfizer, Moderna, and Johnson \& Johnson. The sensitivity parameters are set to $(\delta, \Gamma, \xi) = (0.1, 3.5, 3.5)$, guided by the calibration from observed covariates described earlier. Figure \ref{fig:interval_each_vaccine} displays the observed odds ratios with their associated confidence intervals under the assumption of a perfect test-negative design, along with the estimated sensitivity bounds and confidence bounds from our sensitivity analysis, under the two selection criteria. The results show that the estimated effects of the three vaccines are highly consistent across the two selection criteria, both in terms of the observed odds ratios and their robustness to unmeasured confounding. This consistency provides stronger evidence for the high efficacy of all three vaccines.

Moreover, the mRNA-based Pfizer and Moderna vaccines demonstrate stronger efficacy than the adenovirus vector-based Johnson \& Johnson vaccine, with significantly different observed odds ratios in the test-negative design. However, this apparent difference can be sensitive to imperfections in the design. In particular, with $95\%$ confidence level and $(\delta, \Gamma, \xi) = (0.1, 3.5, 3.5)$, the true causal efficacy of the Johnson \& Johnson vaccine could range from 
$62\%$
to 
$92\%$
and from 
$61\%$ 
to 
$93\%$
under the two selection criteria, respectively. This wider range is partly due to the smaller sample size for the Johnson \& Johnson vaccine at the time of the study.

\section{Conclusion}\label{sec:conclusion}
 
We proposed sensitivity analysis methods for the test-negative design, which has been used routinely in vaccine studies. 
We focused on the causal odds ratio for units with positive health-care-seeking behavior, 
and considered three sensitivity analysis constraints on the unmeasured confounding: 
(i) the proportion of units with negative health-care-seeking behavior in the design, 
(ii) the difference in the exposure-outcome distributions for units with different health-care-seeking behaviors,
and 
(iii) the causal effect heterogeneity between units with different health-care-seeking behaviors. 
The proposed sensitivity analysis accommodates categorical exposures and outcomes as well as general unmeasured confounders, and it provides confidence bounds that account for sampling uncertainty.

\bibliographystyle{abbrvnat}
\bibliography{paper-ref}

\newpage

\setcounter{equation}{0}
\setcounter{section}{0}
\setcounter{figure}{0}
\setcounter{example}{0}
\setcounter{proposition}{0}
\setcounter{corollary}{0}
\setcounter{theorem}{0}
\setcounter{lemma}{0}
\setcounter{table}{0}
\setcounter{condition}{0}
\setcounter{page}{1}
\setcounter{assumption}{0}
\begin{center}
	\bf \LARGE 
	Supplementary Material 
\end{center}

\renewcommand {\theproposition} {A\arabic{proposition}}
\renewcommand {\theexample} {A\arabic{example}}
\renewcommand {\thefigure} {A\arabic{figure}}
\renewcommand {\thetable} {A\arabic{table}}
\renewcommand {\theequation} {A\arabic{equation}}
\renewcommand {\thelemma} {A\arabic{lemma}}
\renewcommand {\thesection} {A\arabic{section}}
\renewcommand {\thetheorem} {A\arabic{theorem}}
\renewcommand {\thecorollary} {A\arabic{corollary}}
\renewcommand {\thecondition} {A\arabic{condition}}
\renewcommand{\theassumption}{A\arabic{assumption}}

\renewcommand {\thepage} {A\arabic{page}}

\doublespacing

\allowdisplaybreaks

\startcontents[sections]
\printcontents[sections]{l}{1}{\setcounter{tocdepth}{2}}

\section{An alternative identification strategy}\label{sec:alternate identification}

In Section \ref{sec:comp_ident}, we discuss the identification strategy for the causal risk ratio in \citet{Jackson2013} and compare it with the identification strategy for the causal odds ratio. 
Here we discuss the alternative identification strategy for the causal risk ratio in \citet{yu2023test}. 
This alternative identification strategy also requires Assumptions \ref{asmp:U}(i) and \ref{asmp:U_T}, which requires that $U$ is the unmeasured confounder between exposure and outcome and that the test-negative design contains only units with positive health-care-seeking behavior, respectively. 
Moreover, we now view $Y$ as a categorical variable 
with at least 
two levels, 
and focus on the two levels of $Y$ defined in the following way: 
$Y=1$ represents individuals with both symptom and infection, 
and $Y=0$ represents individuals with symptom but without infection; see also the related discussion in Sections \ref{sec:cat_connect} and  \ref{sec:categorical} of the main paper. 
Note that, if a test-negative design contains only symptomatic units, then the outcomes of all units in the design must belong in these two levels. 
Let $I$ denote the binary indicator for infection and $S$ denote the binary indicator for symptom. 
We then have $Y=1$ if and only if $(I, S) = (1,1)$, 
and $Y=0$ if and only if $(I,S)=(0,1)$. 
\citet{yu2023test} are interested in 
    \begin{align*}
        \CRR_{0,c}
        & \equiv 
        \frac{\P(Y(1) = 1 \mid U=0,C=c)}{\P(Y(0) = 1 \mid U=0,C=c)}
        = 
        \frac{\P(I(1)=1, S(1) = 1 \mid U=0,C=c)}{\P(I(0)=1, S(0) = 1 \mid U=0,C=c)}. 
    \end{align*}
They 
imposed the following assumption.
\begin{assumption}\label{asmp:yu2023_asmp}
\begin{enumerate}[label={(\roman*)}, topsep=1ex,itemsep=-0.3ex,partopsep=1ex,parsep=1ex]
    \item  $I \ind T \mid Z, S = 1, U=0, C=c$.
    \item $\P(I=0, S=1 \mid Z=0, U=0, C=c) = \P(I=0, S=1 \mid Z=1, U=0, C=c).$
\end{enumerate}
\end{assumption}

Below we give some intuition for Assumption \ref{asmp:yu2023_asmp}. 
Assumption \ref{asmp:yu2023_asmp}(i) assumes that the infection and the choice of testing is conditionally independent given vaccination status and that the individual has symptom, positive health-care-seeking behavior and observed covariates $C=c$. 
Assumption \ref{asmp:yu2023_asmp}(ii) assumes that given positive health-care-seeking behavior and observed covariates, 
whether an individual is having symptom but not infected with the target virus, or equivalently having symptom due to other viruses, is conditionally independent of the vaccination status.

We will show in the proof of Proposition A1 that
the identification result in \citet{yu2023test} still holds when we replace Assumption \ref{asmp:yu2023_asmp}(i) by the following 
Assumption \ref{asmp:yu2023_alter_asmp}.
It 
is almost the same as Assumption \ref{asmp:U}(ii), noting that $Y$ has been redefined and contains the information in both $I$ and $S$. 
\begin{assumption}\label{asmp:yu2023_alter_asmp}
    $Z \ind T \mid I, S = 1, U=0, C=c$. 
\end{assumption}

Figure \ref{fig:dag_BW} shows a causal diagram to help understanding this setting and the assumptions. Again, the diagram is only for illustration purpose, and the assumptions are stated in terms of potential outcomes and conditional independence. 
Figure \ref{fig:dag_BW}(a) is helpful for understanding Assumption  \ref{asmp:yu2023_asmp}(i). It is similar to \citet[][Figure 1]{yu2023test}, with three slight differences discussed below. 
First, \citet{yu2023test} introduced another variable to denote the reason for testing and considered only units who test for the disease of interest due to symptoms in their identification. 
Figure \ref{fig:dag_BW}(a) can accommodate this consideration by redefining $T=1$ as for units who test due to symptoms. 
Second, under Assumption \ref{asmp:U_T}, $T=1$ implies that $U=0$. Moreover, with the redefined $T$, $T=1$ also implies that $S=1$. Thus, similar to \citet[][Figure 1]{yu2023test}, we can put boxes around $T=1$, $U=0$ and $S=1$. 
Third, we have the direct path from $U$ to $S$, which is excluded in \citet[][Figure 1]{yu2023test}. 
Figure \ref{fig:dag_BW}(b) is helpful for understanding Assumption \ref{asmp:yu2023_alter_asmp}. It is almost the same as Figure \ref{fig:dag_BW}(a), except that $Z$ has no direct edge to $T$ and $I$ can have a direct edge to $T$.

\begin{figure}[htb]
	\begin{subfigure}[htbp]{0.5\textwidth}
		\centering
		\includegraphics[width=0.8\textwidth]{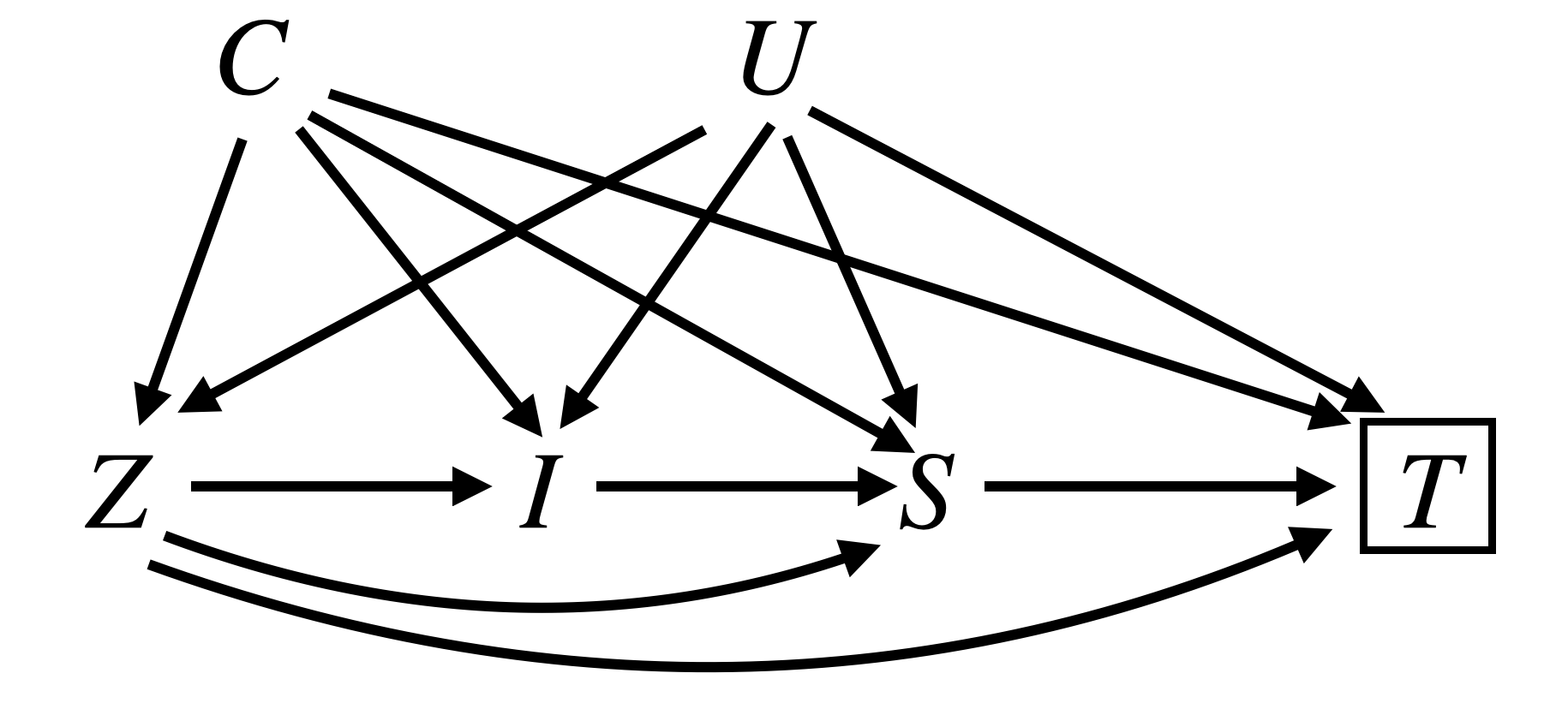}
		\caption{Assumption \ref{asmp:yu2023_asmp}(i)}
	\end{subfigure}%
	\begin{subfigure}[htbp]{0.5\textwidth}
		\centering
		\includegraphics[width=0.8\textwidth]{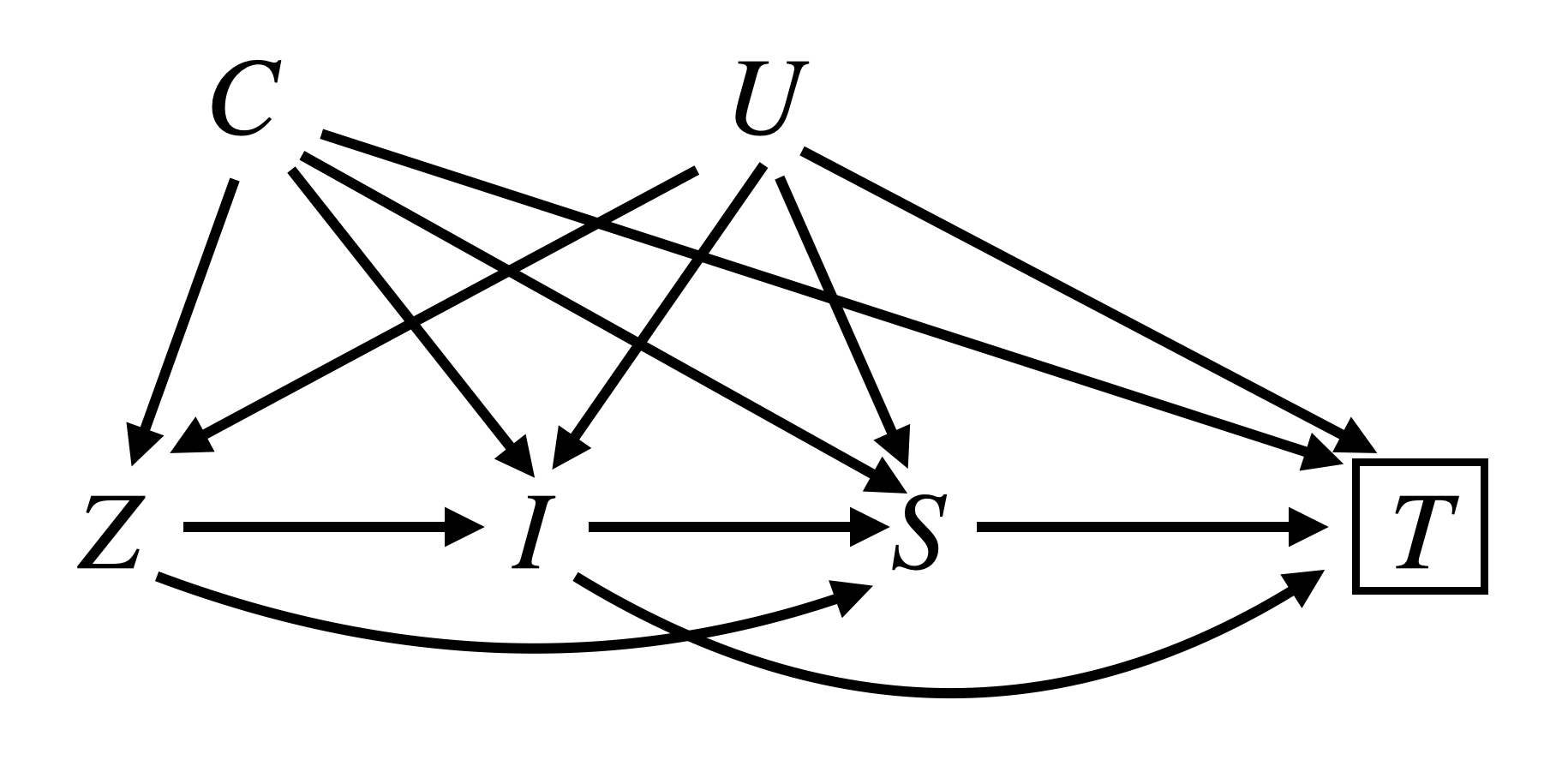}
		\caption{Assumption \ref{asmp:yu2023_alter_asmp}}
	\end{subfigure}
	\caption{
		Causal diagrams for the test-negative design. 
		$Z$ denotes the exposure, $I$ denotes the infection, $S$ denotes the symptom, $C$ denotes the observed covariates, $U$ denotes the unmeasured confounder, 
		and $T$ denotes the testing indicator. 
		We put a box around $T$ since the design involves selection on $T=1$. 
	}\label{fig:dag_BW}
\end{figure}

The proposition below shows the identification of the causal risk ratio for getting symptomatic infection. 
The identification result in (b) is similar to that discussed in Section \ref{sec:cat_connect}.
We give the proof below. 
\begin{proposition}\label{prop:Yu_iden}
	\begin{enumerate}[label={(\alph*)}, topsep=1ex,itemsep=-0.3ex,partopsep=1ex,parsep=1ex]
		\item Under Assumptions \ref{asmp:U}(i), \ref{asmp:U_T} and \ref{asmp:yu2023_asmp}, 
		we have $\CRR_{0,c} = \OR_c$. 
		
		\item     Under Assumptions \ref{asmp:U}(i), \ref{asmp:U_T}, \ref{asmp:yu2023_asmp}(ii)
        and \ref{asmp:yu2023_alter_asmp},
        we have  
		$\CRR_{0,c} = \OR_c$. 
\end{enumerate}

\end{proposition}

\begin{proof}[Proof of Proposition \ref{prop:Yu_iden}]
We have
\begin{align*}
    & \ \quad \OR_c\\
    & = \frac{\P(Y=1, T=1 \mid Z=1, U=0,C=c)/\P(Y = 0, T=1 \mid Z=1, U=0,C=c)}{\P(Y=1, T=1 \mid Z=0, U=0,C=c)/\P(Y = 0, T=1 \mid Z=0, U=0,C=c)}
    \\
    & \quad \ \text{(\textit{by Assumption \ref{asmp:U_T} and the same logic as the proof of Proposition \ref{prop:iden_jackson}})}
    \\
    & = \frac{\P(I=1, S = 1, T=1 \mid Z=1, U=0, C=c)/\P(I = 0, S=1, T=1 \mid Z=1, U=0, C=c)}{\P(I=1, S=1, T=1 \mid Z=0, U=0, C=c)/\P(I = 0, S=1, T=1 \mid Z=0, U=0, C=c)}\\
    & \quad \ \text{(\textit{by the definition of $Y$ in this context})}\\
    & = \frac{\P(I=1, S = 1 \mid Z=1, U=0, C=c)/\P(I = 0, S=1 \mid Z=1, U=0, C=c)}{\P(I=1, S=1 \mid Z=0, U=0, C=c)/\P(I = 0, S=1 \mid Z=0, U=0, C=c)}
    \\
    & \quad \ \text{(\textit{by Assumption \ref{asmp:yu2023_asmp}(i) or \ref{asmp:yu2023_alter_asmp}})}\\
    & = \frac{\P(I=1, S = 1 \mid Z=1, U=0, C=c)}{\P(I=1, S=1 \mid Z=0, U=0, C=c)}\\
    & \quad \ \text{(\textit{by Assumption \ref{asmp:yu2023_asmp}(ii)})}\\
    & = \frac{\P(I(1)=1, S(1) = 1 \mid U=0,C=c)}{\P(I(0)=1, S(0) = 1 \mid U=0,C=c)}\\
    & \quad \ \text{(\textit{by Assumption \ref{asmp:U}(i)})}\\
    & = \CRR_{0,c}, \\
    & \quad \ \text{(\textit{by definition})}
\end{align*}
where the reason for each equality is explained in the parentheses. 
\end{proof}

Assumption \ref{asmp:yu2023_asmp}(ii) is the key for the identification strategy in \cite{yu2023test}. 
Similar to \cite{Jackson2013}, it assumes that, among units with positive health-care-seeking behavior, 
the probability of experiencing symptoms from infections of viruses other than the one under consideration is independent of vaccination status. 
Similar to the discussion in Section \ref{sec:comp_ident}, Assumption \ref{asmp:yu2023_asmp}(ii) may also fail when there is virus interference.

{\rev 
\section{Additional numerical results}

\subsection{Varying causal bounds under fixed odds ratio}\label{sec:numerical}

\begin{figure}
     \flushleft
     \begin{subfigure}{0.4\textwidth}
         \includegraphics[width=1.2\linewidth]{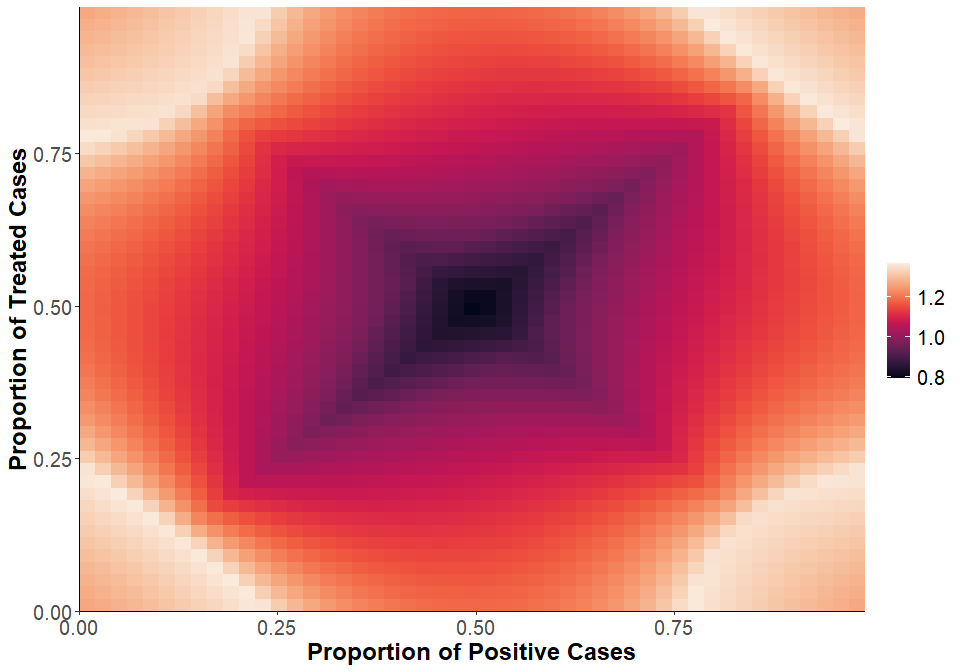}
         \caption{$(\delta,\Gamma,\xi) = (0.1, 5, \infty)$}
         \label{fig: heatmap same odds ratio xi=inf}
     \end{subfigure}
     \hspace{2.5cm}
     \begin{subfigure}{0.4\textwidth}
         \includegraphics[width=1.2\linewidth]{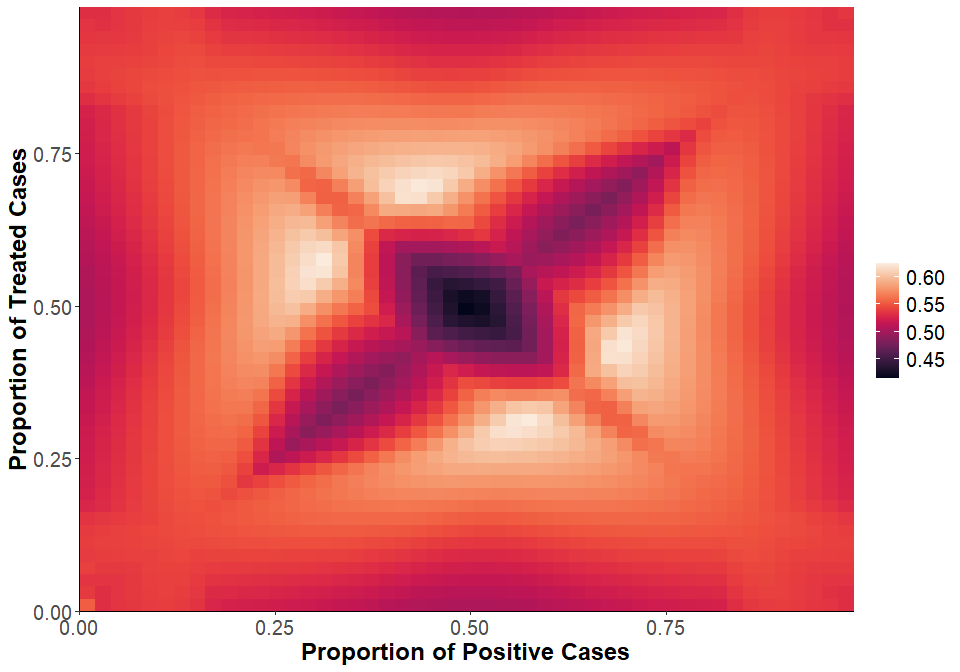}
         \caption{$(\delta,\Gamma,\xi) = (0.1, 5, 2)$}
         \label{fig: heatmap same odds ratio xi=2}
     \end{subfigure}
        \caption{\rev Each plot shows the logarithm of the difference between the upper and lower bounds for various values of \(\pi_{zy\mid c}s\), while keeping the odds ratio constant. The x-axis denotes the proportion of positive cases (\(\pi_{01\mid c}+\pi_{11\mid c}\)), and the y-axis represents the proportion of treated cases (\(\pi_{10\mid c} + \pi_{11\mid c}\)).}
        \label{fig: heatmap same odds ratio}
\end{figure}

As discussed in Sections \ref{sec:sen_U_not_0} and \ref{sec:integrate}, the causal bounds on $\COR_{0c}$ from the proposed sensitivity analysis depend on the observed data distribution 
$\pi_{zy\mid c}$s, not merely the observed odds ratio $\OR_c$ that is often used to estimate the causal effect. 
We conduct a numerical experiment to further illustrate this fact. 
We show 
that different contingency tables with the same odds ratio can lead to quite different causal bounds 
under the proposed sensitivity analysis.

Fixing the odds ratio ($\OR_c$) along with the fact that the proportions $\pi_{zy\mid c}$s should sum to $1$ gives us the following two constraints:
\begin{align*}
    \pi_{01\mid c} + \pi_{10\mid c} + \pi_{00\mid c} + \pi_{11\mid c} = 1 \quad \text{and} \quad \frac{\pi_{00}\pi_{11}}{\pi_{10}\pi_{01}} = \OR_c.
\end{align*}
Thus, when the observed odds ratio is fixed, we have two free variables in the $\pi_{zy\mid c}$s, and the resulting causal bounds could vary as the contingency table changes. This phenomenon is visually depicted in Figure \ref{fig: heatmap same odds ratio}. The x-axis represents the proportion of treated cases ($\pi_{10\mid c} + \pi_{11\mid c}$), and the y-axis represents the proportion of positive cases ($\pi_{01\mid c} + \pi_{11\mid c}$). Each point on the graph corresponds to a unique contingency table with a fixed odds ratio $\OR_c=0.5$. 
The heatmap shows the difference between the logarithms of the upper and lower bounds for each of these contingency tables. The two plots in Figure \ref{fig: heatmap same odds ratio} correspond to two distinct values of sensitivity parameters.

From Figure \ref{fig: heatmap same odds ratio}, the causal bounds 
tend to be wider near the corners of the plots, 
i.e., when most of the population are treated or when most of the population test positive.
As a side note, 
the comparison between Figures \ref{fig: heatmap same odds ratio}(a) and (b) also shows the importance of the effect heterogeneity constraint quantified by the parameter $\xi$. Given the same choice of $\delta$ and $\Gamma$, decreasing $\xi$ from $\infty$ to $2$ substantially shrinks the bounds.

\subsection{Additional results for the Covid-19 vaccines}

Figure \ref{fig:vaccine_hos} presents the lower limits of the $95\%$ confidence bounds for the efficacy of full two-dose mRNA vaccination across different values of $(\Gamma, \xi)$ at $\delta = 0.1$, using hospitalization as the selection criterion.
Each panel in Figure \ref{fig:vaccine_hos} corresponds to a specific covariate stratum, as indicated at the top of each plot.
Figure \ref{fig:vaccine_emergency} shows the analogous results using emergency department/urgent care visits as the selection criterion.
In both figures, benchmarks for the sensitivity parameters are provided in the same way as in Figure \ref{fig:heatmap}.
From both figures, vaccination shows substantial protective effects. 
If the influence of unmeasured confounding does not exceed that of the strongest observed covariate, and the test-negative design controls for confounding in at least $90\%$ of individuals, then the vaccine efficacy remains at least about $80\%$ under the two selection criteria across all covariate strata. 
The two exceptions are the strata corresponding to Black and Hispanic individuals when using emergency department/urgent care clinic visits as the selection criterion, likely due in part to their smaller sample sizes in the design.

\begin{figure}
    \centering
    \includegraphics[width=0.8\linewidth]{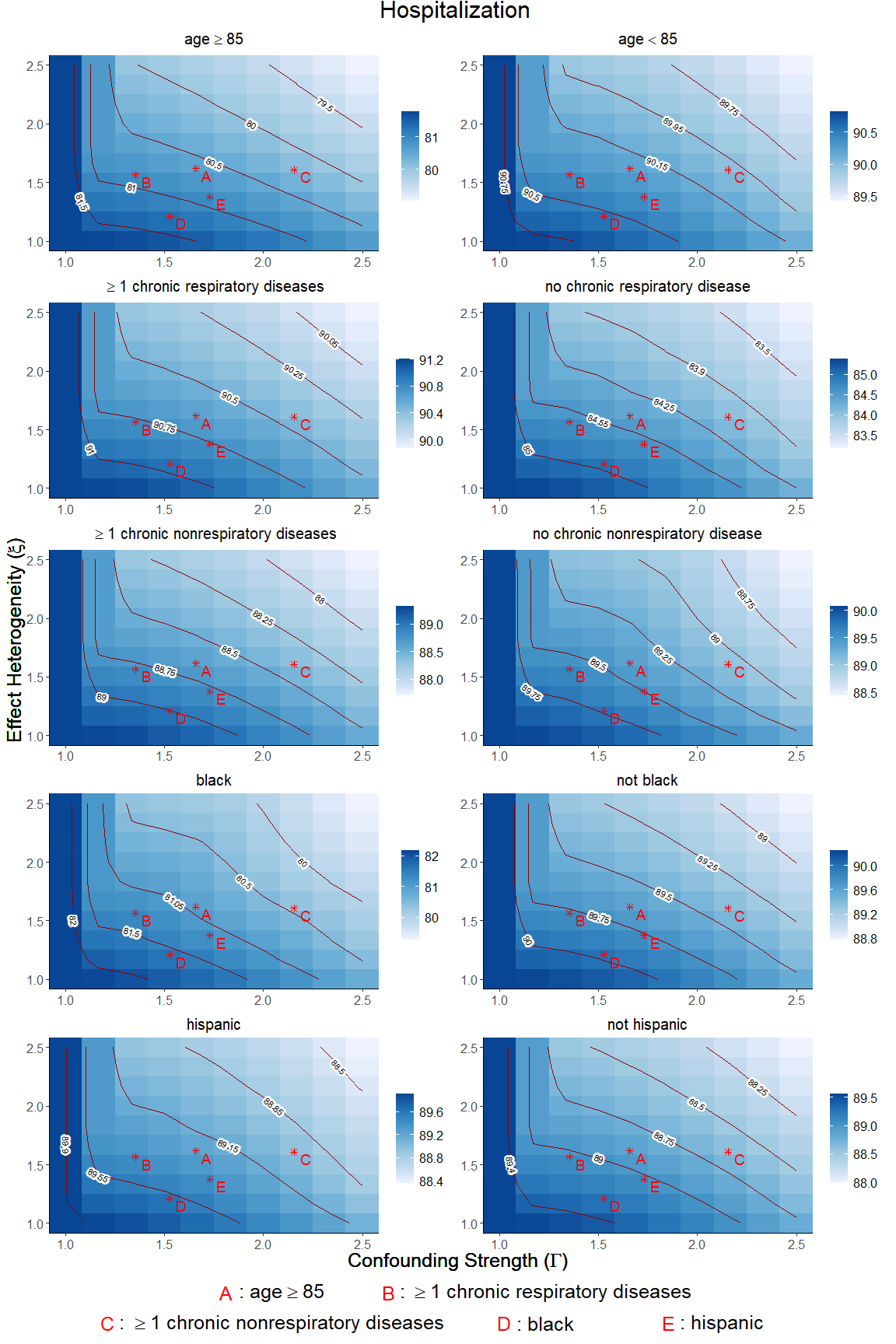}
    \caption{\rev Each panel presents analogous results as Figure \ref{fig:interval_each_vaccine} with $\delta=0.1$, but focusing on a particular covariate stratum (as indicated at the top of each plot). The points ``A''--``E'' are benchmark values for the sensitivity parameters. They are calculated using the observed covariates in the same way as in Figure \ref{fig:heatmap}. The results here all use hospitalization as the selection criterion.}
    \label{fig:vaccine_hos}
\end{figure}

\begin{figure}
    \centering
    \includegraphics[width=0.8\linewidth]{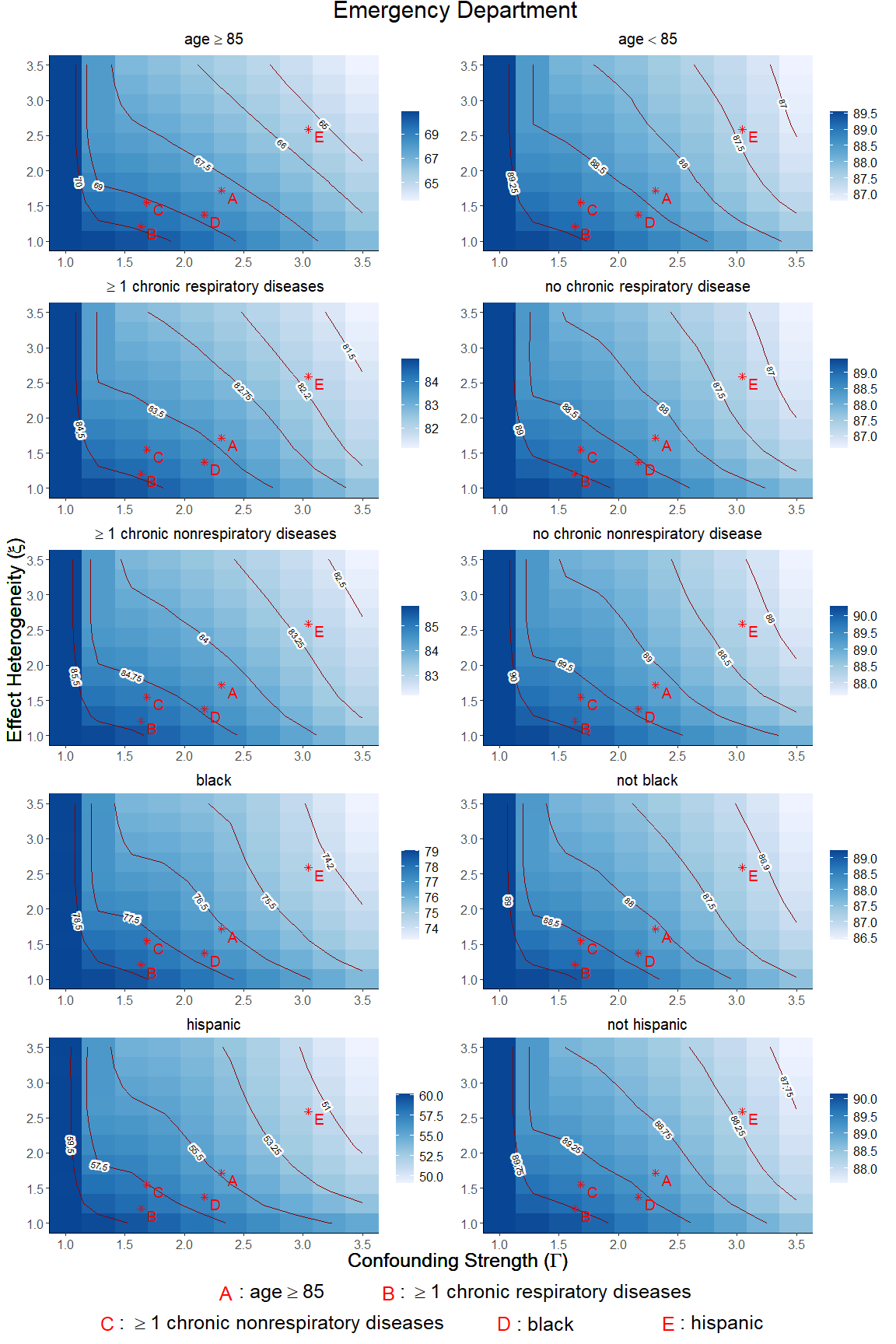}
    \caption{\rev This figure presents results analogous to those in Figure \ref{fig:vaccine_hos}, but using emergency department/urgent care visits as the selection criterion.}
    \label{fig:vaccine_emergency}
\end{figure}

}

\section{Confidence bounds}\label{sec: confidence bounds}

In this section, we study confidence bounds for the causal odds ratios under the proposed sensitivity analysis. For descriptive convenience, we focus on the case with binary exposure, outcome and unmeasured confounder. 
By the same logic as in Sections \ref{sec:categorical} and \ref{sec:cat_3_constraints}, these results can also be applied to settings with categorical exposure and outcome, and a general unmeasured confounder.

In Sections \ref{sec:cs_o} and \ref{sec:con_gen_c}, we discuss confidence sets for the true $\pi_{zy\mid c}$s with discrete and general observed covariates. 
In Section \ref{sec:closed_ci}, we construct closed-form confidence bounds for the causal odds ratio under the sensitivity analysis in Theorem \ref{thm:cor_bound_additional} with Assumptions \ref{asmp:bound_u_not_0} and \ref{asmp:bound_density}. 
In Section \ref{sec:cb_qp}, 
we construct general confidence bounds using quadratic programming under the sensitivity analysis in Theorem \ref{thm:sen_3_constr} with Assumptions \ref{asmp:bound_u_not_0}, \ref{asmp:bound_density} and \ref{asmp:effect_heter}. 
In Section \ref{sec:comp_cb}, we compare different forms of causal bounds under the proposed sensitivity analysis. 
In Section \ref{sec:num_cont}, we conduct a numerical illustration for the simultaneous causal bounds with a continuous observed covariate. 

\subsection{Confidence sets for the $\pi_{zy\mid c}$s with discrete covariates $C$}\label{sec:cs_o}

In this subsection, we consider the setting where the observed covariates $C$ are discrete and analyze the causal effects of the exposure within each level (or stratum) of $C$. Focusing on a particular level $C=c$, the observed data typically consist of counts in a $2\times 2$ contingency table defined by the two levels of exposure and outcome. These counts are commonly assumed to follow a Multinomial distribution with (unknown) probabilities $\pi_{zy\mid c}$. Specifically, let $n_c$ denote the number of units with observed covariate value $c$ under the test-negative design, and let $\hat{\pi}_{zy\mid c}$ denote the corresponding observed proportion of units with exposure $z$ and outcome $y$, for $z,y\in \{0,1 \}$.

\begin{assumption}\label{asmp:Multinomial}
    $n_c \times (\hat{\pi}_{00\mid c}, \hat{\pi}_{10\mid c}, \hat{\pi}_{01\mid c}, \hat{\pi}_{11\mid c}) \sim \textrm{Multinomial}(n_c, (\pi_{00\mid c}, \pi_{10\mid c}, \pi_{01\mid c}, \pi_{11\mid c})).$
\end{assumption}

Let $\hat{\pi}_{**\mid c}=(\hat{\pi}_{00\mid c}, \hat{\pi}_{10\mid c}, \hat{\pi}_{01\mid c}, \hat{\pi}_{11\mid c})^\top$ and $\pi_{**\mid c} = (\pi_{00\mid c}, \pi_{10\mid c}, \pi_{01\mid c}, \pi_{11\mid c})^\top$. 
From Assumption \ref{asmp:Multinomial} and by the standard central limit theorem, we have 
\begin{equation}\label{eq:clt}
\sqrt{n_c}(\hat{\pi}_{**\mid c} - \pi_{**\mid c}) \converged N\left(0, \Sigma_c \right), \quad \text{where } \  \Sigma_c = \textrm{diag}(\pi_{**\mid c}) - \pi_{**\mid c}\pi_{**\mid c}^\top.
\end{equation}
We consider the following three $1-\alpha$ asymptotic confidence sets for $\pi_{**\mid c}$: 
\begin{align}
    \mathcal{S}_{c, \alpha}^{\textsc{Q}} & = \{
    q_{**} : 
    n(\hat{\pi}_{**\mid c} - q_{**}) \hat{\Sigma}^{\dagger}_c (\hat{\pi}_{**\mid c} - q_{**}) \leq \chi^2_{3, 1-\alpha}
    \}, 
    \label{eq:C_Q}
    \\
    \mathcal{S}_{c, \alpha}^{\textsc{N}} & = \{
    q_{**}: 
    \{\hat{\pi}_{zy\mid c}(1-\hat{\pi}_{zy\mid c})\}^{-1/2}
    \sqrt{n}|\hat{\pi}_{zy\mid c} - q_{zy}| \le \hat{d}_{1-\alpha} 
    \text{ for } z,y\in\{ 0,1 \}
    \}, 
    \label{eq:C_N}
    \\
    \mathcal{S}_{c, \alpha}^{\textsc{T}} & = \{
    q_{**}: 
    \sqrt{n}| \sin^{-1}(2\hat{\pi}_{zy\mid c}-1) - \sin^{-1}(2q_{zy}-1) |
    \le \hat{d}_{1-\alpha}
    \text{ for } z,y\in\{ 0,1 \}
    \},
    \label{eq:C_T}
\end{align}
where $q_{**} = (q_{00}, q_{10}, q_{01}, q_{11})^\top$, and $\hat{\Sigma}^{\dagger}_c, \chi^2_{3, 1-\alpha}$ and $\hat{d}_{1-\alpha}$ will be defined below. 
The confidence set in \eqref{eq:C_Q} is quadratic in $\pi_{zy\mid c}$s, while the other two are linear in the $\pi_{zy\mid c}$s.  In particular, $\mathcal{C}_{c,\alpha}^{\textsc{Q}}$ is an elliptical set,  whereas $\mathcal{C}_{c,\alpha}^{\textsc{N}}$ and $\mathcal{C}_{c,\alpha}^{\textsc{T}}$ are rectangular sets. 

Below we explain these confidence sets and their asymptotic validity in detail. 
The asymptotic validity of the confidence set $\mathcal{C}_{c,\alpha}^{\textsc{Q}}$ in \eqref{eq:C_Q} follows from the asymptotic Gaussian distribution in \eqref{eq:clt}. 
Specifically, 
$\hat{\Sigma}_c$ is a consistent estimator of $\Sigma_c$ with $\pi_{zy\mid c}$s estimated by $\hat{\pi}_{zy\mid c}$s, $\hat{\Sigma}^{\dagger}_c$ denotes its pseudoinverse, 
and $\chi^2_{3, 1-\alpha}$ denotes the $1-\alpha$ quantile of the $\chi^2$ distribution with degrees of freedom \(3\). 

The asymptotic validity of the confidence sets in \eqref{eq:C_N} and \eqref{eq:C_T} follows from, respectively, the following asymptotic distributions: 
\begin{align*}
    \sqrt{n} 
    \begin{pmatrix}
        (\hat{\pi}_{00\mid c} - \pi_{00\mid c})/\sqrt{\pi_{00\mid c}(1-\pi_{00\mid c})}\\
        (\hat{\pi}_{10\mid c} - \pi_{10\mid c})/\sqrt{\pi_{10\mid c}(1-\pi_{10\mid c})}\\
        (\hat{\pi}_{01\mid c} - \pi_{01\mid c})/\sqrt{\pi_{01\mid c}(1-\pi_{01\mid c})}\\
        (\hat{\pi}_{11\mid c} - \pi_{11\mid c})/\sqrt{\pi_{11\mid c}(1-\pi_{11\mid c})}
    \end{pmatrix}
    (\hat{\pi}_{**\mid c} - \pi_{**\mid c}) \converged N\left(0, 
    \Omega_c
    \right),
\end{align*}
and 
\begin{align*}
    \sqrt{n}
    \begin{pmatrix}
        \sin^{-1}(2\hat{\pi}_{00\mid c} - 1) - \sin^{-1}(2\pi_{00\mid c} - 1)\\
        \sin^{-1}(2\hat{\pi}_{10\mid c} - 1) - \sin^{-1}(2\pi_{10\mid c} - 1)\\
        \sin^{-1}(2\hat{\pi}_{01\mid c} - 1) - \sin^{-1}(2\pi_{01\mid c} - 1)\\
        \sin^{-1}(2\hat{\pi}_{11\mid c} - 1) - \sin^{-1}(2\pi_{11\mid c} - 1)
    \end{pmatrix}
    \converged
    \mathcal{N}(0, \Omega_c), 
\end{align*}
where the common covariance matrix $\Omega_c$ has the 
diagonal elements being $1$ and the off-diagonal elements being 
$-\sqrt{\pi_{zy\mid c} \pi_{z'y'\mid c}/\{(1-\pi_{zy\mid c})(1-\pi_{z'y'\mid c})\}}$. 
Let $\hat{\Omega}_c$ be an estimator of $\Omega_c$ with $\pi_{zy\mid c}$s estimated by $\hat{\pi}_{zy\mid c}$s. 
The threshold $\hat{d}_{1-\alpha}$ is the $1-\alpha$ quantile of $\max_i |\xi_i|$, 
where $(\xi_1, \xi_2, \xi_3, \xi_4)^\top$ follows a Gaussian distribution with mean zero and covariance matrix $\hat{\Omega}_c$.

\begin{remark}
    When the causal odds ratio can be identified using the observed odds ratio, as in Propositions \ref{prop:cor_perfect_test} and \ref{prop:iden_no_conf_strength},  we can consistently estimate $\COR_{0\mid c} = \OR_{c}$ using $\widehat{\OR}_c = (\hat{\pi}_{11\mid c} \hat{\pi}_{00\mid c})/(\hat{\pi}_{10\mid c} \hat{\pi}_{01\mid c})$, which has the following asymptotic distribution: 
    \begin{align}\label{eq:or_clt}
    \sqrt{n}\{
    \log(\widehat{\OR_c}) - \log (\OR_c)
    \}
    \converged 
    \mathcal{N}\left(0, \ \sum_{z,y=0}^1 \pi_{zy\mid c}^{-1} \right).
    \end{align}
The above asymptotic distribution follows from \eqref{eq:clt} and the Delta method. 
We can then construct Wald-type confidence intervals based on \eqref{eq:or_clt}, with the asymptotic variance estimated by $\sum_{z,y=0}^1 \hat{\pi}_{zy\mid c}^{-1} $. 
\end{remark}

\begin{remark}
We can adjust the critical values in the confidence sets in \eqref{eq:C_Q}–\eqref{eq:C_T} to obtain confidence sets for $\pi_{zy\mid c}$ that are simultaneously valid across all levels of $C$, which in turn yield simultaneously valid confidence sets for $\COR_{0c}$ across all levels of $C$. Conditional on the observed covariates for all units in the design, $(\hat{o}_{00\mid c}, \hat{\pi}_{10\mid c}, \hat{\pi}_{01\mid c}, \hat{\pi}_{11\mid c})$ is mutually independent across different levels of $C$. Consequently, the $(1-\alpha)^{1/K}$ confidence set for $(\hat{o}_{00\mid c}, \hat{\pi}_{10\mid c}, \hat{\pi}_{01\mid c}, \hat{\pi}_{11\mid c})$ achieves simultaneous coverage probability $1-\alpha$, where $K$ denotes the number of levels of $C$.
\end{remark}

{\rev 
\subsection{Confidence sets for the $\pi_{zy\mid c}$s with general covariates $C$}\label{sec:con_gen_c}

In this subsection we consider general observed covariates $C\in \mathbb{R}^p$, which, for example, can include continuous covariates. Again, we will analyze the causal effects of the exposure within each level (or stratum) of $C$. To estimate or infer the conditional probabilities $\pi_{zy\mid c}$s within each level of $C$, we will impose some parametric model linking the probabilities $\pi_{zy\mid c}$s to the covariate value $c$. Specifically, we consider the following general parametric model:
\begin{align}\label{eq:gen_model_pi}
    \pi_{**\mid c} \equiv
    \left( 
        \pi_{00\mid c}, 
        \pi_{01\mid c}, 
        \pi_{10\mid c}, 
        \pi_{11\mid c}
    \right)^\top 
     = g(\beta, c)^\top, 
\end{align}
where $\beta \in \mathbb{R}^m$ denotes the unknown coefficient vector and $g$ is a map from $\mathbb{R}^m \times \mathbb{R}^p$ to $\mathbb{R}^4$. A common example is the Multinomial logistic regression with  
\begin{align}\label{eq:mnlogit}
    g(\beta,  c) 
    = 
    \left\{1 +  \exp( \tilde{c}^\top \beta_{01} ) + \exp( \tilde{c}^\top \beta_{10} ) + \exp( \tilde{c}^\top \beta_{11} ) \right\}^{-1}
    \begin{pmatrix}
    1\\
    \exp( \tilde{c}^\top \beta_{01} )\\
    \exp( \tilde{c}^\top \beta_{10} )\\
    \exp( \tilde{c}^\top \beta_{11} )
    \end{pmatrix}, 
\end{align}
where $\tilde{c} = (1, c^\top)^\top$ and $\beta = (\beta_{01}^\top, \beta_{10}^\top, \beta_{11}^\top)^\top$. 

Suppose we have an estimator $\hat{\beta}_n$ based on observed data of size $n$, obtained for example by the maximum likelihood estimation. Furthermore, we assume $\hat{\beta}_n$ follows an asymptotic Gaussian distribution and its asymptotic covariance matrix can be consistently estimated; these can often be justified by the standard asymptotic theory. 
We can then apply the Delta method to obtain the asymptotic Gaussian distribution of $\hat{\pi}_{**\mid c}$s and consistently estimate the corresponding asymptotic covariance matrix. 
Finally, we can construct confidence sets for $\pi_{**\mid c}$ in a similar way as that in Section \ref{sec:cs_o}, which can further lead to confidence bounds for the true causal effect under our sensitivity analysis. Such confidence bounds from our sensitivity analysis will be point-wise valid at any given covariate value $c$.

Below we further consider simultaneous confidence sets for the $\pi_{**\mid c}$s across all possible values of $c$, which can lead to simultaneous confidence bounds for the causal effects $\COR_{0c}$s under our sensitivity analysis. 
The following theorem constructs simultaneous confidence sets for the $\pi_{**\mid c}$s under some regularity conditions on the parametric model $g$ in \eqref{eq:gen_model_pi}. 

\begin{theorem}\label{thm:uniform_CI_model_pi}
Consider a parametric model in \eqref{eq:gen_model_pi} for the joint distribution of exposure and outcome conditional on the observed covariates. 
Suppose that 
\(\hat{\beta}_n\) is an estimator of the true parameter \(\beta_0\in \mathbb{R}^m\) based on samples of size $n$ and satisfies that 
$
\sqrt{n}(\hat{\beta}_n - \beta_0) 
\converged
\mathcal{N}(0,\Sigma)
$
for some positive definite matrix \(\Sigma\), where the asymptotic covariance matrix \(\Sigma\) can be consistently estimated by $\hat{\Sigma}_n$.
Let $\mathcal{C}$ be the set of possible covariate values, and $g_i$ denote the $i$th element of $g$ for $1\le i \le 4$.  
Assume that the following regularity conditions hold: 
\begin{itemize}
    \item[(i)] $g_i(c, \beta)$ is twice differentiable in \(\beta \in \mathbb{R}^m\) for all \(c\in \mathcal{C}\) and $1\le i \le 4$, 
    
    \item[(ii)] the nonzero singular values of $D_c \equiv \frac{\partial g(\beta, c)}{\partial \beta} \mid_{\beta = \beta_0} \in \mathbb{R}^{4\times m}$ are uniformly bounded above and below for all $c\in \mathcal{C}$,

    \item[(iii)] for all $1\le i \le 4$ and a sufficiently small $\eta > 0$ , the largest singular value of $\frac{\partial^2 g_i(\beta, c)}{\partial \beta \partial \beta^\top }$ is uniformly bounded for all $c\in \mathcal{C}$ and all $\beta \in \mathbb{R}^m$ such that $\|\beta - \beta_0\|\le \eta$. 
\end{itemize}
Then, for any \(\alpha\in (0,1)\),
\begin{align*}
    \mathcal{S}_{c, \alpha}^g 
    = 
    \left\{ 
    q_{**}\in \mathbb{R}^4: 
    \{ g(\hat{\beta}_n, c) - q_{**} \}^\top \left(D_c \hat{\Sigma}_n D_c^\top  \right)^{\dagger} \{ g(\hat{\beta}_n, c) - q_{**}\} \leq n^{-1} \chi^2_{m, 1-\alpha}
    \right\}
\end{align*}
is a $1-\alpha$ simultaneous confidence set for 
$\pi_{**\mid c} \equiv ( \pi_{00\mid c}, 
\pi_{01\mid c}, 
\pi_{10\mid c}, 
\pi_{11\mid c} )^\top$ across all $c\in \mathcal{C}$, in the sense that 
$
    \liminf_{n\rightarrow \infty} \Pr\{ \pi_{**\mid c} \in \mathcal{S}_{c, \alpha}^g \text{ for all } c \in \mathcal{C} \} \ge 1-\alpha, 
$
where 
\(\chi^2_{m, 1-\alpha}\) denotes the \((1-\alpha)\)th quantile of the chi-squared distribution with degrees of freedom $m$, and $A^\dagger$ denotes the pseudoinverse of a matrix $A$.   
\end{theorem}

The regularity conditions on $g(\beta, c)$ in Theorem \ref{thm:uniform_CI_model_pi} can be simplified when $g(\beta, c)$ represents the usual Multinomial logistic regression model as in \eqref{eq:mnlogit}. We summarize it in the following proposition. 

\begin{proposition}\label{prop:multinom_cond}
    If $g(\beta, c)$ takes the form in \eqref{eq:mnlogit}, 
    and the covariates are uniformly bounded over $\mathcal{C}$, then conditions (i)--(iii) in Theorem \ref{thm:uniform_CI_model_pi} must hold.
\end{proposition}

}

\subsection{Closed-form confidence bounds}\label{sec:closed_ci}

We consider sensitivity analysis in Theorem \ref{thm:cor_bound_additional}, which involves Assumptions \ref{asmp:bound_u_not_0} and \ref{asmp:bound_density} for the unmeasured confounding. This also covers Theorems \ref{thm:simple_bound_cor} and \ref{thm:cor_bound_density} as special cases. 

To facilitate the computation, we first consider confidence sets for the true $\pi_{zy\mid c}$s of the following rectangular form: 
\begin{align}\label{eq:conf_C}
    \mathcal{S}_{c, \alpha} = \{(q_{00}, q_{10}, q_{01}, q_{11}): \hat{\underline{\pi}}_{zy\mid c} \le q_{zy} \le \hat{\overline{\pi}}_{zy \mid c} \text{ for } z,y=0,1
    \},
\end{align}
where $0\le \hat{\underline{\pi}}_{zy\mid c}  \le \hat{\overline{\pi}}_{zy \mid c} \le 1$ for all $z,y$ and they depend on the observed data and the desired confidence level. 
For example, $\mathcal{S}_{c, \alpha}$ can be the confidence set in either \eqref{eq:C_N} or \eqref{eq:C_T}.

\begin{theorem}\label{thm:confidence_bound}
Assume Assumptions \ref{asmp:U}, \ref{asmp:bound_u_not_0},  \ref{asmp:bound_density} and \ref{asmp:Multinomial}
for some $\delta_c \in [0,1]$ and $\Gamma_c \ge 1$, 
and let $\mathcal{S}_{c, \alpha}$ in \eqref{eq:conf_C} be an asymptotic $1-\alpha$   confidence set for the true $\pi_{**\mid c} = (\pi_{00\mid c}, \pi_{10\mid c}, \pi_{01\mid c}, \pi_{11\mid c})$. 
Define 
\begin{align}\label{eq:lu_hat}
    \hat{l}_{zy \mid c} \equiv \max \left\{ \frac{\hat{\underline{\pi}}_{zy \mid c}}{\delta_c\Gamma_c+(1-\delta_c)},  \  \frac{\hat{\underline{\pi}}_{zy \mid c}-\delta_c}{1-\delta_c} \right\}, 
    \ \ 
    \hat{u}_{zy\mid c} \equiv \min \left\{\frac{\hat{\overline{\pi}}_{zy\mid c}  \Gamma_c }{\delta_c+(1-\delta_c)\Gamma_c}, \  1 \right\}, 
    \ \ 
    (z,y\in\{ 0,1 \})
\end{align}
where $\hat{\underline{\pi}}_{zy\mid c}$s and $\hat{\overline{\pi}}_{zy \mid c}$s are from the confidence set in \eqref{eq:conf_C}. 
Define the lower and upper bounds $\widehat{\underline{\COR}}_{0c}$ and  $\widehat{\overline{\COR}}_{0c}$ in the same way as in Theorem \ref{thm:cor_bound_density} but with $l_{zy\mid c}$s and $u_{zy\mid c}$s there replaced by $\hat{l}_{zy\mid c}$s and $\hat{u}_{zy\mid c}$s in \eqref{eq:lu_hat}. 
Then $[\widehat{\underline{\COR}}_{0c}, \widehat{\overline{\COR}}_{0c}]$ is an asymptotic $1-\alpha$ confidence interval for the true causal odds ratio $\COR_{0c}$. 
\end{theorem}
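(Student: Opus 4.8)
The plan is to exploit the fact that the proposed confidence bounds are exactly the sensitivity bounds of Theorem~\ref{thm:cor_bound_density} applied to an \emph{enlarged} box for the $p_{zy\mid 0}$'s, and that this enlarged box is guaranteed to contain the true feasible region whenever the confidence region $\mathcal{C}$ covers the true $o_{**}$. The whole argument then reduces to a single event inclusion, so no new probabilistic machinery beyond the stated asymptotic validity of $\mathcal{C}$ is needed.

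First I would record the elementary monotonicity facts. Viewing the maps $o_{zy}\mapsto \tilde{l}_{zy}$ and $o_{zy}\mapsto \tilde{u}_{zy}$ from \eqref{eq:lu_delta_Gamma} as functions of the single coordinate $o_{zy}$, each is nondecreasing: $\tilde{l}_{zy}$ is a maximum of two increasing affine functions (the denominators $\delta\Gamma+(1-\delta)$ and $1-\delta$ are positive for $\delta<1$, and the $\delta=1$ convention is handled separately), while $\tilde{u}_{zy}$ is a minimum of an increasing affine function and the constant $1$. Consequently, plugging the lower confidence endpoint $\hat{\underline{o}}_{zy}$ into the $\tilde{l}$ formula and the upper endpoint $\hat{\overline{o}}_{zy}$ into the $\tilde{u}$ formula---which is precisely \eqref{eq:lu_hat}---yields $\hat{l}_{zy}\le \tilde{l}_{zy}(q_{zy})$ and $\tilde{u}_{zy}(q_{zy})\le \hat{u}_{zy}$ for every $q_{**}\in\mathcal{C}$, so that $[\tilde{l}_{zy}(q_{zy}),\tilde{u}_{zy}(q_{zy})]\subseteq[\hat{l}_{zy},\hat{u}_{zy}]$ for all $q_{**}\in\mathcal{C}$.

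Next I would run the covering argument on the event $\{o_{**}\in\mathcal{C}\}$. On this event, Theorem~\ref{thm:cor_bound_additional} (with \eqref{eq:lu_delta_Gamma} evaluated at the true $o_{zy}$'s) guarantees $\tilde{l}_{zy}(o_{zy})\le p_{zy\mid 0}\le \tilde{u}_{zy}(o_{zy})$ for the true joint probabilities, and the monotonicity step upgrades this to $\hat{l}_{zy}\le p_{zy\mid 0}\le \hat{u}_{zy}$. Since the $p_{zy\mid 0}$'s form a probability vector, $\sum_{z,y}p_{zy\mid 0}=1$, so the true $(p_{zy\mid 0})$ is feasible for the optimization defining $\widehat{\underline{\COR}}_0$ and $\widehat{\overline{\COR}}_0$, namely minimizing and maximizing $p_{11\mid 0}p_{00\mid 0}/(p_{10\mid 0}p_{01\mid 0})$ over the box $[\hat{l}_{zy},\hat{u}_{zy}]$ intersected with the simplex. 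By the proof of Proposition~\ref{prop:iden_no_conf_strength} this ratio equals $\COR_0$, whence $\widehat{\underline{\COR}}_0\le\COR_0\le\widehat{\overline{\COR}}_0$, establishing the inclusion $\{o_{**}\in\mathcal{C}\}\subseteq\{\COR_0\in[\widehat{\underline{\COR}}_0,\widehat{\overline{\COR}}_0]\}$.

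Finally I would conclude by monotonicity of probability together with the asymptotic validity of $\mathcal{C}$: $\liminf_n \P(\COR_0\in[\widehat{\underline{\COR}}_0,\widehat{\overline{\COR}}_0])\ge \liminf_n \P(o_{**}\in\mathcal{C})\ge 1-\alpha$. The only real content is the monotonicity-plus-containment step; the main thing to watch is that enlarging the box to $[\hat{l}_{zy},\hat{u}_{zy}]$ never discards the simplex constraint, so that the Theorem~\ref{thm:cor_bound_density} formulas still return the min and max of the odds ratio over a region that genuinely contains the true point. Notably, sharpness of the original bounds is not needed here---only that the computed endpoints bracket the odds ratio of every feasible probability vector---so the argument goes through verbatim whether $\mathcal{C}$ is taken as the set in \eqref{eq:C_N} or in \eqref{eq:C_T}.
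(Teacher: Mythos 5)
Your proposal is correct and follows essentially the same route as the paper's proof: on the event $\{o_{**}\in\mathcal{C}\}$, monotonicity of the maps $o_{zy}\mapsto \tilde{l}_{zy}$ and $o_{zy}\mapsto \tilde{u}_{zy}$ gives $\hat{l}_{zy}\le \tilde{l}_{zy}\le p_{zy\mid 0}\le \tilde{u}_{zy}\le \hat{u}_{zy}$, and coverage then follows from the asymptotic validity of $\mathcal{C}$. The only cosmetic difference is that the paper chains through the nested feasible sets $\mathcal{B}\subseteq\widehat{\mathcal{B}}$ (comparing infima via Lemma~\ref{lemma:optim}) to get $[\underline{\COR}_0,\overline{\COR}_0]\subseteq[\widehat{\underline{\COR}}_0,\widehat{\overline{\COR}}_0]$, whereas you exhibit the true $p_{**\mid 0}$ directly as a feasible point of the enlarged box-plus-simplex region --- an equivalent one-step argument, and your observation that only validity (not sharpness) of the population bounds is used matches what the paper's proof actually relies on.
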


\begin{remark}\label{rmk:simultaneous1}
    In Theorem \ref{thm:confidence_bound} as well as the later Theorem \ref{thm:cb_sen_3_constr}, if further $\mathcal{S}_{c, \alpha}$ is a simultaneous asymptotic $1-\alpha$ confidence set for the true $\pi_{**\mid c}$ over all $c\in \mathcal{C}$, then $[\widehat{\underline{\COR}}_{0c}, \widehat{\overline{\COR}}_{0c}]$ is also a simultaneous asymptotic $1-\alpha$ confidence interval for the true causal odds ratio $\COR_{0c}$ over all $c\in \mathcal{C}$. 
\end{remark}

The confidence bounds in Theorem \ref{thm:confidence_bound} can be further improved by incorporating the constraint that the true $\pi_{zy\mid c}$s sum up to $1$.
However, the computation of the resulting confidence bounds is more demanding. Fortunately, it can often be efficiently solved through quadratic programming, as discussed in detail in the next subsection.

\subsection{Confidence bounds from quadratic programming}\label{sec:cb_qp}

We can modify the quadratic programming in \eqref{eq:qp}--\eqref{eq:qp_objective} to obtain confidence bounds for the true causal odds ratio $\COR_{0c}$ under the sensitivity analysis proposed in Theorem \ref{thm:sen_3_constr}. 

\begin{theorem}\label{thm:cb_sen_3_constr}
Suppose that Assumptions \ref{asmp:U}, \ref{asmp:bound_u_not_0}, \ref{asmp:bound_density}, \ref{asmp:effect_heter} and \ref{asmp:Multinomial} hold.  
Let $\widehat{\underline{\COR}}_{0c}$ be the solution to the following quadratic programming problem: minimize $t_1 t_3$ subject to: 
\begin{align}\label{eq:qp_cs}
    & \pi_{zy\mid c} = p_{zy\mid 0c}(1-w) + p_{zy\mid 1c}w, 
    \\ 
    &\sum {p_{zy\mid 0c}} = 1,
    \quad 
    0\leq p_{zy\mid 0c}, p_{zy\mid 1c} \leq 1, & \textup{(Probabilities)}
    \nonumber
    \\
    & { 0\le \ } w\le \delta_c, \quad 
    r_{zy} p_{zy\mid 1c} = p_{zy\mid 0c}, \quad 
    1/\Gamma \leq r_{zy} \leq \Gamma, 
    & 
    \textup{(Assumptions \ref{asmp:bound_u_not_0} and \ref{asmp:bound_density})}
    \nonumber
    \\
    & 
    s_{1} = r_{11}r_{00} \quad 
    s_2 = r_{10}r_{10} \quad 
    s_2s_3 = 1, \quad 
    1/\xi \leq s_1s_3 \leq  \xi,
    & \textup{(Assumption \ref{asmp:effect_heter})}
    \nonumber
    \\
    & 
    t_{1} = p_{11\mid 0c}p_{00\mid 0c}, \quad 
    t_2 = p_{01\mid 0c}p_{10\mid 0c}, \quad 
    t_2t_3 = 1.  
    & 
    \textup{(Objective)}
    \nonumber
    \\
    & (\pi_{00\mid c}, \pi_{10\mid c}, \pi_{01\mid c}, \pi_{11\mid c})^\top \in \mathcal{S}_{c, \alpha},
    & 
    \textup{(Confidence set)}
    \nonumber
\end{align}
and $\widehat{\overline{\COR}}_{0c}$ be the solution to the quadratic programming problem: maximize $t_1 t_3$ subject to the same constraints as in \eqref{eq:qp_cs}, 
where $\mathcal{S}_{c, \alpha}$ is an asymptotic $1-\alpha$ confidence set for the true $\pi_{zy\mid c}$s.  
Then $[\widehat{\underline{\COR}}_{0c}, \widehat{\overline{\COR}}_{0c}]$ is an asymptotically valid $1-\alpha$ confidence interval for the true causal odds ratio $\COR_{0c}$. 
\end{theorem}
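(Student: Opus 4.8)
The plan is to prove coverage through a projection argument: the program in \eqref{eq:qp_cs} optimizes jointly over the reduced-form frequencies (confined to $\mathcal{C}_\alpha^*$) and the hidden parameters, so its two optimal values sandwich $\COR_0$ whenever the true data-generating quantities form a feasible point. Equivalently, for each fixed value of $(o_{00},o_{10},o_{01},o_{11})$ the residual constraints in \eqref{eq:qp_cs} are precisely those of Theorem \ref{thm:sen_3_constr}, so $[\widehat{\underline{\COR}}_0,\widehat{\overline{\COR}}_0]$ is the union of the sharp identified intervals over all frequencies in the confidence set. I will, however, base the formal coverage argument directly on feasibility of the truth, which is cleaner and does not require the sharpness part of Theorem \ref{thm:sen_3_constr}.

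First I would define the data event $E_n=\{(o_{00},o_{10},o_{01},o_{11})^\top\in\mathcal{C}_\alpha^*\}$, where the $o_{zy}$s denote the true frequencies. By the asymptotic validity of the confidence sets in \eqref{eq:C_Q}--\eqref{eq:C_T}, established in Section \ref{sec:cs_o} from the central limit theorem \eqref{eq:clt}, we have $\liminf_n\Pr(E_n)\ge 1-\alpha$.

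Next I would exhibit the true quantities as a feasible point of \eqref{eq:qp_cs} on $E_n$. Take $w=\Pr(U\ne 0\mid T=1)$, the $p_{zy\mid u}$s as in \eqref{eq:pzy_u}, $r_{zy}=p_{zy\mid 0}/p_{zy\mid 1}$, $s_1=r_{11}r_{00}$, $s_2=r_{10}r_{01}$, $s_3=1/s_2$, $t_1=p_{11\mid 0}p_{00\mid 0}$, $t_2=p_{10\mid 0}p_{01\mid 0}$, and $t_3=1/t_2$. The first constraint is the law of total probability; $\sum_{zy}p_{zy\mid 0}=1$ and the box constraints hold by construction; Assumption \ref{asmp:bound_u_not_0} gives $w\le\delta$; Assumption \ref{asmp:bound_density} gives $1/\Gamma\le r_{zy}\le\Gamma$. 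Using the identity $\COR_u=p_{11\mid u}p_{00\mid u}/(p_{10\mid u}p_{01\mid u})$ from the proof of Proposition \ref{prop:iden_no_conf_strength} together with $p_{zy\mid 0}=r_{zy}p_{zy\mid 1}$, one obtains $s_1 s_3=r_{11}r_{00}/(r_{10}r_{01})=\COR_0/\COR_1$, so Assumption \ref{asmp:effect_heter} yields $1/\xi\le s_1 s_3\le\xi$. The objective evaluates to $t_1 t_3=p_{11\mid 0}p_{00\mid 0}/(p_{10\mid 0}p_{01\mid 0})=\COR_0$, and on $E_n$ the frequencies lie in $\mathcal{C}_\alpha^*$, so this point is feasible; since the feasible region is closed and bounded, the minimum and maximum are attained.

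Finally I would conclude. On $E_n$ the feasible region of \eqref{eq:qp_cs} contains a point whose objective equals $\COR_0$, so $\widehat{\underline{\COR}}_0\le\COR_0\le\widehat{\overline{\COR}}_0$; hence $\Pr(\COR_0\in[\widehat{\underline{\COR}}_0,\widehat{\overline{\COR}}_0])\ge\Pr(E_n)$, and taking $\liminf_n$ delivers the claimed $1-\alpha$ coverage. The only step needing care is verifying that the true structural quantities satisfy \emph{every} constraint of \eqref{eq:qp_cs} exactly---so that the truth is feasible and the objective reduces to $\COR_0$---together with confirming attainment of the program on $E_n$; both are routine given the assumptions and the odds-ratio identity from Proposition \ref{prop:iden_no_conf_strength}, so this is the main (though minor) obstacle.
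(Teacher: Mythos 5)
Your proof is correct and follows essentially the same route as the paper's: the paper likewise reduces coverage to the event $\{o_{**}\in\mathcal{C}_{\alpha}^{*}\}$ and inherits the $1-\alpha$ level from the confidence set, merely citing Theorem \ref{thm:sen_3_constr} (whose proof contains exactly your feasibility verification of the true $w$, $p_{zy\mid u}$, $r_{zy}$, $s$, $t$ via the identity $\COR_u = p_{11\mid u}p_{00\mid u}/(p_{10\mid u}p_{01\mid u})$) where you spell it out directly. One minor caveat: your claim that the feasible region is bounded is not quite right, since $t_3 = 1/t_2$ can blow up when $t_2 = p_{01\mid 0}p_{10\mid 0}$ approaches zero, but this is immaterial because the coverage argument only needs $\widehat{\underline{\COR}}_0 \le \COR_0 \le \widehat{\overline{\COR}}_0$ on the event, which holds with the bounds defined as infimum and supremum.
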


The problem in \eqref{eq:qp_cs} differs from that in \eqref{eq:qp}--\eqref{eq:qp_objective} only in the last constraint. 
As discussed in Sections \ref{sec:cs_o} and \ref{sec:con_gen_c}, the constraint in the last line of 
\eqref{eq:qp_cs} is either quadratic or linear in the $\pi_{zy\mid c}$s. 
Consequently, the problem in \eqref{eq:qp_cs} is still a quadratic programming problem.

\begin{figure}[htb]
    \centering
    \includegraphics[width=\linewidth]{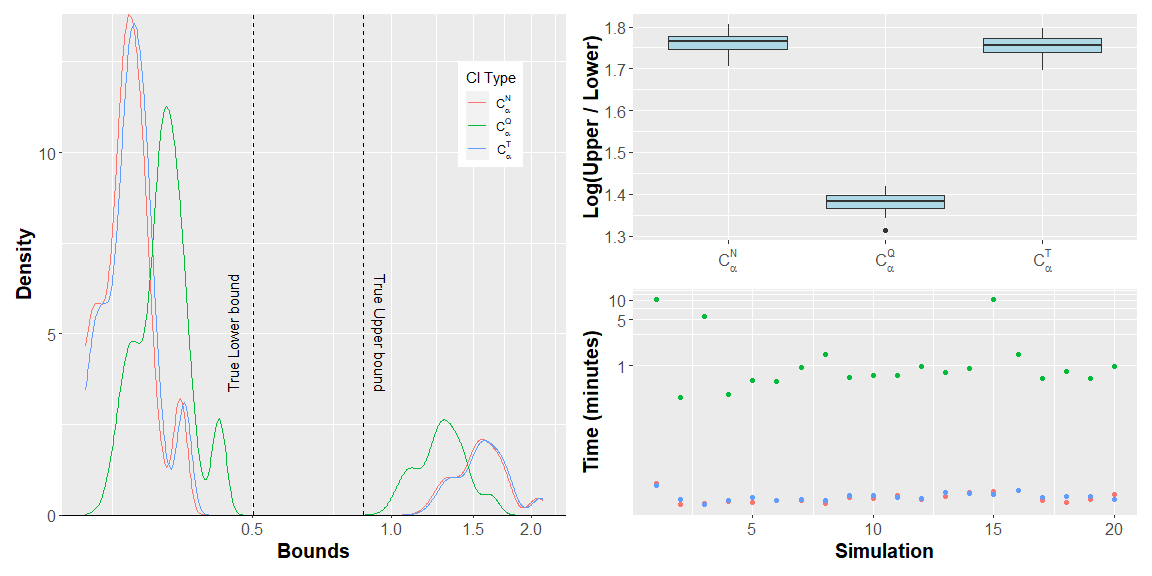}
    \caption{
    The left plot shows the distributions of upper and lower confidence bounds for a fixed \(\pi_{zy\mid c}\)s with $(\pi_{00\mid c}, \pi_{10\mid c}, \pi_{01\mid c}, \pi_{11\mid c})=(0.1, 0.2, 0.3, 0.4)$. The right plots show (i) the boxplots for 
    the difference in the logarithms of the upper and lower confidence bounds
    and (ii) the time needed to compute the bounds using the three forms of confidence sets, with randomly generated \(\pi_{zy\mid c}\)s from a Dirichlet distribution with parameters being $(1,1,1,1)$. 
    For each plot, 
    we run the simulation with 20 iterations, and set the confidence level to be $95\%$. 
    The 
    $\hat{\pi}_{zy\mid c}$s are generated as in Assumption \ref{asmp:Multinomial} with $n=1000$.     
    }
    \label{fig: CI}
\end{figure}

\subsection{Numerical comparison of confidence sets of different forms}\label{sec:comp_cb}%

We use simulation to compare the performance of the confidence bounds in Theorem \ref{thm:cb_sen_3_constr} based on the three confidence sets in \eqref{eq:C_Q}--\eqref{eq:C_T} for a particular level or stratum $c$ of discrete covariates.

We first generate \(\hat{\pi}_{zy\mid c}\)s from a Multinomial distribution as in Assumption \ref{asmp:Multinomial} with $(\pi_{00\mid c}, \pi_{10\mid c}, \pi_{01\mid c}, \pi_{11\mid c})=(0.1, 0.2, 0.3, 0.4)$ and $n_c=1000$. We compute the \(95\%\) confidence bounds using each simulated \(\hat{\pi}_{zy\mid c}\) under our sensitivity analysis with \((\delta_c, \Gamma_c, \xi_c) = (0.1, 5, 2)\). The left plot of Figure \ref{fig: CI} shows the estimated densities of the confidence lower and upper bounds under each of the three confidence sets in \eqref{eq:C_Q}--\eqref{eq:C_T}, across 20 simulations.
The confidence set \(\mathcal{C}_{c, \alpha}^{\textsc{Q}}\) tends to give the shortest confidence bounds.
This is likely due to the fact that the elliptical confidence set, compared with the rectangular ones, contains less extreme values of individual \(\pi_{zy\mid c}\)s.

We then conduct a similar simulation but with the true \(\pi_{zy\mid c}\)s generated
from the Dirichlet distribution with all parameters being \(1\).
The top right plot shows the boxplots of the difference in the logarithms of the upper and lower confidence bounds, using each of the confidence sets in \eqref{eq:C_Q}--\eqref{eq:C_T}. 
Similarly, the confidence set \(\mathcal{C}_{c, \alpha}^{\textsc{Q}}\) tends to give the shortest confidence bounds for $\COR_{0c}$. 
However, this benefit comes at the expense of higher computational cost. The bottom right plot displays the boxplots of the computation time (in seconds) using each of the three confidence sets. 
The computational time using the confidence set \(\mathcal{C}_{c, \alpha}^{\textsc{Q}}\), which imposes a quadratic constraint on the true \(\pi_{zy\mid c}\)s, is larger than that using the confidence sets \(\mathcal{C}_{c, \alpha}^{\textsc{N}}\) and \(\mathcal{C}_{c, \alpha}^{\textsc{T}}\), both of which impose linear constraints on the true \(\pi_{zy\mid c}\)s.

{\rev 
\subsection{Numerical illustration with a continuous observed covariate}\label{sec:num_cont}

We now consider simulation with a univariate continuous observed covariate $C$. 
We generate the observed covariate, exposure and outcome in the test negative design as independent and identically distributed  samples from the following model: 
 \begin{equation}\label{eq: model continuous simulation}
    C \sim \mathrm{Uniform}(0,1), 
    \quad 
    (Z,Y)\mid C = c \sim \textrm{Multinomial}(1, ({\pi}_{00\mid c}, {\pi}_{10\mid c}, {\pi}_{01\mid c}, {\pi}_{11\mid c})),
    \end{equation}
where 
\begin{equation}\label{eq: pi continuous simulation}
    \pi_{zy\mid c} 
    = 
    \frac{\exp( \tilde{c}^\top \beta_{zy})}{\sum_{z',y'\in \{0,1\}} \exp( \tilde{c}^\top  \beta_{z'y'} )}, 
    \quad 
    \text{for } z, y \in \{0,1\}
\end{equation}
and 
\begin{align}\label{eq:simu_beta_value}
\beta_{00} = (0,0)^\top, \ \ 
\beta_{10} = (0.5, 0.5)^\top, \ \ 
\beta_{01} = (1.3, -1.3)^\top, \ \ 
\beta_{11} = (-0.1, -0.3)^\top. 
\end{align}

For each simulation iteration, we generate $50000$ samples from the model in \eqref{eq: model continuous simulation}--\eqref{eq:simu_beta_value}, fit a Multinomial logistic regression model for $\pi_{zy\mid c}$s, and obtain the maximum likelihood estimate for the coefficients $\beta_{zy}$s and the estimate for its asymptotic covariance matrix. 
We then use Theorem \ref{thm:uniform_CI_model_pi} to construct simultaneous $95\%$ confidence intervals for $\pi_{zy\mid c}$s over $c\in [0,1]$, and further use them to construct simultaneous $95\%$ confidence bounds for $\COR_{0c}$s over $c\in [0,1]$ under our proposed sensitivity analysis (with sensitivity parameters $\delta_c=0.1$, $\Gamma_c=5$, $\xi_c=2$ for all $c\in [0,1]$). 
For comparison, we also use Theorem \ref{thm:uniform_CI_model_pi} to construct simultaneous confidence bounds for the observable odds ratios $\OR_{c}$s.

\begin{figure}[htb]
    \centering
    \includegraphics[width=\linewidth]{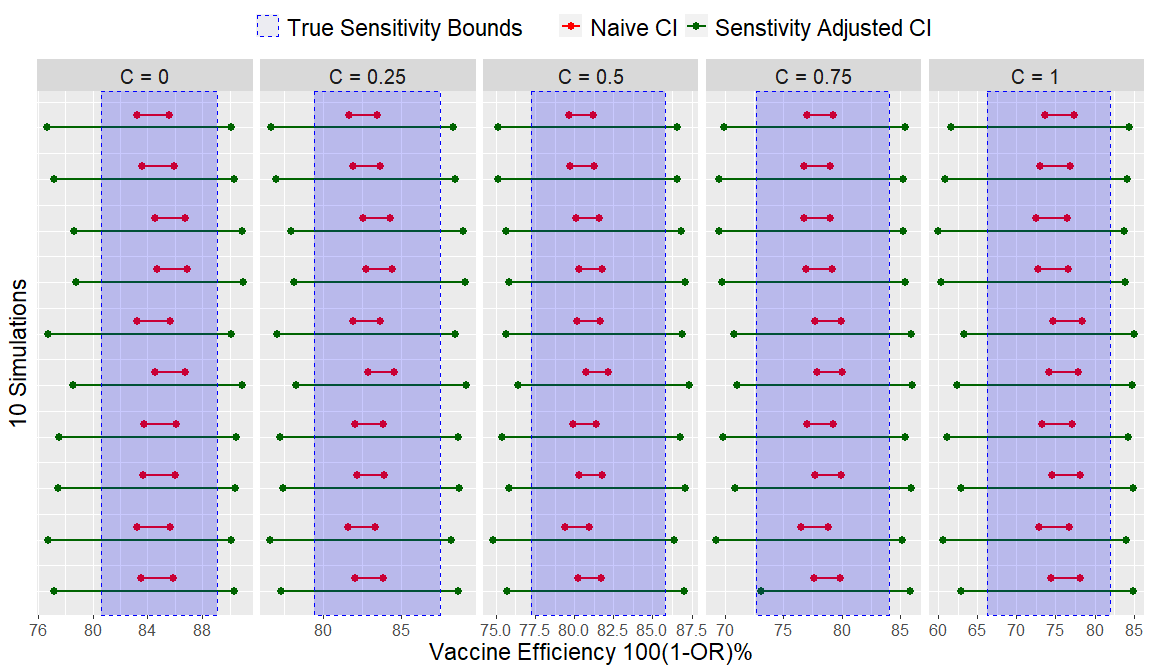}
    \caption{\rev Simultaneous confidence bounds with a continuous confounder across 10 simulation iterations. For each value of $C$, the blue shaded region denotes the true causal bounds, the red line segments show the simultaneous confidence intervals for the odds ratio, and the green line segments represent the sensitivity adjusted simultaneous confidence intervals with sensitivity parameters $\delta = 0.1$, $\Gamma = 5$, and $\xi = 2$.}
    \label{fig: continuous simulation}
\end{figure}

Figure \ref{fig: continuous simulation} shows the results across 10 simulation iterations. The blue shaded regions depict the true causal bounds under the sensitivity analysis, computed using the true values of $\pi_{zy\mid c}$. The red lines show the simultaneous confidence interval for the odds ratio, which does not cover the causal bounds. By contrast, the green lines represent the simultaneous confidence intervals under the sensitivity analysis, which successfully cover the true causal bounds across all values of $c$ shown in the figure.

}

\section{Proofs of theorems}\label{sec: proof}

For simplicity, 
we condition on the observed covariates $C=c$ implicitly throughout the proofs in this section, 
and use $a_{**}$ to denote $(a_{00}, a_{01}, a_{10}, a_{11})$.

\subsection{A useful lemma}

Below we present Lemma \ref{lemma:useful}, which is useful for proving the sharpness of our sensitivity analysis; see Remark \ref{rmk:sharp}. 
We first give some intuition for the purpose of this lemma, and then formally present this lemma and its proof. 

Consider first the case with binary exposure and outcome.  
Using Lemma \ref{lemma:useful}, we can know that,  
as long as the conditional distributions of 
\begin{align}\label{eq:cond_binary}
    U\mid T=1, \quad (Z,Y) \mid U, T=1
\end{align}
are coherent with the observed data, 
any specification for them 
is feasible. 
That is, there exists underlying data generating process satisfying Assumption \ref{asmp:U} that leads to the observed data distribution and the specified conditional distributions for \eqref{eq:cond_binary}. 
We will show later that 
both our causal estimand and our sensitivity analysis constraints can be represented by the conditional distributions in \eqref{eq:cond_binary}. 
Therefore, to derive the sharp bounds under the proposed sensitivity analysis, it suffices to consider all possible conditional distributions in \eqref{eq:cond_binary} that are coherent with the observed data. 

Consider then the case with general categorical exposure and outcome. The implication of Lemma \ref{lemma:useful} is similar. 
As long as the conditional distributions of 
\begin{align}\label{eq:cond_cat}
    U\mid \tilde{T}=1 & \ \sim U \  \mid Z \in \{0,1\}, Y \in \{0, 1\}, T=1, 
    \nonumber
    \\ 
    (Z,Y) \mid U, \tilde{T}=1 & \ \sim \ (Z,Y) \mid U, 
    Z \in \{0,1\}, Y \in \{0, 1\}, T=1
\end{align}
are coherent with the observed distribution of $(Z, Y)\mid \tilde{T}=1 \sim (Z, Y)\mid Z \in \{0,1\}, Y \in \{0, 1\}, T=1$, 
any specification for them 
is feasible.
That is, there exists underlying data generating process satisfying Assumption \ref{asmp:U} that leads to the observed data distribution of $(Z, Y) \mid T=1$ and the specified conditional distributions in \eqref{eq:cond_cat}. 
We will show later that 
both the estimand and the sensitivity analysis constraints can be represented by the conditional distributions in \eqref{eq:cond_cat}. Therefore, to derive the sharp bounds under the proposed sensitivity analysis, it suffices to consider all possible conditional distributions in \eqref{eq:cond_cat} that are coherent with the observed data.

\begin{lemma}\label{lemma:useful}
    Let $\bar{Z}\in \{0, 1, \ldots, I\}, \bar{Y} \in \{0, 1, 2\ldots, J\}$ be two categorical random variables, and $\bar{T}\in \{0, 1\}$ be a binary random variable, where $I\ge 1$ and $J\ge 1$.
    Consider any set $\mathcal{U}$, any probability measure $\omega$ on the set $\mathcal{U}$,
    and any $(q_{00\mid u}, q_{01\mid u}, q_{10\mid k}, q_{11\mid u})$s for $u \in \mathcal{U}$ such that 
    \begin{align}
        q_{00\mid u} +  q_{01\mid u} + q_{10\mid u} + q_{11\mid u} = 1 \ \ \text{ for all } u \in \mathcal{U}, 
        \label{eq:sumq}
        \\
        \int_{\mathcal{U}} q_{ij\mid u} \text{d} \omega(u)  = \Pr(\bar{Z} = i, \bar{Y} = j  \mid \bar{Z} \in \{0,1\}, \bar{Y} \in \{0, 1\}, \bar{T}=1) \ \ \text{ for all } i,j\in \{0,1\}.
        \label{eq:sumwq}
    \end{align}  
    Then there always exist random variables $Z\in \{0, 1, \ldots, I\}, Y(0), \ldots, Y(I) \in \{0, 1, 2\ldots, J\},$ $U\in \mathcal{U}$ and  $T \in \{0, 1\}$ such that, 
    with $Y = \sum_{i=0}^I \I(Z=i) Y(i)$, 
    \begin{enumerate}[label={(\roman*)}, topsep=1ex,itemsep=-0.3ex,partopsep=1ex,parsep=1ex]

    \item 
    The conditional distribution of $U \mid (Z \in \{0,1\},\, Y \in \{0,1\},\, T=1)$ induces the same probability measure as $\omega$;

    \item $\Pr(Z = i, Y = j  \mid U=u, Z \in \{0,1\}, Y \in \{0, 1\}, T=1) = q_{ij\mid u}$ for all $i,j\in \{0,1\}$ and $u \in \mathcal{U}$; 

     \item $Z, Y \mid T = 1  \ \sim \ \bar{Z}, \bar{Y} \mid  \bar{T}=1$; 
     
    \item 
    $Z \ind T \mid Y, U$; 
    
    \item
    $Z \ind Y(z) \mid U$ for all $0\le z\le I$.
    \end{enumerate}
\end{lemma}

For clarity, we split the proof of Lemma \ref{lemma:useful} in into the following two parts. 

\begin{proof}[\bf Proof of Lemma \ref{lemma:useful}: part I]
    We show there exist random variables $Z\in \{0, 1, \ldots, I\},$ $Y \in \{0, 1, 2\ldots, J\}, U \in \mathcal{U}$ and $T \in \{0, 1\}$ such that properties (i)--(iv) in Lemma \ref{lemma:useful} hold. 
    We construct random variables $(Z, Y, U, T)$ with  their distributions specified in the following steps. 
    In steps (a) and (b), we specify the conditional distribution of $(Z,Y,U)$ given $T=1$. 
    In steps (c) and (d), we specify the conditional distribution of $(Z,Y,U)$ given $T=0$. 
    In step (e), we specify the marginal distribution of $T$.
    Let $\nu$ denote the usual counting measure on natural numbers. 
    Let $f_{Z, Y, U\mid T=1}$ denote the density for the conditional distribution of $(Z,Y,U)$ given $T=1$ with respect to $\nu\times \nu \times \omega$, 
    $f_{Y, U\mid T=0}$ denote the density for the conditional distribution of $(Y,U)$ given $T=0$ with respect to $\nu\times \omega$, 
    and 
    $f_{Z\mid Y=j,U=u,T=t}$ denote the density for the conditional distribution of $Z$ given $(Y=j,U=u,T=t)$ with respect to $\nu$. 
    \begin{enumerate}[label={(\alph*)}, topsep=1ex,itemsep=-0.3ex,partopsep=1ex,parsep=1ex]
    \item For any $i,j\in \{0,1\}$ and $u \in \mathcal{U}$, define 
    \begin{align*}
        f_{Z, Y, U\mid T=1}(i,j,u)
        & = 
        q_{ij\mid u} 
        \cdot
        \Pr(\bar{Z} \in \{0,1\}, \bar{Y} \in \{0, 1\} \mid  \bar{T}=1 ). 
    \end{align*}
    
    \item For any $0\le i\le I$ and $0\le j\le J$ with $(i,j)\notin \{(0,0), (1,0), (0,1), (1,1)\}$, and any $u \in \mathcal{U}$, define $f_{Z, Y, U\mid T=1}(i,j, u)$ such that 
    \begin{align*}
        \int_{\mathcal{U}} f_{Z, Y, U\mid T=1}(i,j, u) \text{d} \omega(u) = \Pr(\bar{Z} =i, \bar{Y} =j \mid  \bar{T}=1 ).
    \end{align*}
    For example, we can set 
    $f_{Z, Y, U\mid T=1}(i,j, u) = \Pr(\bar{Z} =i, \bar{Y} =j \mid  \bar{T}=1 )$.

    \item For any $0\le j\le J$ and $u \in \mathcal{U}$, 
    define $f_{Y, U\mid T=0}(j, u)$ arbitrarily. For example, we can set 
    $f_{Y, U\mid T=0}(j, u) = (J+1)^{-1}$. 
    
    \item For any $0\le i\le I$, $0\le j\le J$ and $u \in \mathcal{U}$, define 
    \begin{align*}
        f_{Z\mid Y=j, U=u, T=0}(i) & 
        = f_{Z\mid Y=j, U=u, T=1}(i)
        = 
        \frac{
        f_{Z,Y,U\mid T=1}(i,j,u)
        }{\sum_{z=0}^I f_{Z,Y,U\mid T=1}(z,j,u)}, 
    \end{align*}
    where $f_{Z,Y,U\mid T=1}$ is defined in step (a), and $f_{Z\mid Y=j, U=u, T=1}(i)$ is the implied density for the conditional distribution of $Z$ given $(Y=j, U=u, T=1)$.

    \item Define $\Pr(T=1)$ arbitrarily. For example, we can set $\Pr(T=1) = 0.5$.     
    \end{enumerate}
    
    We first prove that the above specification for the distribution of $(Z,Y,U,T)$ is well-defined. By the construction, it suffices to verify that the defined conditional distribution of $(Z,Y,U)$ given $T=1$ is valid. From step (a) and \eqref{eq:sumwq}, we have, for any $i,j\in \{0,1\}$, 
    \begin{align}\label{eq:construct_zy}
        \Pr(Z=i, Y=j\mid T=1)
        & = 
        \int_{\mathcal{U}} f_{Z, Y, U\mid T=1}(i,j,u) \text{d} \omega(u)
        \nonumber
        \\
        & = 
        \int_{\mathcal{U}} 
        q_{ij\mid u} 
        \cdot
        \Pr(\bar{Z} \in \{0,1\}, \bar{Y} \in \{0, 1\} \mid  \bar{T}=1 )
        \text{d} \omega(u)
        \nonumber
        \\
        & = 
        \int_{\mathcal{U}} 
        q_{ij\mid u} 
        \text{d} \omega(u)
        \cdot
        \Pr(\bar{Z} \in \{0,1\}, \bar{Y} \in \{0, 1\} \mid  \bar{T}=1 ) 
        \nonumber
        \\
        & = \Pr(\bar{Z} = i, \bar{Y} = j  \mid \bar{Z} \in \{0,1\}, \bar{Y} \in \{0, 1\}, \bar{T}=1) 
        \nonumber
        \\
        & \quad \  \cdot
        \Pr(\bar{Z} \in \{0,1\}, \bar{Y} \in \{0, 1\} \mid  \bar{T}=1 )
        \nonumber
        \\
        & = \Pr(\bar{Z} = i, \bar{Y} = j  \mid \bar{T}=1). 
    \end{align}
    By the construction in step (b), we can then know that 
    $
    \sum_{i=0}^I \sum_{j=0}^J \int_{\mathcal{U}} f_{Z, Y, U\mid T=1}(i,j,u) \text{d} \omega(u) 
    $
    equals $1$, 
    and $f_{Z, Y, U\mid T=1}(i,j,u) $ is nonnegative for all $i,j, u$. 
    Thus, the conditional distribution of $(Z,Y,U)$ given $T=1$ is well-defined. 
    
    We then verify properties (i)--(iv) in Lemma \ref{lemma:useful}. 
    From step (a) and \eqref{eq:sumq}, for any $u \in \mathcal{U}$, 
    the conditional density of $U$ given $(Z \in \{0,1\}, Y \in \{0, 1\}, T=1)$ with respect to the measure $\omega$, evaluated at $u$, is 
    \begin{align*}
        \frac{\sum_{i=0}^1 \sum_{j=0}^1 f_{Z, Y, U\mid T=1}(i,j,u)}{
        \int_{\mathcal{U}} \sum_{i=0}^1 \sum_{j=0}^1 f_{Z, Y, U\mid T=1}(i,j,u') \text{d} \omega(u')
        }
        = 
        \frac{\sum_{i=0}^1 \sum_{j=0}^1 q_{ij\mid u}}{
        \int_{\mathcal{U}} \sum_{i=0}^1 \sum_{j=0}^1 q_{ij\mid u'} \text{d} \omega(u')
        }
        = \frac{1}{ \int_{\mathcal{U}} \text{d} \omega(u')} = 1. 
    \end{align*}
    Thus, property (i) holds. 
    For all $i,j\in \{0,1\}$ and $u \in \mathcal{U}$, 
    from step (a) and \eqref{eq:sumq}, we have 
    \begin{align*}
        & \quad \ \Pr(Z = i, Y = j  \mid U=u, Z \in \{0,1\}, Y \in \{0, 1\}, T=1)\\
        & = 
        \frac{ f_{Z, Y, U\mid T=1}(i,j,u)}{
        \sum_{z=0}^1 \sum_{y=0}^1 f_{Z, Y, U\mid T=1}(z,y,u)
        }
        = \frac{q_{ij\mid u}}{ \sum_{z=0}^1 \sum_{y=0}^1 q_{zy\mid u} } = q_{ij\mid u}. 
    \end{align*}
    Thus, property (ii) holds.
    From \eqref{eq:construct_zy} and the construction in step (b), we can know that property (iii) holds.
    By the construct in step (d), property (iv) holds.
\end{proof}

\begin{proof}[\bf Proof of Lemma \ref{lemma:useful}: part II]
We show that for any random variables $(\check{Z}, \check{Y}, \check{U}, \check{T})$, there exist random variables $(Z, Y(0), Y(1), \ldots, Y(I), U, T)$ such that 
    \begin{align}\label{eq:proper_joint}
        Z \ind Y(z) \mid U \text{ for all } 0\le z\le I, 
        \ \ \text{and} \ \ 
        ( \check{Z}, \check{Y}, \check{U}, \check{T}) \sim 
        \left(Z, \sum_{i=0}^I \I(Z=i)Y(i), U, T \right). 
    \end{align}
    We construct random variables $(Z, Y(0), Y(1), \ldots, Y(I), U, T)$ with their distributions specified in the following way. 
    \begin{enumerate}[label={(\alph*)}, topsep=1ex,itemsep=-0.3ex,partopsep=1ex,parsep=1ex]
    \item Let $U$ has the same marginal distribution as $\check{U}$.

    \item Let $Z$ and $(Y(0), Y(1), \ldots, Y(I))$ be conditionally independent given $U$. 

    \item Let $Z\mid U=u$ follow the same distribution as $\check{Z}\mid \check{U} = u$, for all $u \in \mathcal{U}$. 

    \item Let $Y(z)\mid U=u$ follow the same distribution as $\check{Y} \mid \check{Z} = z, \check{U}=u$, for $0\le z \le I$ and all $u \in \mathcal{U}$. 
    The joint distribution of $(Y(0), Y(1), \ldots, Y(I))$ given $U$ can be defined arbitrarily as long as their marginal distributions follow the specified form. For example, we can set $Y(0)$, $\ldots,$ and $Y(I)$ to be conditionally independent given $U$. 

    \item Define the conditional distribution of $T$ given $(Z, Y(0), Y(1), \ldots, Y(I), U)$ such that, for any $u\in \mathcal{U}$, $z\in \{0, 1, \ldots, I\}$ and $j_0, \ldots, j_I \in \{0, 1, \ldots, J\}$, 
    \begin{align*}
        & \quad \ \Pr(T=1 \mid U=u, Z = z, Y(0) = j_0, Y(1) = j_1, \ldots, Y(I) = j_I)\\
        & = \Pr(\check{T} = 1 \mid \check{U} = u, \check{Z} = z, \check{Y} = j_z ). 
    \end{align*}
    \end{enumerate}
    By the construction in step (b), the first property in \eqref{eq:proper_joint} holds. 
    By the construction in steps (a) and (c), we know that $(Z, U) \sim (\check{Z}, \check{U})$. By the construction in steps (b) and (d), we have, for any $0\le z\le I$ and $u\in \mathcal{U}$, 
    \begin{align}\label{eq:Y_obs_z}
        \sum_{i=0}^I \I(Z=i)Y(i) \mid Z=z, U = u 
        & \ \sim \  
        Y(z) \mid Z=z, U = u 
        \ \sim \ Y(z) \mid U = u
        \nonumber
        \\
        & \ 
        \sim \ \check{Y} \mid \check{Z} = z, \check{U}=u.
    \end{align}
    This then implies that $(\sum_{i=0}^I \I(Z=i)Y(i), Z, U) \sim (\check{Y}, \check{Z}, \check{U})$. 
    By the construction in steps (d) and (e), we have, for any $u\in \mathcal{U}$, $0\le z\le I$ and $0\le j\le J$,  
    \begin{align*}
        & \quad \ 
        \Pr(T = 1, \sum_{i=0}^I \I(Z=i)Y(i) = j \mid Z=z, U = u)\\
        & = \Pr(T = 1, Y(z) = j \mid Z=z, U = u)
        \\
        & = 
        \sum_{(j_0, \ldots, j_I)\in \{0, \ldots, J\}^I: j_z = j}
        \Pr(T=1, Y(0) = j_0, Y(1) = j_1, \ldots, Y(I) = j_I \mid U=u, Z = z)\\
        & = 
        \sum_{(j_0, \ldots, j_I)\in \{0, \ldots, J\}^I: j_z = j}
        \left\{ \Pr(T=1 \mid Y(0) = j_0, Y(1) = j_1, \ldots, Y(I) = j_I, U=u, Z = z) \cdot \right.
        \\
        & \quad \ \left. 
        \Pr(Y(0) = j_0, Y(1) = j_1, \ldots, Y(I) = j_I \mid U=u, Z = z) \right\}\\
        & = 
        \Pr(\check{T}=1 \mid \check{Y} = j, \check{U}=u, \check{Z} = z) \cdot
        \\
        & \quad \ \sum_{(j_0, \ldots, j_I)\in \{0, \ldots, J\}^I: j_z = j}
        \Pr(Y(0) = j_0, Y(1) = j_1, \ldots, Y(I) = j_I \mid U=u, Z = z)\\
        & = \Pr(\check{T}=1 \mid \check{Y} = j, \check{U}=u, \check{Z} = z) \cdot \Pr(Y(z) = j \mid U=u, Z = z)\\
        & = \Pr(\check{T}=1 \mid \check{Y} = j, \check{U}=u, \check{Z} = z) \cdot \Pr(\check{Y} = j \mid \check{U}=u, \check{Z} = z)\\
        & = \Pr(\check{T}=1, \check{Y} = j \mid \check{U}=u, \check{Z} = z),
    \end{align*}
    where the forth equality follows from the construction in (e) and the second last equality follows from \eqref{eq:Y_obs_z}. 
    Consequently, 
    \begin{align*}
        T \mid \sum_{i=0}^I \I(Z=i)Y(i) = j, Z=z, U = u
        \ \sim \ 
        \check{T}=1 \mid \check{Y} = j,  \check{U}=u, \check{Z} = z
    \end{align*}
    Therefore, 
    the second property in \eqref{eq:proper_joint} must hold. 

    From the above, Lemma \ref{lemma:useful} holds. 
\end{proof}

\subsection{Proof of Theorem \ref{thm:simple_bound_cor}}

To prove Theorem \ref{thm:simple_bound_cor}, we need the following lemma, where we also make the conditioning on $c$ implicit. 

\begin{lemma}\label{lemma: COR p_xy}
    Recall the definition of $\COR_u$ in \eqref{eq:COR} and $p_{zy\mid u}$s in \eqref{eq:pzy_u}. 
    Under Assumption \ref{asmp:U}, we have
    \begin{equation*}
        \COR_{u} = \frac{p_{11\mid u}p_{00\mid u}}{p_{10\mid u}p_{01\mid u}}.
    \end{equation*}
\end{lemma}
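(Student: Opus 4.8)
The plan is to reproduce, for a generic confounder level $u$, the same chain of equalities used in the proofs of Propositions \ref{prop:cor_perfect_test} and \ref{prop:iden_no_conf_strength}, but stopping one step earlier so that the conditioning on $U = u$ and $T = 1$ is retained rather than being dropped via Assumption \ref{asmp:U_T}. Throughout I keep the conditioning on $C = c$ implicit and assume that the relevant conditional probabilities are positive, so that every odds ratio below is well defined.

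First I would start from the definition of $\COR_u$ in \eqref{eq:COR} written in terms of potential outcomes, and apply Assumption \ref{asmp:U}(i), namely $Z \ind Y(z)\mid U$, together with consistency, to replace each $\P(Y(z) = y\mid U = u)$ by the observational conditional probability $\P(Y = y\mid Z = z, U = u)$. This turns $\COR_u$ into the odds ratio of $Y$ against $Z$ within the stratum $U = u$:
\[
\COR_u = \frac{\P(Y=1\mid Z=1, U=u)/\P(Y=0\mid Z=1, U=u)}{\P(Y=1\mid Z=0, U=u)/\P(Y=0\mid Z=0, U=u)}.
\]

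Next I would invoke the invariance of the odds ratio to swap the roles of $Z$ and $Y$, rewriting the display as the odds ratio of $Z$ against $Y$ given $U = u$. I would then apply Assumption \ref{asmp:U}(ii), $Z \ind T\mid Y, U$, to insert the event $T = 1$ into every conditioning set, obtaining
\[
\COR_u = \frac{\P(Z=1\mid Y=1, T=1, U=u)/\P(Z=0\mid Y=1, T=1, U=u)}{\P(Z=1\mid Y=0, T=1, U=u)/\P(Z=0\mid Y=0, T=1, U=u)}.
\]
Finally, expanding each term as $\P(Z=z\mid Y=y,T=1,U=u) = \P(Z=z, Y=y\mid T=1, U=u)/\P(Y=y\mid T=1, U=u)$, the marginal factors $\P(Y=y\mid T=1, U=u)$ cancel, leaving exactly $p_{11\mid u}p_{00\mid u}/(p_{10\mid u}p_{01\mid u})$ by the definition in \eqref{eq:pzy_u}; equivalently, this last step is one more application of the odds-ratio invariance.

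There is no serious obstacle here: the argument is elementary and is in fact the intermediate identity already appearing inside the proof of Proposition \ref{prop:iden_no_conf_strength}. The only points requiring a little care are confirming that all the conditional probabilities involved are positive so that the odds ratios are defined, and correctly tracking the two applications of the invariance property, which are precisely the symmetry facts used in the proof of Proposition \ref{prop:cor_perfect_test}.
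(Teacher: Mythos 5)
Your proof is correct and takes essentially the same route as the paper's own proof of Lemma \ref{lemma: COR p_xy}: apply Assumption \ref{asmp:U}(i) to pass from the potential-outcome odds ratio to the observational odds ratio within $U=u$, invoke odds-ratio invariance to swap $Z$ and $Y$, insert $T=1$ via Assumption \ref{asmp:U}(ii), and cancel the marginal factors $\P(Y=y\mid T=1, U=u)$ to arrive at $p_{11\mid u}p_{00\mid u}/(p_{10\mid u}p_{01\mid u})$. The only differences are cosmetic—you make the consistency step and the positivity caveat explicit where the paper's write-up leaves them implicit.
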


\begin{proof}[\bf Proof of Lemma \ref{lemma: COR p_xy}]
Under Assumption \ref{asmp:U}, we have 
\begin{align*}
    \COR_{u} & = \frac{\P(Y=1\mid Z=1, U = u)/\P(Y = 0 \mid Z=1, U = u)}{\P(Y=1\mid Z=0, U = u)/\P(Y = 0 \mid Z=0, U = u)} 
    \nonumber
    \\
    & \quad \ \text{(\textit{by Assumption \ref{asmp:U}(i)})}
    \nonumber
    \\
    & = 
    \frac{\P(Z=1\mid Y=1, U = u)/\P(Z=0 \mid Y = 1, U = u)}{\P(Z=1\mid Y=0, U = u)/\P(Z=0 \mid Y=0, U = u)} 
    \nonumber
    \\
    & \quad \ \text{(\textit{by invariance of the odds ratio})}
    \nonumber
    \\
    & = 
    \frac{\P(Z=1\mid Y=1, T = 1, U = u)/\P(Z=0 \mid Y = 1, T=1, U = u)}{\P(Z=1\mid Y=0, T=1, U = u)/\P(Z=0 \mid Y=0, T= 1, U = u)} 
    \nonumber
    \\
    & \quad \ \text{(\textit{by Assumption \ref{asmp:U}(ii)})}
    \nonumber
    \\
    & = 
    \frac{\P(Z=1, Y=1\mid T = 1, U = u)/\P(Z=0, Y = 1\mid T=1, U = u)}{\P(Z=1, Y=0\mid T=1, U = u)/\P(Z=0, Y=0 \mid T= 1, U = u)} 
    \nonumber
    \\
    & \quad \ \text{(\textit{by some algebra})}
    \nonumber\\
    & = \frac{p_{11\mid u}p_{00\mid u}}{p_{10\mid u}p_{01\mid u}}, 
    \\
    & \quad \ \text{(\textit{by definition})}
\end{align*}
with the reason for each equality given in parentheses. 
Thus, Lemma \ref{lemma: COR p_xy} holds. 
\end{proof}

\begin{proof}[\bf Proof of Theorem \ref{thm:simple_bound_cor}]
Let $w = \P(U \ne 0 \mid T=1)$. 
By the law of total probability, the observed distribution satisfies that, for $z,y \in \{ 0,1 \}$, 
\begin{align}\label{eq:o_wp}
    \pi_{zy} & = \sum_{u=0}^1 \P(Z=z, Y=y, U = u \mid T=1) 
    \nonumber
    \\
    & = \sum_{u=0}^1 \P(Z=z, Y=y \mid U = u, T=1) \P(U = u \mid T=1) \nonumber
    \\
    & = (1-w) p_{zy\mid 0} + w p_{zy\mid 1}.  
\end{align}
From Assumption \ref{asmp:U} and Lemma \ref{lemma: COR p_xy}, we have $\COR_0 = p_{11\mid 0}p_{00\mid 0}/p_{01\mid 0}p_{10\mid 0}$.

We first show that, for any $0\le w \le 1$, 
\begin{align}\label{eq:COR_w}
    \min\left\{ \frac{(\pi_{11}-w)_{+}\pi_{00}}{\pi_{10} \pi_{01}}, \  \frac{\pi_{11}(\pi_{00}-w)_{+}}{\pi_{10} \pi_{01}} \right\} 
    \le \COR_0 \le 
    \max\left\{ \frac{\pi_{11}\pi_{00}}{(\pi_{10}-w)_{+}\pi_{01}}, \frac{\pi_{11}\pi_{00}}{\pi_{10}(\pi_{01}-w)_{+}} \right\}. 
\end{align}
When $w = 0$, the inequalities in \eqref{eq:COR_w} hold obviously by Proposition \ref{prop:cor_perfect_test}. 
When $w = 1$, the inequalities in \eqref{eq:COR_w} also hold obviously since the lower and upper bounds are, respectively, $0$ and $\infty$.  
Below we consider only the case in which $w\in (0,1)$. 
From \eqref{eq:o_wp}, we have 
\begin{align*}
    \COR_{0} = 
    \frac{p_{11\mid 0} p_{00 \mid 0}}{p_{10\mid 0} p_{01\mid 0}}
    = 
    \frac{(1-w)p_{11\mid 0} (1-w)p_{00 \mid 0}}{(1-w)p_{10\mid 0} (1-w)p_{01\mid 0}}
    = 
    \frac{(\pi_{11} - w p_{11\mid 1})(\pi_{00} - w p_{00\mid 1})}{(\pi_{10} - w p_{10\mid 1})(\pi_{01} - w p_{01\mid 1})},
\end{align*}
which can then be bounded by 
\begin{align*}
    \frac{(\pi_{11} - w p_{11\mid 1})(\pi_{00} - w p_{00\mid 1})}{\pi_{10} \pi_{01}}
    \le \COR_{0} \le \frac{\pi_{11} \pi_{00}}{(\pi_{10} - w p_{10\mid 1})(\pi_{01} - w p_{01\mid 1})}. 
\end{align*}
The key is to bound $(\pi_{11} - w p_{11\mid 1})(\pi_{00} - w p_{00\mid 1})$ in the lower bound and $(\pi_{10} - w p_{10\mid 1})(\pi_{01} - w p_{01\mid 1})$ in the upper bound above.
\begin{enumerate}[label={(\roman*)}, topsep=1ex,itemsep=-0.3ex,partopsep=1ex,parsep=1ex]
\item Bound $(\pi_{11} - w p_{11\mid 1})(\pi_{00} - w p_{00\mid 1})$. 
Note that 
\begin{align*}
    & \quad \ (\pi_{11} - w p_{11\mid 1})(\pi_{00} - w p_{00\mid 1}) 
    \\
    & \ge (\pi_{11} - w p_{11\mid 1}) \{\pi_{00} - w (1-p_{11\mid 1})\}_{+}
    = 
    \big[ (\pi_{11} - w p_{11 \mid 1}) \{\pi_{00} - w (1-p_{11\mid 1})\} \big]_{+}
    \\
    & = \big\{ -(  w p_{11\mid 1} - \pi_{11}) (w p_{11\mid 1}  + \pi_{00} - w ) \big\}_{+}. 
\end{align*}
Because $-(  w p_{11\mid 1} - \pi_{11}) (w p_{11\mid 1}  + \pi_{00} - w )$ is a concave function of $p_{11\mid 1}$, it must achieve its minimum value at $p_{11\mid 1} = 0$ or $1$. 
Consequently, 
\begin{align*}
    (\pi_{11} - w p_{11\mid 1})(\pi_{00} - w p_{00\mid 1}) 
    & \ge 
    \min\left[
    \big\{ -(  w 1 - \pi_{11}) (w 1  + \pi_{00} - w ) \big\}_{+}, 
    \big\{ -( - \pi_{11}) ( \pi_{00} - w ) \big\}_{+}
    \right]\\
    & = 
    \min\left[
    \{ (\pi_{11} -  w ) \pi_{00} \}_{+}, 
    \{ \pi_{11} ( \pi_{00} - w ) \}_{+}
    \right]\\
    & = 
    \min\left\{
    (\pi_{11} -  w )_{+} \pi_{00}, 
    \pi_{11} ( \pi_{00} - w )_{+}
    \right\}. 
\end{align*}
\item 
Bound $(\pi_{10} - w p_{10\mid 1} ) (\pi_{01} - w p_{01\mid 1})$. 
By the same logic as before, 
\begin{align*}
    (\pi_{10} - w p_{10\mid 1} ) (\pi_{01} - w p_{01\mid 1})
    \ge 
    \min\left\{
    (\pi_{10} -  w )_{+} \pi_{01}, 
    \pi_{10} ( \pi_{01} - w )_{+}
    \right\}. 
\end{align*}
\end{enumerate}
From (i) and (ii), we have 
\begin{align*}
    \COR_{0}
    \ge  
    \frac{(\pi_{11} - b_{11} w)(\pi_{00} - b_{00} w)}{\pi_{10} \pi_{01}} 
    \ge 
    \min\left\{
    \frac{(\pi_{11} -  w )_{+} \pi_{00}}{\pi_{10} \pi_{01}}, 
    \frac{\pi_{11} ( \pi_{00} - w )_{+}}{\pi_{10} \pi_{01}}
    \right\}, 
\end{align*}
and 
\begin{align*}
    \COR_0 
    \le 
    \frac{\pi_{11} \pi_{00}}{(\pi_{10} - b_{10} w) (\pi_{01} - b_{01} w)}
    \le 
    \max\left\{
    \frac{\pi_{11} \pi_{00}}{(\pi_{10} -  w )_{+} \pi_{01}}, 
    \frac{\pi_{11} \pi_{00}}{\pi_{10} ( \pi_{01} - w )_{+}}
    \right\},
\end{align*}
i.e., the inequalities in \eqref{eq:COR_w} hold.

We then consider the bounds on $\COR_0$ under Assumption \ref{asmp:bound_u_not_0}. 
Note that the lower bound of $\COR_0$ in \eqref{eq:COR_w} is decreasing in $w$ and the upper bound of $\COR_0$ in \eqref{eq:COR_w} is increasing in $w$. 
These immediately imply the bounds on $\COR_0$ in Theorem \ref{thm:simple_bound_cor}. 

Finally, we show that the bounds on $\COR_0$ in Theorem \ref{thm:simple_bound_cor} are sharp. 
From Lemma \ref{lemma:useful}, it suffices to show that $\COR_0$ can attain any of the following four values:
\begin{align*}
    \frac{(\pi_{11} -  \delta )_{+} \pi_{00}}{\pi_{10} \pi_{01}}, \ \ 
    \frac{\pi_{11} ( \pi_{00} - \delta )_{+}}{\pi_{10} \pi_{01}}, \ \ 
    \frac{\pi_{11} \pi_{00}}{(\pi_{10} -  \delta )_{+} \pi_{01}}, \ \ 
    \frac{\pi_{11} \pi_{00}}{\pi_{10} ( \pi_{01} - \delta )_{+}}, 
\end{align*}
at some values of $(w, p_{**\mid 0}, p_{**\mid 0})$ that are coherent with the observed data distribution. 
We prove only that $\COR_0$ can attain the value $(\pi_{11} -  w )_{+} \pi_{00}/(\pi_{10} \pi_{01})$; the proof for the other three follow by the same logic. 
Let $p_{11\mid 1} = 1$, $p_{10\mid 1} = p_{01\mid 1} = p_{00\mid 1} = 0$, 
and $w = \min\{\delta, \pi_{11}\}< 1$, recalling that all the $\pi_{zy}$s are assumed to be positive. 
From \eqref{eq:o_wp}, we then have 
$p_{10\mid 0} = \pi_{10}/(1-w), p_{01\mid 0} = \pi_{01}/(1-w), p_{00\mid 0} = \pi_{00}/(1-w)$, 
and $p_{11\mid 0} = (\pi_{11} - w)/(1-w)$. 
Consequently, 
\begin{align*}
    \COR_0 = \frac{p_{11\mid 0} p_{00 \mid 0}}{p_{10\mid 0} p_{01\mid 0}}
    = \frac{(\pi_{11} - w) \pi_{00}}{\pi_{10} \pi_{01}}
    = \begin{cases}
        \frac{(\pi_{11} - \delta) \pi_{00}}{\pi_{10} \pi_{01}}, & \text{ if } \delta \le \pi_{11} \\
        0 & \text{ if } \delta > \pi_{11}
    \end{cases}
    = \frac{(\pi_{11} -  \delta )_{+} \pi_{00}}{\pi_{10} \pi_{01}}. 
\end{align*}
We can verify that the above values of $w$, $p_{zy\mid 1}$s and $p_{zy\mid 0}$s are all plausible given the observed data distribution $\pi_{zy}$s. 
Therefore, $\COR_0$ can attain the value $(\pi_{11} -  w )_{+} \pi_{00}/(\pi_{10} \pi_{01})$, 
and $\COR_0$ attains this value when one of the $p_{zy\mid 1}$s is $1$ and $w$ is as close as possible to $\delta$. 

From the above, Theorem \ref{thm:simple_bound_cor} holds. 
\end{proof}

\begin{proof}[\bf Proof of Proposition \ref{prop:iden_no_conf_strength}]
From Lemma \ref{lemma: COR p_xy}, for any $u$, the causal odds ratio in \eqref{eq:COR} among units with unmeasured confounder $U=u$ has the following equivalent forms:
\begin{align*}
    \COR_{u}
    & = \frac{p_{11\mid u} p_{00 \mid u}}{p_{10\mid u} p_{01 \mid u}} = \frac{\pi_{11} \pi_{00}}{\pi_{10} \pi_{01}} 
    = \OR,
\end{align*}
where the second equality follows from the fact that $p_{zy\mid 1}=p_{zy\mid 0} = \pi_{zy}$ for all $z,y$ under Assumption \ref{asmp:bound_density} with $\Gamma=1$, and the last equality follows from definition. 
Therefore, Proposition \ref{prop:iden_no_conf_strength} holds. 
\end{proof}

\subsection{Proofs of Theorems \ref{thm:cor_bound_density} and \ref{thm:cor_bound_additional}}
Because Theorem \ref{thm:cor_bound_density} is a special case of Theorem \ref{thm:cor_bound_additional} with \(\delta=1\), 
we only prove Theorem \ref{thm:cor_bound_additional}. 
We need the following three lemmas.
Note that, 
from Lemma \ref{lemma: COR p_xy}, the causal odds ratio $\COR_0$ is equivalently $p_{11\mid u}p_{00\mid u}/(p_{10\mid u}p_{01\mid u})$. 
Lemmas \ref{lemma:cond_w_gamma_lu} and \ref{lemma:cond_delta_gamma_lu}
show that Assumptions \ref{asmp:bound_u_not_0} and \ref{asmp:bound_density} equivalently impose bounds on the $p_{zy\mid 0}$s, 
and Lemma \ref{lemma:optim} derives the bounds on $\COR_0 = p_{11\mid 0}p_{00\mid 0}/(p_{10\mid 0}p_{01\mid 0})$ under given bounds on the $p_{zy\mid 0}$s.

\begin{lemma}\label{lemma:cond_w_gamma_lu}
For any given $\pi_{**}$, $w = \Pr(U\ne 0\mid T=1) \ge 0$ and $\Gamma\ge 1$, 
define $\mathcal{A}_w$ as the set consisting of all possible values of $(p_{**\mid 0}, p_{**\mid 1})$ satisfying that, for $z,y\in \{0,1\}$, 
\begin{align*}
    & \pi_{zy} = (1-w) p_{zy\mid 0} + w p_{zy\mid 1}, 
    \quad 
    \sum_{z,y} p_{zy\mid 0} = \sum_{z,y} p_{zy\mid 1} = 1, \quad
    p_{zy\mid 0} \ge 0, \quad  p_{zy\mid 1} \ge 0, \\
    & 1/\Gamma \le p_{zy\mid 1}/p_{zy\mid 0} \le \Gamma, 
\end{align*}
and define $\mathcal{B}_w$ as the set consisting of all possible values of $(p_{**\mid 0}, p_{**\mid 1})$ satisfying that, for $z,y\in \{0,1\}$,  
\begin{align}\label{eq:bound_p0_w}
    & \pi_{zy} = (1-w) p_{zy\mid 0} + w p_{zy\mid 1}, 
    \quad 
    \sum_{z,y} p_{zy\mid 0} = 1, 
    \nonumber
    \\
    & 
    \max\left\{ \frac{\pi_{zy}}{w\Gamma + (1-w)}, \ \frac{\pi_{zy} - w}{1-w} \right\}
    \le p_{zy\mid 0} \le 
    \min\left\{ \frac{\pi_{zy}\Gamma}{w + (1-w)\Gamma},\  1 \right\}. 
\end{align}
Then $\mathcal{A}_w = \mathcal{B}_w$ for any $0<w\le 1$. 
\end{lemma}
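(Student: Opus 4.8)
The plan is to prove the set equality $\mathcal{A}_w = \mathcal{B}_w$ by establishing the two inclusions separately, treating the defining conditions as a system of algebraic (in)equalities and showing that, after eliminating $p_{**\mid 1}$ through the mixture relation $o_{zy} = (1-w)p_{zy\mid 0} + w\, p_{zy\mid 1}$, each constraint defining $\mathcal{A}_w$ becomes exactly one of the box constraints \eqref{eq:bound_p0_w} defining $\mathcal{B}_w$. Throughout I fix $w\in(0,1]$ and use that $o_{**}$ is a probability vector, so $\sum_{z,y} o_{zy}=1$, together with the standing assumption that every $o_{zy}$ is positive (so every $p_{zy\mid 0}$ is forced to be positive and all ratios are well defined); the degenerate case $w=1$ is handled by the stated convention $(o_{zy}-w)/(1-w):=0$.

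For the inclusion $\mathcal{A}_w\subseteq\mathcal{B}_w$, I would take $(p_{**\mid 0}, p_{**\mid 1})\in\mathcal{A}_w$ and note that the mixture relation and $\sum_{z,y}p_{zy\mid 0}=1$ are already among the defining conditions of $\mathcal{B}_w$, so only the four box inequalities need to be checked. Substituting $p_{zy\mid 1}=(o_{zy}-(1-w)p_{zy\mid 0})/w$ into the four constraints $p_{zy\mid 1}\le\Gamma p_{zy\mid 0}$, $p_{zy\mid 1}\ge p_{zy\mid 0}/\Gamma$, $p_{zy\mid 1}\le 1$, and $p_{zy\mid 0}\le 1$ and solving each for $p_{zy\mid 0}$ yields, respectively, $p_{zy\mid 0}\ge o_{zy}/(w\Gamma+(1-w))$, $p_{zy\mid 0}\le o_{zy}\Gamma/(w+(1-w)\Gamma)$, $p_{zy\mid 0}\ge(o_{zy}-w)/(1-w)$, and $p_{zy\mid 0}\le 1$. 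Combining the two lower bounds with a $\max$ and the two upper bounds with a $\min$ is precisely \eqref{eq:bound_p0_w}. The nonnegativity $p_{zy\mid 1}\ge 0$ need not be imposed here, since the ratio upper bound $p_{zy\mid 1}\le\Gamma p_{zy\mid 0}$ gives a tighter cap; indeed one checks the elementary comparison $o_{zy}\Gamma/(w+(1-w)\Gamma)\le o_{zy}/(1-w)$, which reduces to $0\le w$.

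For the reverse inclusion $\mathcal{B}_w\subseteq\mathcal{A}_w$, I would start from $(p_{**\mid 0}, p_{**\mid 1})\in\mathcal{B}_w$, where $p_{**\mid 1}$ is determined by the mixture relation, and verify each remaining condition of $\mathcal{A}_w$. The identity $\sum_{z,y}p_{zy\mid 1}=1$ follows by summing the mixture relation over $z,y$ and using $\sum o_{zy}=\sum p_{zy\mid 0}=1$ together with $w>0$. The ratio bounds $1/\Gamma\le p_{zy\mid 1}/p_{zy\mid 0}\le\Gamma$ are obtained by reversing exactly the algebra of the previous paragraph: the lower box bound $o_{zy}/(w\Gamma+(1-w))\le p_{zy\mid 0}$ is equivalent to $p_{zy\mid 1}\le\Gamma p_{zy\mid 0}$, and the upper box bound $p_{zy\mid 0}\le o_{zy}\Gamma/(w+(1-w)\Gamma)$ is equivalent to $p_{zy\mid 1}\ge p_{zy\mid 0}/\Gamma$. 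Finally $p_{zy\mid 0}\ge 0$ is immediate from the lower box bound, and $p_{zy\mid 1}\ge 0$ follows from the upper box bound via the same comparison $o_{zy}\Gamma/(w+(1-w)\Gamma)\le o_{zy}/(1-w)$, where in the subcase that the binding upper bound is $1$ one notes $o_{zy}\ge(1-w)+w/\Gamma$ forces $o_{zy}/(1-w)\ge 1\ge p_{zy\mid 0}$.

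I expect the main obstacle to be organizational rather than conceptual: keeping straight which of the four scalar constraints in each set corresponds to which box bound, and carefully dispatching the boundary cases introduced by the $\min\{\cdot,1\}$ in the upper bound and by $w=1$. The cleanest route is to rewrite the ratio constraint in the product form $p_{zy\mid 0}/\Gamma\le p_{zy\mid 1}\le\Gamma p_{zy\mid 0}$ from the outset, so that every equivalence is a linear manipulation and no division by a possibly vanishing quantity occurs; the positivity of the $o_{zy}$s then guarantees all denominators are strictly positive and all ratios well defined.
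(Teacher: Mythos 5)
Your proposal is correct and takes essentially the same route as the paper's own proof: both eliminate $p_{zy\mid 1}$ through the mixture relation $o_{zy}=(1-w)p_{zy\mid 0}+w\,p_{zy\mid 1}$ and establish the same two equivalences — the ratio constraints $p_{zy\mid 0}/\Gamma\le p_{zy\mid 1}\le \Gamma p_{zy\mid 0}$ translating into the bounds $o_{zy}/\{w\Gamma+(1-w)\}\le p_{zy\mid 0}\le o_{zy}\Gamma/\{w+(1-w)\Gamma\}$, and $0\le p_{zy\mid 1}\le 1$ translating into $(o_{zy}-w)/(1-w)\le p_{zy\mid 0}\le o_{zy}/(1-w)$ — before checking the two inclusions, including the same key comparison $o_{zy}\Gamma/\{w+(1-w)\Gamma\}\le o_{zy}/(1-w)$ that makes the nonnegativity of $p_{zy\mid 1}$ redundant and yields $\sum_{z,y}p_{zy\mid 1}=1$ from $w>0$ in the reverse direction. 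Your product-form reformulation of the ratio constraint and the explicit $w=1$ convention are minor presentational refinements of the identical argument.
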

\begin{proof}[\bf Proof of Lemma \ref{lemma:cond_w_gamma_lu}]
    Below we first present some facts that are useful for the proof of Lemma \ref{lemma:cond_w_gamma_lu}. 
    Consider any $w>0$ and any $\pi_{**}$s with $\sum_{z=0}^1 \sum_{y=0}^1 \pi_{zy} = 1$ and $\pi_{zy} > 0$ for all $z,y$. 
    Suppose that   
    $p_{**\mid 0}$ and $p_{**\mid 1}$ satisfy that $\pi_{zy} = (1-w) p_{zy\mid 0} + w p_{zy\mid 1}$ for all $z,y\in \{0,1\}$. 
    We have the following equivalence relations: 
    \begin{align}\label{eq:equiv_ratio}
        \frac{1}{\Gamma} \le \frac{p_{zy\mid 1}}{p_{zy\mid 0}} \le \Gamma 
        & \Longleftrightarrow 
        \frac{1}{\Gamma} \le \frac{w p_{zy\mid 1}}{w p_{zy\mid 0}} \le \Gamma
        \nonumber
        \\
        & \Longleftrightarrow 
        \frac{1}{\Gamma} \le \frac{\pi_{zy} - (1-w) p_{zy\mid 0}}{w p_{zy\mid 0}} \le \Gamma
        \nonumber
        \\
        & \Longleftrightarrow 
        \frac{w}{\Gamma} + (1-w) \le \frac{\pi_{zy}}{p_{zy\mid 0}} \le w\Gamma + (1-w)
        \nonumber
        \\
        & \Longleftrightarrow 
        \frac{\pi_{zy}}{w\Gamma + (1-w)} \le p_{zy\mid 0} \le \frac{\pi_{zy}\Gamma}{w + (1-w)\Gamma},
    \end{align}     
    and 
    \begin{align}\label{eq:equi_bound_p_01}
        0\le p_{zy\mid 1} \le 1 
        & \Longleftrightarrow
        0\le \frac{\pi_{zy} - (1-w) p_{zy\mid 0}}{w} \le 1
        \nonumber
        \\
        & \Longleftrightarrow
        \pi_{zy} - w \le (1-w)p_{zy\mid 0} \le \pi_{zy}
        \nonumber
        \\
        & \Longleftrightarrow
        \frac{\pi_{zy} - w}{1-w} \le p_{zy\mid 0} \le \frac{\pi_{zy}}{1-w},
    \end{align}
    where $(\pi_{zy} - w)/(1-w)$ and $\pi_{zy}/(1-w)$ are defined as $0$ and $\infty$, respectively, when $w=1$. 

    We then prove that $\mathcal{A}_w \subset \mathcal{B}_w$. 
    Let $(p_{**\mid 0}, p_{**\mid 1})$ be any element of $\mathcal{A}_w$. 
    By the definition of the set $\mathcal{A}_w$
    and from \eqref{eq:equiv_ratio} and \eqref{eq:equi_bound_p_01}, 
    we have 
    \begin{align*}
        \max\left\{ \frac{\pi_{zy}}{w\Gamma + (1-w)}, \ \frac{\pi_{zy} - w}{1-w} \right\}
    \le p_{zy\mid 0} \le 
    \min\left\{ \frac{\pi_{zy}\Gamma}{w + (1-w)\Gamma},\  \frac{\pi_{zy}}{1-w} \right\}.
    \end{align*}
    Note that 
    ${\pi_{zy}\Gamma}/\{w + (1-w)\Gamma\} \le \pi_{zy}/(1-w)$ and   
    $p_{zy\mid 0}$ must be in $[0,1]$. We can derive the inequalities in \eqref{eq:bound_p0_w}. 
    Therefore, $(p_{**\mid 0}, p_{**\mid 1})$ must be in $\mathcal{B}_{w}$. 
    
    Second, we prove that $\mathcal{B}_w \subset \mathcal{A}_w$. 
    Let $(p_{**\mid 0}, p_{**\mid 1})$ be any element of $\mathcal{B}_w$. 
    From \eqref{eq:bound_p0_w}, we can know that the last inequalities in both \eqref{eq:equiv_ratio} and \eqref{eq:equi_bound_p_01} hold. This then implies that, for $z,y\in \{0,1\}$, 
    $1/\Gamma \le p_{zy\mid 1}/p_{zy\mid 0} \le \Gamma$ and $0\le p_{zy\mid 1} \le 1$. Furthermore, 
    because $1 = \sum_{z,y} \pi_{zy} = (1-w) 
    \sum_{z,y} p_{zy\mid 0} + w \sum_{z,y} p_{zy\mid 1} = 1-w+ w\sum_{z,y} p_{zy\mid 1}$, we must have $\sum_{z,y} p_{zy\mid 1} = 1$. 
    Therefore, $(p_{**\mid 0}, p_{**\mid 1})$ must be in $\mathcal{A}_{w}$. 
    
    From the above, Lemma \ref{lemma:cond_w_gamma_lu} holds. 
\end{proof}

\begin{lemma}\label{lemma:cond_delta_gamma_lu}
For any given $\delta \in [0,1]$ and $\Gamma\ge 1$, 
define $\mathcal{A}_w$ the same as in Lemma \ref{lemma:cond_w_gamma_lu}, and $\tilde{l}_{**}$ and $\tilde{u}_{**}$ the same as in \eqref{eq:lu_delta_Gamma}.
Define further  
\begin{align*}
    \mathcal{A} & = \{(w, p_{**\mid 0}, p_{**\mid 1}):  (p_{**\mid 0}, p_{**\mid 1}) \in \mathcal{A}_w, 0\le w \le \delta \}, \\
    \mathcal{B} & = \Big\{ p_{**\mid 0}:  
    \tilde{l}_{zy}
    \le p_{zy\mid 0} \le 
    \tilde{u}_{zy} \text{ for } z,y\in \{ 0,1 \}, 
    \text{ and }
    \sum_{z,y} p_{zy\mid 0} = 1
    \Big\}.
\end{align*}
Then 
\begin{align}\label{eq:equi_form_sup}
    \inf_{(w, p_{**\mid 0}, p_{**\mid 1})\in \mathcal{A}} \frac{p_{11\mid 0} p_{00\mid 0}}{p_{10\mid 0}p_{01\mid 0}}
    = 
    \inf_{p_{**\mid 0}\in \mathcal{B}} \frac{p_{11\mid 0} p_{00\mid 0}}{p_{10\mid 0}p_{01\mid 0}}.
\end{align}
\end{lemma}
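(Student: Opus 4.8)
The plan is to prove the stronger statement that the projection of $\mathcal{A}$ onto its $p_{**\mid 0}$-coordinates equals $\mathcal{B}$. Since the objective $p_{11\mid 0}p_{00\mid 0}/(p_{10\mid 0}p_{01\mid 0})$ depends on $p_{**\mid 0}$ alone and ignores $w$ and $p_{**\mid 1}$, the two infima in \eqref{eq:equi_form_sup} will then coincide. The whole argument rests on Lemma \ref{lemma:cond_w_gamma_lu}, which for each fixed $w\in(0,1]$ already recasts the mixing-and-ratio constraints defining $\mathcal{A}_w$ as the box constraints \eqref{eq:bound_p0_w} on $p_{**\mid 0}$ together with $\sum_{z,y}p_{zy\mid 0}=1$.

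First I would name the per-$w$ box endpoints
\[
L_{zy}(w)=\max\Big\{\tfrac{o_{zy}}{w\Gamma+(1-w)},\ \tfrac{o_{zy}-w}{1-w}\Big\},\qquad U_{zy}(w)=\min\Big\{\tfrac{o_{zy}\Gamma}{w+(1-w)\Gamma},\ 1\Big\},
\]
and observe from \eqref{eq:lu_delta_Gamma} that $\tilde{l}_{zy}=L_{zy}(\delta)$ and $\tilde{u}_{zy}=U_{zy}(\delta)$, so that $\mathcal{B}$ is precisely the $p_{**\mid 0}$-projection of $\mathcal{A}_\delta$ delivered by Lemma \ref{lemma:cond_w_gamma_lu}. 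The crux is a monotonicity check showing the feasible boxes $[L_{zy}(w),U_{zy}(w)]$ are nested and expand with $w$. Writing the denominators as $w\Gamma+(1-w)=1+w(\Gamma-1)$ and $w+(1-w)\Gamma=\Gamma-w(\Gamma-1)$, the first arguments of the $\max$ and $\min$ are, respectively, nonincreasing and nondecreasing in $w$ because $\Gamma\ge 1$; and $\frac{d}{dw}\frac{o_{zy}-w}{1-w}=\frac{o_{zy}-1}{(1-w)^2}\le 0$ since $o_{zy}\le 1$. Hence $L_{zy}(\cdot)$ is nonincreasing and $U_{zy}(\cdot)$ nondecreasing on $[0,1)$, so for any $w\le\delta$ one has $[L_{zy}(w),U_{zy}(w)]\subseteq[\tilde{l}_{zy},\tilde{u}_{zy}]$.

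With this in hand I would assemble the two inclusions. For the forward inclusion, any $(w,p_{**\mid 0},p_{**\mid 1})\in\mathcal{A}$ has $w\le\delta$, so Lemma \ref{lemma:cond_w_gamma_lu} places each $p_{zy\mid 0}$ in $[L_{zy}(w),U_{zy}(w)]\subseteq[\tilde{l}_{zy},\tilde{u}_{zy}]$ with $\sum_{z,y} p_{zy\mid 0}=1$, i.e.\ $p_{**\mid 0}\in\mathcal{B}$; the degenerate case $w=0$ forces $p_{**\mid 0}=o_{**}$, which lies in $\mathcal{B}$ since $\tilde{l}_{zy}\le o_{zy}\le\tilde{u}_{zy}$. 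For the reverse inclusion, given $p_{**\mid 0}\in\mathcal{B}$ I take $w=\delta$ (or $w=0$ when $\delta=0$): because $\tilde{l}_{zy}=L_{zy}(\delta)$ and $\tilde{u}_{zy}=U_{zy}(\delta)$, the vector $p_{**\mid 0}$ satisfies \eqref{eq:bound_p0_w} at level $\delta$, so Lemma \ref{lemma:cond_w_gamma_lu} supplies a matching $p_{**\mid 1}$ with $(p_{**\mid 0},p_{**\mid 1})\in\mathcal{A}_\delta$, whence $(\delta,p_{**\mid 0},p_{**\mid 1})\in\mathcal{A}$. Thus the projections agree, and \eqref{eq:equi_form_sup} follows.

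The main obstacle is conceptual rather than technical: recognizing that nestedness of the boxes makes the union over all $w\in[0,\delta]$ collapse to the single endpoint $w=\delta$, so that no simultaneous optimization over $w$ and $p_{**\mid 0}$ is needed and the per-$w$ equivalence of Lemma \ref{lemma:cond_w_gamma_lu} can be applied directly at that endpoint. I would also take care with the boundary conventions—$\delta=0$, where both sides reduce to the point $o_{**}$, and $w=\delta=1$, using the stated convention $(o_{zy}-\delta)/(1-\delta):=0$ together with the $w=1$ case of Lemma \ref{lemma:cond_w_gamma_lu}—to confirm that the endpoints introduce no degeneracy.
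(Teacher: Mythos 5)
Your proposal is correct and follows essentially the same route as the paper's proof: both rest on Lemma \ref{lemma:cond_w_gamma_lu}, the monotonicity of the box endpoints in $w$ (which lets the union over $w\in[0,\delta]$ collapse to the endpoint $w=\delta$), the explicit construction $p_{zy\mid 1}=\{o_{zy}-(1-\delta)p_{zy\mid 0}\}/\delta$ for the reverse direction, and the same boundary cases $w=0$ and $\delta=0$. Your framing as a projection equality and your explicit derivative check of the monotonicity (which the paper only asserts) are cosmetic refinements, not a different argument.
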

\begin{proof}[\bf Proof of Lemma \ref{lemma:cond_delta_gamma_lu}]
First, we prove that the left hand side of \eqref{eq:equi_form_sup} is  greater than or equal to the right hand side. 
Let $(w, p_{**\mid 0}, p_{**\mid 1})$ be any element in $\mathcal{A}$. We prove that the corresponding $p_{11\mid 0} p_{00\mid 0}/( p_{10\mid 0}p_{01\mid 0})$ must be greater than or equal to the right hand side of \eqref{eq:equi_form_sup}. We consider the following two cases, depending on whether $w$ is positive. 
\begin{enumerate}[label={(\alph*)}, topsep=1ex,itemsep=-0.3ex,partopsep=1ex,parsep=1ex]
    \item 
    $w>0$. 
    From Lemma \ref{lemma:cond_w_gamma_lu}, we must have $(p_{**\mid 0}, p_{**\mid 1}) \in \mathcal{B}_w$, with $\mathcal{B}_w$ defined the same as in Lemma \ref{lemma:cond_w_gamma_lu}. 
    Consequently, $p_{**\mid 0}$ must satisfy the inequalities in \eqref{eq:bound_p0_w}. 
    Note that the upper bound in \eqref{eq:bound_p0_w} is increasing in $w$ and the lower bound is decreasing in $w$. 
    Given that $w\le \delta$, we must have $\tilde{l}_{zy} \le p_{zy\mid 0} \le \tilde{u}_{zy}$ for $z,y\in \{0,1\}$. 
    Thus, $p_{**\mid 0} \in \mathcal{B}$, and $p_{11\mid 0} p_{00\mid 0}/( p_{10\mid 0}p_{01\mid 0})$ must be greater than or equal to the right hand side of \eqref{eq:equi_form_sup}.
    
    \item 
    $w=0$. By the definition of $A_w$, in this case, $p_{**\mid 0} = \pi_{**}$. 
    We can verify that $\pi_{**}\in \mathcal{B}$. Thus, $p_{**\mid 0} \in \mathcal{B}$, and $p_{11\mid 0} p_{00\mid 0}/( p_{10\mid 0}p_{01\mid 0})$ must be greater than or equal to the right hand side of \eqref{eq:equi_form_sup}.
\end{enumerate}
    
    Second, we prove that the right hand side of \eqref{eq:equi_form_sup} is greater than or equal to the left hand side. 
    Let $p_{**\mid 0}$ be any element in $\mathcal{B}$. 
    Below we consider two cases, depending on whether $\delta$ is positive. 
    \begin{enumerate}[label={(\alph*)}, topsep=1ex,itemsep=-0.3ex,partopsep=1ex,parsep=1ex]
        \item 
        $\delta>0$. 
        Define $p_{zy\mid 1} = \{\pi_{zy} - (1-\delta) p_{zy\mid 0}\}/\delta$ for $z,y=0,1$. 
        From Lemma \ref{lemma:cond_w_gamma_lu}, we can know that $(p_{**\mid 0}, p_{**\mid 1}) \in \mathcal{B}_{\delta} = \mathcal{A}_{\delta}$. 
        Thus, $(\delta, p_{**\mid 0}, p_{**\mid 1}) \in \mathcal{A}$, and $p_{11\mid 0} p_{00\mid 0}/( p_{10\mid 0}p_{01\mid 0})$ must be greater than or equal to the left hand side of \eqref{eq:equi_form_sup}.
        
        \item 
        $\delta=0$. We can verify that $\tilde{l}_{**} = \tilde{u}_{**} = \pi_{**}$. 
        Thus, $p_{**\mid 0} = \pi_{**}$. 
        Obviously, $(0, \pi_{**}, \pi_{**}) \in \mathcal{A}$. 
        Thus, $p_{11\mid 0} p_{00\mid 0}/( p_{10\mid 0}p_{01\mid 0})$ must be greater than or equal to the left hand side of \eqref{eq:equi_form_sup}.
    \end{enumerate}
    
    From the above, Lemma \ref{lemma:cond_delta_gamma_lu} holds. 
\end{proof}

\begin{lemma}\label{lemma:optim}
Consider any 
$\underline{c}_{**}$ and $\overline{c}_{**}$ satisfying that 
$0\le \underline{c}_{zy} \le \overline{c}_{zy}$ for all $z,y\in \{0,1\}$, 
$\sum_{z,y} \underline{c}_{zy} \le 1$ and $\sum_{z,y} \overline{c}_{zy} \ge 1$. 
        Define 
        \begin{align*}
            m = \inf_{c_{**}\in \mathcal{C}} \frac{c_{11}c_{00}}{c_{10}c_{01}} 
            \quad 
            \text{with}
            \quad 
            \mathcal{C} = \left\{ c_{**}:  \underline{c}_{zy} \le c_{zy} \le \overline{c}_{zy} \textup{ for all } z,y=0,1, \textup{ and } \sum_{z,y} c_{zy} = 1 \right\}. 
        \end{align*}
        If $\underline{c}_{11}+\underline{c}_{00}+\overline{c}_{01} + \overline{c}_{10}\geq 1$, 
        then 
        $m = q_{11} q_{00} / (q_{10} q_{01})$ with
        $q_{11} = \underline{c}_{11}$, $q_{00} = \underline{c}_{00}$, $q_{01} = 1 - q_{11} - q_{00} - q_{10}$, and 
        \begin{align*}
                q_{10}
                = 
                \min\left\{ \max\left\{ \underline{c}_{10}, (1 - \underline{c}_{11}-\underline{c}_{00}-\overline{c}_{01}), 
                \frac{1-\underline{c}_{11}-\underline{c}_{00}}{2} \right\}, 
                \overline{c}_{10}, 
                (1 - \underline{c}_{11}-\underline{c}_{00}-\underline{c}_{01})
                \right\};
        \end{align*}
        otherwise, 
        $m = \min \{ q_{11}^{(1)} q_{00}^{(1)} / (q_{10}^{(1)} q_{01}^{(1)}), q_{11}^{(2)} q_{00}^{(2)} / (q_{10}^{(2)} q_{01}^{(2)}) \}$ with 
        $q_{10}^{(j)} = \overline{c}_{10}$, $q_{01}^{(j)} = \overline{c}_{01}$, 
        $q_{00}^{(j)} = 1 -q_{10}^{(j)} - q_{01}^{(j)} - q_{11}^{(j)}$, and 
        \begin{align*}
            q_{11}^{(j)} =  
            \begin{cases}
                \max\left\{ \underline{c}_{11}, 1-\overline{c}_{10}-\overline{c}_{01}-\overline{c}_{00} \right\}, & \text{if } j = 1, \\
                \min\left\{ \overline{c}_{11}, 1-\overline{c}_{10}-\overline{c}_{01}-\underline{c}_{00} \right\}, & \text{if } j = 2. 
            \end{cases}
        \end{align*}
\end{lemma}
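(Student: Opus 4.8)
The plan is to solve the box-and-simplex constrained minimization of the odds-ratio functional $f(c_{**}) = c_{11}c_{00}/(c_{10}c_{01})$ by first pinning down the structure of any minimizer and then reducing to two elementary one-dimensional problems. First I would note that the hypotheses $\sum_{z,y}\underline{c}_{zy}\le 1 \le \sum_{z,y}\overline{c}_{zy}$ guarantee that $\mathcal{C}$ is nonempty and compact, so (treating the denominators as positive, with the degenerate case where some coordinate is forced to $0$ handled separately, since then $f$ is $0$ or $+\infty$) a minimizer $c_{**}^\star$ exists.

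The key structural step is an \emph{exchange argument}. Because $f$ is strictly increasing in the ``numerator'' coordinates $c_{11},c_{00}$ and strictly decreasing in the ``denominator'' coordinates $c_{10},c_{01}$, transferring a small mass $\varepsilon>0$ from a numerator coordinate to a denominator coordinate (which preserves the sum constraint) strictly decreases $f$ whenever it is feasible. Hence at $c_{**}^\star$ no such transfer can be feasible: for each pairing of a numerator with a denominator coordinate, either the numerator coordinate sits at its lower bound or the denominator coordinate sits at its upper bound. A short case check over these four implications yields the dichotomy that at the minimizer either $(c_{11},c_{00})=(\underline{c}_{11},\underline{c}_{00})$ (numerator minimized) or $(c_{10},c_{01})=(\overline{c}_{10},\overline{c}_{01})$ (denominator maximized).

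Next I would resolve which configuration is active using the stated threshold. The numerator-minimized configuration is feasible precisely when the residual mass $1-\underline{c}_{11}-\underline{c}_{00}$ can be split across $c_{10},c_{01}$ within their boxes, i.e.\ when $\underline{c}_{11}+\underline{c}_{00}+\overline{c}_{10}+\overline{c}_{01}\ge 1$ (the companion lower inequality being exactly $\sum\underline{c}_{zy}\le 1$); symmetrically, the denominator-maximized configuration is feasible exactly when this sum is $\le 1$. Combined with the dichotomy, the minimizer therefore lies in the first configuration when the threshold holds and in the second otherwise, the two coinciding at equality. It then remains to solve the reduced one-dimensional problems. Under the first configuration $c_{10}+c_{01}$ is fixed at $S=1-\underline{c}_{11}-\underline{c}_{00}$, and I maximize the concave product $c_{10}(S-c_{10})$ over the interval cut out by the box constraints; its maximizer is the projection of $S/2$ onto that interval, $q_{10}=\min\{\max\{\underline{c}_{10},\,S-\overline{c}_{01},\,S/2\},\,\overline{c}_{10},\,S-\underline{c}_{01}\}$, which after substituting $S$ is exactly the displayed formula. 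Under the second configuration $c_{11}+c_{00}$ is fixed and I instead \emph{minimize} a concave product, whose minimum over an interval is attained at an endpoint, producing the two candidates $q_{11}^{(1)},q_{11}^{(2)}$ and hence $m=\min\{\cdot,\cdot\}$.

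The main obstacle I anticipate is bookkeeping rather than conceptual: making the exchange argument watertight in the degenerate cases (coordinates pinned to $0$, where $f$ is nonsmooth or infinite), and verifying that the nested $\min/\max$ expressions reproduce exactly the box-truncated projection and endpoint formulas once $S$ is substituted, in particular checking that the reduced interval is nonempty ($L\le U$). This last point is precisely where the hypotheses $\sum\underline{c}_{zy}\le 1\le\sum\overline{c}_{zy}$ together with the threshold condition get used.
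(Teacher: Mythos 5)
Your proposal is correct and takes essentially the same route as the paper's proof: both reduce the problem by a mass-transfer/monotonicity argument to the two boundary configurations (numerator coordinates $c_{11},c_{00}$ pinned at their lower bounds, or denominator coordinates $c_{10},c_{01}$ pinned at their upper bounds), decide between them by feasibility via the threshold $\underline{c}_{11}+\underline{c}_{00}+\overline{c}_{10}+\overline{c}_{01}\gtrless 1$, and then solve the one-dimensional concave problems exactly as you describe (projection of the midpoint onto the box-truncated interval for the maximization, interval endpoints for the minimization), including the consistency check at equality. The only cosmetic difference is that the paper's exchange step constructs, for any feasible point outside the boundary set, a dominating point inside it—so it works with the infimum directly and never needs a minimizer to exist—whereas you invoke compactness and argue locally at an optimizer.
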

\begin{proof}[\bf Proof of Lemma \ref{lemma:optim}]
We first prove that the infimum of $c_{11} c_{00}/( c_{10}c_{01})$ over $c_{**} \in\mathcal{C}$ is the same as that over $c_{**} \in\mathcal{C}'$ with 
\begin{align*}
    \mathcal{C}' \equiv \mathcal{C}\cap\left\{
    c_{**}: c_{11}+c_{00} = \underline{c}_{11} + \underline{c}_{00} \text{ or } c_{10}+c_{10} = \overline{c}_{01}+\overline{c}_{10}
    \right\}. 
\end{align*}
Let $c_{**}$ be any element in $\mathcal{C} \setminus \mathcal{C}'$. 
By definition, we must have $c_{11}+c_{00} > \underline{c}_{00} + \underline{c}_{11}$ and  $c_{10}+c_{10} < \overline{c}_{01}+\overline{c}_{10}$. 
By decreasing $c_{11}$ and $c_{00}$ and increasing $c_{10}$ and $c_{10}$, 
we can then find $\tilde{c}_{**} \in \mathcal{C}'$ such that, for $z,y\in \{0,1\}$, 
\begin{align*}
    \underline{c}_{zy} \le \tilde{c}_{zy} \le c_{zy} \le \overline{c}_{zy} \quad (\text{if } z=y); 
    \qquad
    \underline{c}_{zy}  \le c_{zy} \le \tilde{c}_{zy} \le \overline{c}_{zy} \quad (\text{if } z\ne y).  
\end{align*}
Consequently, 
$c_{11} c_{00}/( c_{10}c_{01}) \ge \tilde{c}_{11} \tilde{c}_{00}/( \tilde{c}_{10} \tilde{c}_{01})$. 
Therefore, 
the infimum of $c_{11} c_{00}/( c_{10}c_{01})$ over $c_{**} \in\mathcal{C}$ must be the same as that over $c_{**} \in\mathcal{C}'$. 

We then consider the infimum of $c_{11} c_{00}/( c_{10}c_{01})$ over $c_{**} \in\mathcal{C}'$. We consider the following three cases depending on whether $\underline{c}_{11}+\underline{c}_{00}+\overline{c}_{10}+\overline{c}_{01}$ is greater than, less than, or equal to $1$. 

\begin{enumerate}[label={(\alph*)}, topsep=1ex,itemsep=-0.3ex,partopsep=1ex,parsep=1ex]
    \item 
    {
        Consider the case where $\underline{c}_{11}+\underline{c}_{00}+\overline{c}_{10}+\overline{c}_{01} = 1$. 
        Let $(q^{(\text{a})}_{11}, q^{(\text{a})}_{00}, q^{(\text{a})}_{10}, q^{(\text{a})}_{01}) = (\underline{c}_{11}, \underline{c}_{00}, \overline{c}_{10}, \overline{c}_{01})$. Obviously, $q^{(\text{a})}_{**}\in \mathcal{C}'$. Moreover, 
        \begin{align*}
            \inf_{c_{**} \in \mathcal{C}'}
            \frac{c_{11} c_{00}}{c_{10}c_{01}}
            \ge 
            \frac{\underline{c}_{11}\underline{c}_{00}}{\overline{c}_{10}\overline{c}_{01}} = \frac{q_{11}^{(\text{a})} q_{00}^{(\text{a})}}{
            q_{10}^{(\text{a})} q_{01}^{(\text{a})} 
            }.
        \end{align*}
        Thus, the infimum of $c_{11} c_{00}/( c_{10}c_{01})$ over $c_{**} \in\mathcal{C}$ is achieved at $q_{**}^{(\text{a})}$. 
    }

    \item 
    {
        Consider the case where $\underline{c}_{11}+\underline{c}_{00}+\overline{c}_{10}+\overline{c}_{01} > 1$. 
        Note that, for any $c_{**}\in \mathcal{C}'$,
        \begin{align*}
            c_{10} + c_{01} = 1 - c_{11} - c_{00} 
            \le 1 - \underline{c}_{11} - \underline{c}_{00} < \overline{c}_{10} + \overline{c}_{01}.
        \end{align*}
        We then have $c_{**}\in \mathcal{C}'$ if and only if $c_{11} = \underline{c}_{11}$, $c_{00} = \underline{c}_{00}$, $c_{01} = 1 - \underline{c}_{11} - \underline{c}_{00} -c_{10}$, and 
        \begin{align*}
            \underline{c}_{10}'\equiv \max\left\{ \underline{c}_{10}, \ 1 - \underline{c}_{11} - \underline{c}_{00} - \overline{c}_{01} \right\}
            \le 
            c_{10} \le \min\left\{ \overline{c}_{10}, \  1 - \underline{c}_{11} - \underline{c}_{00} - \underline{c}_{01} \right\} \equiv \overline{c}_{10}',
        \end{align*}
        From the fact that $\underline{c}_{zy}\leq \overline{c}_{zy}$, $\sum_{z,y}\underline{c}_{zy}\le 1$, and $\underline{c}_{11}+\underline{c}_{00}+\overline{c}_{10}+\overline{c}_{01} > 1$
        in this case, 
        we can verify that $\underline{c}_{10}'\le \overline{c}_{10}'$, indicating that $\mathcal{C}'\ne \emptyset$. 
        Consequently, 
        \begin{align*}
            \inf_{c_{**} \in \mathcal{C}'}
            \frac{c_{11} c_{00}}{c_{10}c_{01}}
            = 
            \inf_{c_{10} \in [\underline{c}_{10}', \overline{c}_{10}']}
            \frac{\underline{c}_{11} \underline{c}_{00}}{c_{10}(1-\underline{c}_{11}-\underline{c}_{00}-c_{10})}
        \end{align*}
        Because $c_{10}(1-\underline{c}_{11}-\underline{c}_{00}-c_{10})$ is a concave function of $c_{10}$, its maximum on $[\underline{c}_{10}', \overline{c}_{10}']$ must be achieved at 
        \begin{align*}
            q_{10}^{(\text{b})} & = 
            \begin{cases}
                (1-\underline{c}_{11}-\underline{c}_{00})/2, & \text{if }
                \underline{c}_{10}' \le (1-\underline{c}_{11}-\underline{c}_{00})/2
                \le \overline{c}_{10}', \\
                \underline{c}_{10}', & \text{if } (1-\underline{c}_{11}-\underline{c}_{00})/2 < \underline{c}_{10}',\\
                \overline{c}_{10}', & \text{if } (1-\underline{c}_{11}-\underline{c}_{00})/2 > \overline{c}_{10}'.
            \end{cases}
            \\
            & = 
            \min\{ \max\{
            \underline{c}_{10}', 
            (1-\underline{c}_{11}-\underline{c}_{00})/2
            \}, \overline{c}_{10}' \}.
        \end{align*}
        Let $q^{(\text{b})}_{11} = \underline{c}_{11}, q^{(\text{b})}_{00} = \underline{c}_{00}$ and $q^{(\text{b})}_{01} = 1 - q^{(\text{b})}_{11} - q^{(\text{b})}_{00} - q^{(\text{b})}_{10}$. 
        From the discussion before, the infimum of $c_{11} c_{00}/( c_{10}c_{01})$ over $c_{**} \in\mathcal{C}$ is achieved at $q^{(\text{b})}_{**}\in \mathcal{C}'$. 
    }

    \item 
        Consider the case where $\underline{c}_{11}+\underline{c}_{00}+\overline{c}_{10}+\overline{c}_{01} < 1$. 
        Note that, for any $c_{**}\in \mathcal{C}'$,
        \begin{align*}
            c_{11} + c_{00} = 1 - c_{10} - c_{01} 
            \ge 1 - \overline{c}_{10} - \overline{c}_{01} > \underline{c}_{11}+\underline{c}_{00}. 
        \end{align*}

        It follows that $c_{**}\in \mathcal{C}'$ if and only if $c_{10} = \overline{c}_{10}$, $c_{01} = \overline{c}_{01}$, $c_{00} = 1 - \overline{c}_{10} - \overline{c}_{01} -c_{11}$, and 
        \begin{align*}
            \underline{c}_{11}' \equiv \max\left\{ \underline{c}_{11}, \ 1 - \overline{c}_{10} - \overline{c}_{01} - \overline{c}_{00} \right\}
            \le 
            c_{11} \le 
            \min\left\{ \overline{c}_{11}, \ 1 - \overline{c}_{10} - \overline{c}_{01} - \underline{c}_{00} \right\} \equiv \overline{c}_{11}',
        \end{align*}
        Again, from the fact that  $\underline{c}_{zy}\leq \overline{c}_{zy}$, $\sum_{z,y}\overline{c}_{zy}\ge 1$, and $\underline{c}_{11}+\underline{c}_{00}+\overline{c}_{10}+\overline{c}_{01} < 1$ in this case, we can verify that  $\underline{c}_{11}' \le \overline{c}_{11}'$, indicating that $\mathcal{C}'\ne \emptyset$. 
        Consequently, 
        \begin{align*}
            \inf_{c_{**} \in \mathcal{C}'}
            \frac{c_{11} c_{00}}{c_{10}c_{01}}
            & = 
            \inf_{c_{11} \in [\underline{c}_{11}', \overline{c}_{11}']}
            \frac{c_{11} (1 - \overline{c}_{10} - \overline{c}_{01} - c_{11})}{\overline{c}_{10} \overline{c}_{01}}. 
        \end{align*}
        Because $c_{11} (1 - \overline{c}_{10} - \overline{c}_{01} - c_{11})$ is a concave function of $c_{11}$, its minimum on $[\underline{c}_{11}', \overline{c}_{11}']$ must be achieved at the boundary. 
        From the discussion before, the infimum of $c_{11} c_{00}/( c_{10}c_{01})$ over $c_{**} \in\mathcal{C}$ must be achieved at $q^{(\text{c})}_{**}\in \mathcal{C}'$, with $q^{(\text{c})}_{10} = \overline{c}_{10}, q^{(\text{c})}_{01} = \overline{c}_{01}$, $q^{(\text{c})}_{00} = 1 - \overline{c}_{10} - \overline{c}_{01} - q_{11}$, and $q^{(\text{c})}_{11} = \underline{c}_{11}'$ or $\overline{c}_{11}'$. 
\end{enumerate}
From the above, to prove Lemma \ref{lemma:optim}, it suffices to verify that $q^{(\text{b})}_{**} = q^{(\text{a})}_{**}$ when $\underline{c}_{11}+\underline{c}_{00}+\overline{c}_{10}+\overline{c}_{01} = 1$. 
By definition, $q^{(\text{b})}_{11} = \underline{c}_{11} = q^{(\text{a})}_{11}, q^{(\text{b})}_{00} = \underline{c}_{00} = q^{(\text{a})}_{00}$, and 
\begin{align*}
    q_{10}^{(\text{b})} & = 
    \min\{ \max\{
    \underline{c}_{10}, \ 1 - \underline{c}_{11} - \underline{c}_{00} - \overline{c}_{01}, \ 
    (1-\underline{c}_{11}-\underline{c}_{00})/2
    \}, \ \overline{c}_{10}, \  1 - \underline{c}_{11} - \underline{c}_{00} - \underline{c}_{01} \}
    \\
    & = 
    \min\{ \max\{
    \underline{c}_{10}, \ \overline{c}_{10}, \ 
    (1-\underline{c}_{11}-\underline{c}_{00})/2
    \}, \ \overline{c}_{10}, \  \overline{c}_{10} +  \overline{c}_{01} - \underline{c}_{01} \}\\
    & = 
    \min\{ \max\{
    \overline{c}_{10}, \ 
    (1-\underline{c}_{11}-\underline{c}_{00})/2
    \}, \ \overline{c}_{10} \}
    = \overline{c}_{10}, 
\end{align*}
where the first equality holds by definition, the second equality holds due to the condition that $\underline{c}_{11}+\underline{c}_{00}+\overline{c}_{10}+\overline{c}_{01} = 1$, 
and the last equality can be easily verified. 
Therefore, Lemma \ref{lemma:optim} holds. 
\end{proof}

\begin{proof}[\bf Proof of Theorem \ref{thm:cor_bound_additional}]
From Lemmas \ref{lemma:useful} and \ref{lemma:cond_w_gamma_lu}--\ref{lemma:optim}, 
to prove Theorem \ref{thm:cor_bound_additional}, it suffices to verify that $\sum_{z,y} \tilde{l}_{zy} \leq 1$ and $\sum_{z,y} \tilde{u}_{zy}\geq 1$. 

We first prove that  $\sum_{z,y} \tilde{u}_{zy}\geq 1$. 
If any of the $\tilde{u}_{zy}$s takes value $1$, then this inequality holds automatically; 
otherwise, 
$$\sum_{z,y} \tilde{u}_{zy} = \sum_{z,y} \frac{\Gamma \pi_{zy}}{\delta + \Gamma(1-\delta)} = \frac{\Gamma}{\delta + \Gamma(1-\delta)} \geq 1,$$
where the last inequality holds since $\Gamma\ge 1$. 

We then prove that $\sum_{z,y} \tilde{l}_{zy} \leq 1$. 
Suppose that 
$\tilde{l}_{zy} = \pi_{zy}/\{\delta \Gamma + (1-\delta)\}$
for $(z,y)\in \mathcal{I}_1$,  
and $\tilde{l}_{zy} =  (\pi_{zy}-\delta)/(1-\delta)$ for $(z,y)\in \mathcal{I}_2$. If $|\mathcal{I}_2|\geq 1$, then 
\begin{align*}
    \sum_{z,y} \tilde{l}_{zy}
    & 
    = \frac{\sum_{ (z,y)\in \mathcal{I}_1  }\pi_{zy}}{\delta \Gamma + (1-\delta)} + \frac{\sum_{(z,y)\in \mathcal{I}_2} \pi_{zy} - |\mathcal{I}_2|\delta}{1-\delta}
    =
    \frac{\sum_{ (z,y)\in \mathcal{I}_1  }\pi_{zy}}{\delta \Gamma + (1-\delta)} + \frac{\sum_{(z,y)\in \mathcal{I}_2} \pi_{zy}}{1-\delta}
    - 
    \frac{|\mathcal{I}_2|\delta}{1-\delta}
    \\
    & = 
    \frac{\delta \Gamma \sum_{(z,y) \in \mathcal{I}_2} \pi_{zy}+(1-\delta) \sum_{z,y} \pi_{zy}}{\{\delta \Gamma+(1-\delta)\}(1-\delta)} - \frac{|\mathcal{I}_2|\delta}{1-\delta}
    \le \frac{ \delta \Gamma + (1-\delta)}{\{\delta \Gamma+(1-\delta)\}(1-\delta)}- \frac{|\mathcal{I}_2|\delta}{1-\delta}
    \\
    & = \frac{1-|\mathcal{I}_2|\delta}{1-\delta}\leq 1, 
\end{align*}
where the second last inequality holds because $\sum_{(z,y) \in \mathcal{I}_2} \pi_{zy}\le \sum_{z,y} \pi_{zy} = 1$; 
otherwise, $\mathcal{I}_2 = \emptyset$, and 
    $$\sum_{z,y} \tilde{l}_{zy} = \sum_{z,y} \frac{\pi_{zy}}{\delta \Gamma + (1-\delta)} = \frac{1}{\delta \Gamma + (1-\delta)} \leq 1.$$
From the above, Theorem \ref{thm:cor_bound_additional} holds. 
\end{proof}

\subsection{Proof of Theorem \ref{thm:sen_3_constr}}

\begin{proof}[\bf Proof of Theorem \ref{thm:sen_3_constr}]
From Lemma \ref{lemma:useful}, the discussion of its implication and Lemma \ref{lemma: COR p_xy}, to derive the sharp bounds of $\COR_0 = p_{11\mid 0} p_{00\mid 0}/(p_{10\mid 0} p_{01\mid 0})$, it suffices to search over all possible $w\equiv P(U=1\mid T=1)$ and $p_{zy\mid u}$s defined in \eqref{eq:pzy_u} that are coherent with the observed data distribution and satisfy Assumptions \ref{asmp:bound_u_not_0}, \ref{asmp:bound_density} and \ref{asmp:effect_heter}. 
These equivalently impose the following constraints on $w$ and $p_{zy\mid u}$s: 
\begin{align*}
    & 0\le w \le 1, \quad 0\le p_{zy\mid u} \le 1 \text{ for all } z,y,u\in \{0,1\}, & \text{(validity of probabilities)} \\
    & \pi_{zy} = p_{zy\mid 0}(1-w) + p_{zy\mid 1}w \text{ for all } z,y\in \{0,1\}, & \text{(coherent with observed data distribution)}
    \\
    & 0\le w \le \delta, & \text{(Assumption \ref{asmp:bound_u_not_0})}\\
    & 1/\Gamma \le p_{zy\mid 1}/p_{zy\mid 0} \le \Gamma \text{ for all } z,y\in \{0,1\}, & \text{(Assumption \ref{asmp:bound_density})}\\
    & 1/\xi \le \frac{p_{11\mid 0} p_{00\mid 0}}{p_{10\mid 0} p_{01\mid 0}}/\frac{p_{11\mid 1} p_{00\mid 1}}{p_{10\mid 1} p_{01\mid 1}} \le \xi, & \text{(Assumption \ref{asmp:effect_heter})}
\end{align*}
where the last equation uses Lemma \ref{lemma: COR p_xy}. 
Thus, the sharp bounds of $\COR_0$ under the proposed sensitivity analysis with Assumptions \ref{asmp:bound_u_not_0}, \ref{asmp:bound_density} and \ref{asmp:effect_heter} are the minimum and maximum values of $p_{11\mid 0} p_{00\mid 0}/(p_{10\mid 0} p_{01\mid 0})$ over all possible $w$ and  $p_{zy\mid u}$s subject to the above constraints. 
It is not difficult to verify that this optimization is equivalent to the quadratic programming problem in \eqref{eq:qp}--\eqref{eq:qp_objective}. 
Therefore, Theorem \ref{thm:sen_3_constr} holds. 
\end{proof}

\subsection{Proofs of Theorems \ref{thm:cat_ZYU} and \ref{thm:cat_ZYU_supp}}
Because 
Theorem \ref{thm:cat_ZYU} is a special case of Theorem \ref{thm:cat_ZYU_supp} when \(\xi=\infty\), 
we only prove Theorem \ref{thm:cat_ZYU_supp}. 
We need the following two lemmas. 

\begin{lemma}\label{lemma: COR p_xy_cat_ZY_U}
    Consider the case with categorical exposure, outcome and general unmeasured confounder. 
    Recall the definition of $\COR_u$ in \eqref{eq:COR} and \eqref{eq:COR_cat} and $\tilde{p}_{zy\mid u}$s in \eqref{eq:p_tilde}. 
    Under Assumption \ref{asmp:U}, we have
    \begin{equation*}
        \COR_{u} = \frac{\tilde{p}_{11\mid u}\tilde{p}_{00\mid u}}{\tilde{p}_{10\mid u}\tilde{p}_{01\mid u}}.
    \end{equation*}
\end{lemma}

\begin{proof}[\bf Proof of Lemma \ref{lemma: COR p_xy_cat_ZY_U}]
Following the same steps as the proof of Lemma \ref{lemma: COR p_xy}, we have  
    \begin{align}\label{eq:cat_ZY_bin_U_proof}
    \COR_{u} 
    & = 
    \frac{\P(Z=1, Y=1\mid T = 1, U = u)/\P(Z=0, Y = 1\mid T=1, U = u)}{\P(Z=1, Y=0\mid T=1, U = u)/\P(Z=0, Y=0 \mid T= 1, U = u)}.
\end{align}
Note that, for any $z,y\in \{0,1\}$, 
\begin{align*}
    \tilde{p}_{zy\mid u}
    & = 
    \P(Z=z, Y=y\mid U=u, \tilde{T}=1)\\
    & 
    = \P(Z=z, Y=y\mid U=u, Z\in \{0,1\}, Y\in \{0,1\}, T=1)
    \\
    & = 
    \frac{\P(Z=z, Y=y \mid  U=u,  T=1)}{\P( Z\in \{0,1\}, Y\in \{0,1\} \mid U=u,T=1)}.
\end{align*}
Dividing each term in \eqref{eq:cat_ZY_bin_U_proof} by $\P( Z\in \{0,1\}, Y\in \{0,1\} \mid U=u,T=1)$, we then have 
$\COR_u = \tilde{p}_{11\mid u}\tilde{p}_{00\mid u}/(\tilde{p}_{10\mid u}\tilde{p}_{01\mid u}).$
Therefore, Lemma \ref{lemma: COR p_xy_cat_ZY_U} holds. 
\end{proof}

\begin{proof}[\bf Proof of Theorem \ref{thm:cat_ZYU_supp}]
From Lemma \ref{lemma:useful}, its implication, and Lemma \ref{lemma: COR p_xy_cat_ZY_U}, 
to find the sharp bounds on $\COR_0 = \tilde{p}_{11\mid 0}\tilde{p}_{00\mid 0}/(\tilde{p}_{10\mid 0}\tilde{p}_{01\mid 0})$, it suffices to search over all possible conditional distributions of $U\mid \tilde{T}=1$ and $(Z, Y)\mid U, \tilde{T}=1$ that are coherent with the observed data distribution and satisfy Assumptions \ref{asmp:categorical_ZY_not_0}, \ref{asmp:categorical_ZY_density_ratio} and \ref{asmp:categorial_U_effect_heter}.

First, consider any specification for the conditional distributions of $U\mid \tilde{T}=1$ and $(Z, Y) \mid U, \tilde{T}=1$
for 
categorical $(U, Z, Y)$
that are coherent with the observed data distribution $\tilde{\pi}_{xy}$s and satisfy Assumptions \ref{asmp:categorical_ZY_not_0}, \ref{asmp:categorical_ZY_density_ratio}  and \ref{asmp:categorial_U_effect_heter}. 
Let $\tilde{\omega}$ be the probability measure induced by the conditional distribution of $U$ given $\tilde{T}=1$, 
and let 
$\tilde{p}_{zy\mid u} = \Pr(Z=z, Y=y \mid U=u, \tilde{T}=1)$ for all $z,y\in \{0,1\}$ and $u\in \mathcal{U}$. 
Define 
\begin{align}\label{eq:cat_to_bin}
    \check{w}_0 & \equiv \tilde{\omega}(\{0\}) = \Pr(U=0\mid \tilde{T}=1), \quad \check{w}_1 \equiv 1-\tilde{\omega}(\{0\}),  
    \nonumber
    \\
    \check{p}_{zy\mid 0} &\equiv \tilde{p}_{zy\mid 0}, 
    \quad 
    \check{p}_{zy\mid 1} \equiv 
    \begin{cases}
        \frac{\tilde{\pi}_{zy}-\tilde{p}_{zy\mid 0} \cdot \tilde{\omega}(\{0\}) }{1-\tilde{\omega}(\{0\})} & \tilde{\omega}(\{0\})<1\\
        \tilde{p}_{zy\mid 0} & \tilde{\omega}(\{0\})=1, 
    \end{cases}
    \qquad (z, y\in \{0,1\}). 
\end{align}
Below we show that $\check{w}_u$s and 
$\check{p}_{zy\mid u}$s are possible specifications for the conditional distributions of $U\mid T=1$ and $(Z, Y) \mid U, T=1$ with 
binary $(U, Z, Y)$
that are coherent with the observed data distribution $\pi_{zy}=\tilde{\pi}_{zy}$s and satisfy Assumptions \ref{asmp:bound_u_not_0}, \ref{asmp:bound_density} and \ref{asmp:effect_heter} with the same values of $(\delta, \Gamma, \xi)$. 
We consider only the case where $\tilde{\omega}(\{0\})< 1$. It is not difficult to verify that the following results hold when $\tilde{\omega}(\{0\}) = 1$. 
\begin{enumerate}[label={(\alph*)}, topsep=1ex,itemsep=-0.3ex,partopsep=1ex,parsep=1ex]
    \item By the construction in \eqref{eq:cat_to_bin}, we can verify that 
    $\tilde{\pi}_{zy} = \check{p}_{zy\mid 0}\check{w}_0 + \check{p}_{zy\mid 1} \check{w}_1$ for all $z,y\in \{0,1\}$. 
    Moreover, $\check{w}_u$s and $\check{p}_{zy\mid u}$s are all nonnegative, $\check{w}_1+\check{w}_0 = 1$, 
    and $\sum_{z=0}^1 \sum_{y=0}^1 \check{p}_{zy\mid u} = 1$ for $u=0,1$. These imply that  $\check{w}_u$s and 
    $\check{p}_{zy\mid u}$s are possible specifications for the conditional distributions of $U\mid T=1$ and $(Z, Y) \mid U, T=1$ coherent with the observed data distribution. 
    
    \item From Assumption \ref{asmp:categorical_ZY_not_0}, $\check{w}_1=1-\tilde{w}(\{0\}) \le \delta$. Thus, Assumption \ref{asmp:bound_u_not_0} holds. 

    \item From Assumption \ref{asmp:categorical_ZY_density_ratio} and by definition, for any $z,y \in \{0,1\}$, 
    \begin{align*}
        \frac{\check{p}_{zy\mid 1}}{\check{p}_{zy\mid 0}}
        & = 
        \frac{\tilde{\pi}_{zy} - \tilde{p}_{zy\mid 0} \tilde{w}(\{0\})}{\tilde{p}_{zy\mid 0} \{1-\tilde{w}(\{0\}) \} }
        = 
        \frac{\int_{\mathcal{U}\setminus \{0\}}\tilde{p}_{zy\mid u} \text{d}\tilde{w}(u)}{\tilde{p}_{zy\mid 0}\int_{\mathcal{U}\setminus \{0\}}\text{d}\tilde{w}(u)}
        = 
        \frac{\int_{\mathcal{U}\setminus \{0\}}\tilde{p}_{zy\mid u} \text{d}\tilde{w}(u)}{\int_{\mathcal{U}\setminus \{0\}} \tilde{p}_{zy\mid 0} \text{d}\tilde{w}(u)}
        \\
        & \in 
        \left[ \inf_{u\in \mathcal{U}\setminus \{0\}}
        \frac{\tilde{p}_{zy\mid u}}{\tilde{p}_{zy\mid 0}}, \ 
        \sup_{u\in \mathcal{U}\setminus \{0\}} \frac{\tilde{p}_{zy\mid u}}{\tilde{p}_{zy\mid 0}} \right]
        \subset [1/\Gamma, \Gamma]. 
    \end{align*}
    Thus, Assumption \ref{asmp:bound_density} holds.

    \item By definition and the law of total probability, 
    \begin{align*}%
    \tilde{p}_{zy\mid \ne 0}
    & = \Pr(Z=z, Y=y\mid U \ne 0, \tilde{T}=1) 
    \\
    & = 
    \frac{
    \Pr(Z=z, Y=y\mid \tilde{T}=1) 
    - 
    \Pr(U = 0 \mid \tilde{T}=1) \cdot \Pr(Z=z, Y=y\mid U = 0, \tilde{T}=1)
    }{\Pr(U\ne 0 \mid \tilde{T}=1)}
    \\
    & = \frac{\tilde{\pi}_{zy}-\tilde{w}(\{0\}) \cdot \tilde{p}_{zy\mid 0}}{1-\tilde{w}(\{0\})} = \check{p}_{zy\mid 1}.  
    \end{align*}
    From Assumption \ref{asmp:categorial_U_effect_heter}, we then have 
    \begin{align*}
        \frac{\check{p}_{11\mid 1} \check{p}_{00\mid 1}}{\check{p}_{10\mid 1} \check{p}_{01\mid 1}}/\frac{\check{p}_{11\mid 0} \check{p}_{00\mid 0}}{\check{p}_{10\mid 0} \check{p}_{01\mid 0}}
        & = 
        \frac{\tilde{p}_{11\mid \ne 0} \tilde{p}_{00\mid \ne 0}}{\tilde{p}_{10\mid \ne 0} \tilde{p}_{01\mid \ne 0}}/\frac{\tilde{p}_{11\mid 0} \tilde{p}_{00\mid 0}}{\tilde{p}_{10\mid 0} \tilde{p}_{01\mid 0}}
        \in [1/\xi, \xi]. 
    \end{align*}
    Thus, Assumption \ref{asmp:effect_heter} holds. 
\end{enumerate}
Moreover, 
from Lemmas \ref{lemma: COR p_xy} and \ref{lemma: COR p_xy_cat_ZY_U}, 
the specification for the binary case leads to the same causal effect $\check{p}_{11\mid 0} \check{p}_{00\mid 0}/(\check{p}_{10\mid 0} \check{p}_{01\mid 0}) = \tilde{p}_{11\mid 0} \tilde{p}_{00\mid 0}/(\tilde{p}_{10\mid 0} \tilde{p}_{01\mid 0})$ as the original one for the categorical case.

Second, consider any specification for the conditional distributions of $U\mid T=1$ and $(Z, Y) \mid U, T=1$
for binary $(U, Z, Y)$ that are coherent with the observed data distribution $\pi_{zy}$s and satisfy Assumptions \ref{asmp:bound_u_not_0}, \ref{asmp:bound_density} and \ref{asmp:effect_heter}. 
Let $\check{w}_u = \Pr(U=u\mid T=1)$ and 
$\check{p}_{zy\mid u} = \Pr(Z=z, Y=y \mid U=u, T=1)$ for all $z,y, u\in \{0,1\}$. 
Without loss of generality, we assume $1\in \mathcal{U}$, 
define the following probability measure $\tilde{\omega}$ on $\mathcal{U}$: 
\begin{align*}
    \tilde{\omega}(\{u\}) = \check{w}_u \text{ for } u \in \{0,1\}, 
    \quad \text{and} \quad 
    \tilde{\omega}(\mathcal{U}\setminus\{0,1\}) = 0,
\end{align*}
and define further 
\begin{align*}
    & \tilde{p}_{zy\mid u} = \check{p}_{zy\mid u} \text{ for } u=0,1, \quad \text{and} \quad  \tilde{p}_{zy\mid u} = \check{p}_{zy\mid 0} \text{ for } u \in \mathcal{U}\setminus\{0,1\}. 
\end{align*}
We can straightforwardly verify that $\tilde{w}$ and 
$\tilde{p}_{zy\mid u}$s are possible specifications for the conditional distributions of $U\mid \tilde{T}=1$ and $(Z, Y) \mid U, \tilde{T}=1$ with 
categorical $(U, Z, Y)$
that are coherent with the observed data distribution $\tilde{\pi}_{zy} = \pi_{zy}$s and satisfy Assumptions \ref{asmp:categorical_ZY_not_0}, \ref{asmp:categorical_ZY_density_ratio}
and \ref{asmp:categorial_U_effect_heter} with the same values of $(\delta, \Gamma, \xi)$. 
Moreover,  
from Lemmas \ref{lemma: COR p_xy} and \ref{lemma: COR p_xy_cat_ZY_U}, 
the specification for the categorical case leads to the same causal effect $\tilde{p}_{11\mid 0} \tilde{p}_{00\mid 0}/(\tilde{p}_{10\mid 0} \tilde{p}_{01\mid 0}) = \check{p}_{11\mid 0} \check{p}_{00\mid 0}/(\check{p}_{10\mid 0} \check{p}_{01\mid 0})$ as the original one for the binary case.

From the above and Lemma \ref{lemma:useful}, we can then derive Theorem \ref{thm:cat_ZYU_supp}. 
\end{proof}

\subsection{Proofs of Theorems \ref{thm:confidence_bound} and \ref{thm:cb_sen_3_constr}}

\begin{proof}[\bf Proof of Theorem \ref{thm:confidence_bound}]
    Recall the definition of \(\tilde{l}_{zy}\)s and \(\tilde{u}_{zy}\)s in Theorem \ref{thm:cor_bound_additional} and the definition of \(\hat{l}_{zy}\)s and \(\hat{u}_{zy}\)s in Theorem \ref{thm:confidence_bound}. 
    Suppose that $\hat{\underline{\pi}}_{zy} \leq \pi_{zy} \leq \hat{\overline{\pi}}_{zy}$
    for all $z,y\in \{0,1\}$. 
    We can then verify that $\hat{l}_{zy} \leq \tilde{l}_{zy}$ and $\tilde{u}_{zy} \leq \hat{u}_{zy}$ for all $z,y\in \{0,1\}$. 
    Consequently, 
    \begin{align*}
        \mathcal{B} & \equiv \Big\{ p_{**\mid 0}:  
        \tilde{l}_{zy}
        \le p_{zy\mid 0} \le 
        \tilde{u}_{zy} \text{ for } z,y\in\{0,1\}, 
        \text{ and }
        \sum_{z,y} p_{zy\mid 0} = 1
        \Big\}.
    \end{align*}
    is a subset of 
    \begin{align*}
        \widehat{\mathcal{B}} & \equiv \Big\{ p_{**\mid 0}:  
        \hat{l}_{zy}
        \le p_{zy\mid 0} \le 
        \hat{u}_{zy} \text{ for } z,y\in \{0,1\}, 
        \text{ and }
        \sum_{z,y} p_{zy\mid 0} = 1
        \Big\}.
    \end{align*}
From Theorem \ref{thm:cor_bound_additional} and Lemma \ref{lemma:optim}, this implies that 
\begin{equation*}
    \underline{\COR}_0 =  \inf_{p_{**\mid 0}\in \mathcal{B}} \frac{p_{11\mid 0} p_{00\mid 0}}{p_{10\mid 0}p_{01\mid 0}}
    \ge 
    \inf_{p_{**\mid 0}\in \widehat{\mathcal{B}}} \frac{p_{11\mid 0} p_{00\mid 0}}{p_{10\mid 0}p_{01\mid 0}} = 
    \widehat{\underline{\COR}}_0. 
\end{equation*}
By the same logic, \( \widehat{\overline{\COR}}_0\geq \overline{\COR}_0\). 

From the above, as long as $\hat{\underline{\pi}}_{zy} \leq \pi_{zy} \leq \hat{\overline{\pi}}_{zy}$
    for all $z,y\in \{0,1\}$, 
$[\underline{\COR}_0, \overline{\COR}_0]$ will be a subset of $[\widehat{\underline{\COR}}_0, \widehat{\overline{\COR}}_0]$. 
Thus, 
\begin{align*}
    \P(\COR_0 \in [\widehat{\underline{\COR}}_0, \widehat{\overline{\COR}}_0]) 
    & \ge 
    \P([\underline{\COR}_0, \overline{\COR}_0] \subset [\widehat{\underline{\COR}}_0, \widehat{\overline{\COR}}_0])
    \\
    & \ge \Pr(\hat{\underline{\pi}}_{zy} \leq \pi_{zy} \leq \hat{\overline{\pi}}_{zy} \text{ for all } z,y\in \{0,1\})
    = \Pr(\pi_{**} \in \mathcal{S}_{\alpha}).
\end{align*}
Theorem \ref{thm:confidence_bound} then follows by the asymptotic validity of the confidence intervals for the true $\pi_{zy}$s; see Section \ref{sec:cs_o}. 
\end{proof}

\begin{proof}[\bf Proof of Theorem \ref{thm:cb_sen_3_constr}]
From Theorems \ref{thm:sen_3_constr} and \ref{thm:cb_sen_3_constr}, we can know that, as long as $\mathcal{S}_{\alpha}$ contains the true $\pi_{zy}$s, 
$[\widehat{\underline{\COR}}_0, \widehat{\overline{\COR}}_0]$ will contain the sharp bounds on the true $\COR_0$. 
Thus, 
\begin{align*}
    \P(\COR_0 \in [\widehat{\underline{\COR}}_0, \widehat{\overline{\COR}}_0])  \ge \Pr(\pi_{**} \in \mathcal{S}_{\alpha} ).
\end{align*}
    Theorem \ref{thm:cb_sen_3_constr} then follows from the asymptotic validity of the confidence set $\mathcal{C}_{\alpha}^{*}$ for the true $\pi_{**}$; see Section \ref{sec:cs_o}. 
\end{proof}

\begin{proof}[\bf Technical details for Remark  \ref{rmk:simultaneous1}]
In either Theorems \ref{thm:confidence_bound} or \ref{thm:cb_sen_3_constr}, 
if $\mathcal{S}_{c, \alpha}$ is a simultaneous asymptotic $1-\alpha$ confidence set for the true $\pi_{**\mid c}$ over all $c\in \mathcal{C}$, then we have 
\begin{align*}
    \liminf_{n\rightarrow \infty}\P(\COR_{0c} \in [\widehat{\underline{\COR}}_{0c}, \widehat{\overline{\COR}}_{0c}] \text{ for all } c \in \mathcal{C})  \ge \liminf_{n\rightarrow \infty} \Pr(\pi_{**\mid c} \in \mathcal{S}_{c, \alpha} \text{ for all } c \in \mathcal{C}) \ge 1-\alpha, 
\end{align*}
where we make the conditioning on $c$ explicit. 
Therefore, $[\widehat{\underline{\COR}}_{0c}, \widehat{\overline{\COR}}_{0c}]$ is a simultaneous asymptotic $1-\alpha$ confidence interval for the true causal odds ratio $\COR_{0c}$ over all $c\in \mathcal{C}$. 
\end{proof}

{\rev 
\subsection{Proof of Theorem \ref{thm:uniform_CI_model_pi}}
In this subsection we first prove the following proposition, which implies Theorem \ref{thm:uniform_CI_model_pi}. 
\begin{proposition}\label{prop:uniform_CI_model}
        Let 
        \(\hat{\beta}_n\) be an estimator of \(\beta_0\in \mathbb{R}^m\) based on samples of size $n$ satisfying that 
        $
        \sqrt{n}(\hat{\beta}_n - \beta_0) 
        \converged
        \mathcal{N}(0,\Sigma)
        $
        for some positive definite matrix \(\Sigma\), 
        and let $\hat{\Sigma}_n$ be a consistent estimator of $\Sigma$, in the sense that $\hat{\Sigma}_n - \Sigma = o_{\P}(1)$. 
        Let $\mathcal{C}$ be a fixed set, and \(g = (g_1, g_2, \ldots, g_r)^\top  : \mathbb{R}^m \times \mathcal{C} \to \mathbb{R}^r\) be a function. 
        Assume that the following regularity conditions hold: 
        \begin{itemize}
            \item[(i)] $g_i(\beta, c)$ is twice differentiable in \(\beta\in \mathbb{R}^m\) for all \(c\in \mathcal{C}\) and $1\le i \le r$, 
            
            \item[(ii)] the nonzero singular values of $D_c \equiv \frac{\partial g(\beta, c)}{\partial \beta} \mid_{\beta = \beta_0} \in \mathbb{R}^{r\times m}$ are uniformly bounded above and below for all $c\in \mathcal{C}$, 

            \item[(iii)] for a sufficiently small $\eta > 0$, the largest singular value of $\frac{\partial^2 g_i(\beta, c)}{\partial \beta \partial \beta^\top }$ is uniformly bounded for all $c\in \mathcal{C}$ and $\beta \in \mathbb{R}^m$ such that $\|\beta - \beta_0\|\le \eta$. 
        \end{itemize}
        Then, for any \(\alpha\in (0,1)\),
        \begin{equation*}
        \liminf_{n\rightarrow \infty}\Pr \left(\sup_{c\in \mathcal{C}}n \{ g(\hat{\beta}_n, c) - g(\beta_0, c) \}^\top \left(D_c \hat{\Sigma}_n D_c^\top  \right)^{\dagger} \{ g(\hat{\beta}_n, c) - g(\beta_0, c)\} \leq \chi^2_{m, 1-\alpha}\right) \geq 1-\alpha, 
        \end{equation*}
        where 
        \(\chi^2_{m, 1-\alpha}\) denotes the \((1-\alpha)\)th quantile of the chi-squared distribution with degrees of freedom $p$, and $A^\dagger$ denotes the pseudoinverse of a matrix $A$.   
\end{proposition}

To prove Proposition \ref{prop:uniform_CI_model}, we need the following three lemmas. 

\begin{lemma}\label{lemma:pesudo_inverse_ABA_norm}
    Suppose \(A\in \mathbb{R}^{n\times m}\) is a non-zero matrix and \(B\in \mathbb{R}^{n\times n}\) is a positive definite matrix. Then
\begin{equation*}
    ||(A^\top BA)^\dagger||\leq \{ \sigma_{\min}^+(A) \}^{-2} \{\lambda_{\min}(B)\}^{-1}
\end{equation*}
where \( \sigma_{\min}^+(A)\) is the smallest non-zero singular value of \(A\) and \(\lambda_{\min}(B)\) is the smallest eigenvalue of \(B\).
\end{lemma}

\begin{proof}[\bf Proof of Lemma \ref{lemma:pesudo_inverse_ABA_norm}]
By definition, 
$
    ||(A^\top BA)^\dagger||  = 1/\lambda_{\min}^+(A^\top BA), 
$
where \(\lambda_{\min}^+(A^\top BA)\) denotes the smallest non-zero eigenvalue of $A^\top BA$. 
Let $\Ker(A) = \{x\in \mathbb{R}^m: Ax = 0\}$ be the null space of $A$, and define analogously $\Ker(A^\top BA)$. Because $B$ is positive definite, we know that $\Ker(A^\top BA) = \Ker(A)$. 
For any $x \bot \Ker(A^\top BA)$, we then have 
\begin{align*}
    \frac{x^\top A^\top BA x}{x^\top x}
    & \ge \lambda_{\min}(B) 
    \frac{x^\top A^\top A x}{x^\top x}
    \ge \lambda_{\min}(B) \lambda_{\min}^{+}(A^\top A)
    = \lambda_{\min}(B)  \{\sigma_{\min}^+(A)\}^2, 
\end{align*}
where $\lambda_{\min}^{+}(A^\top A)$ denotes the smallest nonzero eigenvalue of $A^\top A$. 
This immediately implies that 
$\lambda_{\min}^+(A^\top BA) \ge \lambda_{\min}(B)  \{\sigma_{\min}^+(A)\}^2$.
Therefore, Lemma \ref{lemma:pesudo_inverse_ABA_norm} holds. 
\end{proof}

\begin{lemma}\label{lemma:diff_pseudo_inv}
For any matrix $A$ and $B$ with $B=A+E$, 
\begin{align*}
    \| B^\dagger - A^\dagger \| 
    \le 
    \frac{1+\sqrt{5}}{2} \cdot 
    \max\{ \|A^\dagger\|_2^2, \|B^\dagger\|_2^2 \} \cdot \|E\|. 
\end{align*}
\end{lemma}

\begin{proof}[\bf Proof of Lemma \ref{lemma:diff_pseudo_inv}]
    Lemma \ref{lemma:diff_pseudo_inv} is from \citet[][Theorem 3.3]{Stewart1977}. 
\end{proof}

\begin{lemma}\label{lemma:pesudo_inverse_ABA_norm_diff}
    Suppose \(A\in \mathbb{R}^{n\times m}\) is a non-zero matrix and \(B, \tilde{B}\in \mathbb{R}^{n\times n}\) are two positive definite matrices. Then
\begin{equation*}
    ||(A^\top \tilde{B}A)^\dagger - (A^\top BA)^\dagger||\leq \frac{1+\sqrt{5}}{2} 
        \frac{\|A\|^2}{\{ \sigma_{\min}^+(A) \}^{4}}
        \max\left\{
             \{\lambda_{\min}(\tilde{B})\}^{-2}, 
            \{\lambda_{\min}(B)\}^{-2}
        \right\}
        \| \tilde{B} - B \| . 
\end{equation*}
where \( \sigma_{\min}^+(A)\), \(\lambda_{\min}(B)\) and \(\lambda_{\min}(\tilde{B})\) are defined analogously as in Lemma \ref{lemma:pesudo_inverse_ABA_norm}
\end{lemma}

\begin{proof}[\bf Proof of Lemma \ref{lemma:pesudo_inverse_ABA_norm_diff}]
    From Lemma \ref{lemma:diff_pseudo_inv}, we have 
    \begin{align*}
        ||(A^\top \tilde{B}A)^\dagger - (A^\top BA)^\dagger||\leq \frac{1+\sqrt{5}}{2} 
        \max\left\{
            \| (A^\top \tilde{B}A)^\dagger \|^2, 
            \| (A^\top BA)^\dagger \|^2
        \right\}
        \| A^\top (\tilde{B} - B) A \|.
    \end{align*}
    From Lemma \ref{lemma:pesudo_inverse_ABA_norm},  
    \begin{align*}
        ||(A^\top \tilde{B}A)^\dagger||\leq \{ \sigma_{\min}^+(A) \}^{-2} \{\lambda_{\min}(\tilde{B})\}^{-1}, 
        \quad 
        ||(A^\top BA)^\dagger||\leq \{ \sigma_{\min}^+(A) \}^{-2} \{\lambda_{\min}(B)\}^{-1}.
    \end{align*}
    We then have 
    \begin{align*}
        & \quad \ ||(A^\top \tilde{B}A)^\dagger - (A^\top BA)^\dagger||
        \\
        & \leq \frac{1+\sqrt{5}}{2} 
        \max\left\{
            \{ \sigma_{\min}^+(A) \}^{-4} \{\lambda_{\min}(\tilde{B})\}^{-2}, 
            \{ \sigma_{\min}^+(A) \}^{-4} \{\lambda_{\min}(B)\}^{-2}
        \right\}
        \| A \|  \| \tilde{B} - B \|  \| A \|
        \\
        & \le 
        \frac{1+\sqrt{5}}{2} 
        \frac{\|A\|^2}{\{ \sigma_{\min}^+(A) \}^{4}}
        \max\left\{
             \{\lambda_{\min}(\tilde{B})\}^{-2}, 
            \{\lambda_{\min}(B)\}^{-2}
        \right\}
        \| \tilde{B} - B \| . 
    \end{align*}
    Therefore, Lemma \ref{lemma:pesudo_inverse_ABA_norm_diff} holds.  
\end{proof}

\begin{proof}[\bf Proof of Proposition \ref{prop:uniform_CI_model}
 ]
For any \(c\in \mathcal{C}\) and $1\le i \le r$, 
by Taylor's expansion, we have
\begin{align*}%
    g_i(\hat{\beta}_n, c) - g_i(\beta_0, c) & = \frac{\partial g_i(\beta, c)}{\partial \beta} 
    \mid_{\beta = \beta_0} (\hat{\beta}_{n}- \beta_0) + 
    \frac{1}{2} 
    (\hat{\beta}_{n}- \beta_0)^\top 
    \frac{\partial^2 g_i(\beta, c)}{\partial \beta \partial \beta^\top }
    \mid_{\beta = \beta^{(i)}_n}
    (\hat{\beta}_{n}- \beta_0)\\
    & = \frac{\partial g_i(\beta, c)}{\partial \beta} 
    \mid_{\beta = \beta_0} (\hat{\beta}_{n}- \beta_0) + 
    M_{ni}^{(c)}, 
\end{align*}
where $\beta^{(i)}_n$ is between $\hat{\beta}_{n}$ and $\beta_0$, 
and 
\begin{align*}
    M_{ni}^{(c)} \equiv 
    \frac{1}{2} 
    (\hat{\beta}_{n}- \beta_0)^\top 
    \frac{\partial^2 g_i(\beta, c)}{\partial \beta \partial \beta^\top }
    \mid_{\beta = \beta^{(i)}_n}
    (\hat{\beta}_{n}- \beta_0), 
    \quad (1\le i \le r). 
\end{align*}
Let $M_n^{(c)} = (M_{n1}^{(c)}, M_{n2}^{(c)}, \ldots, M_{nr}^{(c)})^\top$. 
We then have 
\begin{align}\label{eq:taylor_exp}
    g(\hat{\beta}_n, c) - g(\beta_0, c) & = D_c(\hat{\beta}_{n}- \beta_0) + 
    M_n^{(c)}.  
\end{align}
We further introduce $\Delta_c = (D_c \hat{\Sigma}_n D_c^\top)^\dagger - (D_c \Sigma D_c^\top)^\dagger$. 
Consequently, we have 
\begin{align*}
    & \quad \ \{ g(\hat{\beta}_n, c) - g(\beta_0, c) \}^\top \left(D_c \hat{\Sigma}_n D_c^\top  \right)^{\dagger} \{ g(\hat{\beta}_n, c) - g(\beta_0, c)\}
    \\
    & = 
    \{ D_c(\hat{\beta}_{n}- \beta_0) + 
    M_n^{(c)} \}^\top 
    \left\{  (D_c \Sigma D_c^\top)^\dagger + \Delta_c \right\}
    \{ D_c(\hat{\beta}_{n}- \beta_0) + 
    M_n^{(c)}\}\\
    & =  
(\hat{\beta}_{n}- \beta_0)^\top D_c^\top \left(D_c \Sigma D_c^\top  \right)^{\dagger} 
D_c(\hat{\beta}_{n}- \beta_0) 
+ 
\{M_n^{(c)}\}^\top \left(D_c \Sigma D_c^\top  \right)^{\dagger} M_n^{(c)}\\
& \quad \ 
+ 
\{ 
    M_n^{(c)}
\}^\top \left(D_c \Sigma D_c^\top  \right)^{\dagger} \{ 
D_c(\hat{\beta}_{n}- \beta_0) \}
+ 
\{ 
D_c(\hat{\beta}_{n}- \beta_0)
\}^\top \left(D_c \Sigma D_c^\top  \right)^{\dagger} 
    M_n^{(c)} \\
& \quad \   + 
\{ D_c(\hat{\beta}_{n}- \beta_0) + 
    M_n^{(c)}  \}^\top  \Delta_c \{ D_c(\hat{\beta}_{n}- \beta_0) + 
    M_n^{(c)} \}
\\
& = Q_1(c) + Q_2(c) + Q_3(c) + Q_3(c)^\top + Q_4 (c), 
\end{align*}
where 
\begin{align*}
    Q_1(c) & \equiv (\hat{\beta}_{n}- \beta_0)^\top D_c^\top \left(D_c \Sigma D_c^\top  \right)^{\dagger} 
D_c(\hat{\beta}_{n}- \beta_0), \quad 
Q_2(c)  \equiv \{M_n^{(c)}\}^\top \left(D_c \Sigma D_c^\top  \right)^{\dagger} M_n^{(c)}, \\
Q_3(c) & \equiv \{ 
    M_n^{(c)}
\}^\top \left(D_c \Sigma D_c^\top  \right)^{\dagger} \{ 
D_c(\hat{\beta}_{n}- \beta_0) \}, \\
Q_4(c) & \equiv \{ D_c(\hat{\beta}_{n}- \beta_0) + 
    M_n^{(c)}  \}^\top  \Delta_c \{ D_c(\hat{\beta}_{n}- \beta_0) + 
    M_n^{(c)} \}.
\end{align*}
Below we will first bound $Q_1(c)$, $Q_2(c)$, $Q_3(c)$ and $Q_4(c)$, and then prove the theorem.

First, we prove that 
$\sup_{c\in \mathcal{C}} \|M_{n}^{(c)}\| = \| \hat{\beta}_{n}- \beta_0 \|^2 \cdot O_{\Pr}(1)$. 
By definition, we have  
\begin{align*}
    \sup_{1\le i\le r, c\in \mathcal{C}} |M_{ni}^{(c)}| & \le 
    \frac{1}{2} 
    \| \hat{\beta}_{n}- \beta_0 \|^2 
    \cdot 
    \sup_{1\le i\le r, c\in \mathcal{C}}
    \left\| \frac{\partial^2 g_i(\beta, c)}{\partial \beta \partial \beta^\top }  \mid_{\beta = \beta^{(i)}_n} \right\|
    \\
    & \le 
    \| \hat{\beta}_{n}- \beta_0 \|^2 
    \cdot 
    \frac{1}{2}  \sup_{1\le i\le r, c\in \mathcal{C}}
    \sup_{\beta:\|\beta-\beta_0\| \le \|\hat{\beta}_n-\beta_0\|}\left\| \frac{\partial^2 g_i(\beta, c)}{\partial \beta \partial \beta^\top }   \right\|.
\end{align*}
Note that 
\begin{align*}
    & \quad \ \sup_{1\le i\le r, c\in \mathcal{C}}
    \sup_{\beta:\|\beta-\beta_0\| \le \|\hat{\beta}_n-\beta_0\|}\left\| \frac{\partial^2 g_i(\beta, c)}{\partial \beta \partial \beta^\top }   \right\|
    \\
    & \le 
    \sup_{1\le i\le r, c\in \mathcal{C}}
    \sup_{\beta:\|\beta-\beta_0\| \le \eta}\left\| \frac{\partial^2 g_i(\beta, c)}{\partial \beta \partial \beta^\top }   \right\|
    + 
    \I(\|\hat{\beta}_n - \beta_0\| > \eta) 
    \cdot 
    \sup_{1\le i\le r, c\in \mathcal{C}}
    \sup_{\beta:\|\beta-\beta_0\| \le \|\hat{\beta}_n-\beta_0\|}\left\| \frac{\partial^2 g_i(\beta, c)}{\partial \beta \partial \beta^\top }   \right\|\\
    & = O(1) + o_{\Pr}(1) =  O_{\Pr}(1), 
\end{align*}
where the second last equality follows from condition (iii) in the theorem and the fact that $\|\hat{\beta}_n-\beta_0\| = o_{\Pr}(1)$. 
Consequently, we must have  $\sup_{1\le i\le r, c\in \mathcal{C}} |M_{ni}^{(c)}| = \| \hat{\beta}_{n}- \beta_0 \|^2 \cdot O_{\Pr}(1)$.
This immediate implies that $\sup_{c\in \mathcal{C}} \|M_{n}^{(c)}\| = \| \hat{\beta}_{n}- \beta_0 \|^2 \cdot O_{\Pr}(1)$. 

Second, we prove that $\sup_{c\in \mathcal{C}}\| \left(D_c \Sigma D_c^\top  \right)^{\dagger} \| = O(1)$. 
This follows immediately from Lemma \ref{lemma:pesudo_inverse_ABA_norm} and condition (ii) in the theorem.

Third, we prove that $\sup_{c\in \mathcal{C}} \Delta_c = o_{\P}(1)$. 
From Lemma \ref{lemma:pesudo_inverse_ABA_norm_diff}, 
\begin{align*}
    \sup_{c\in \mathcal{C}} \|\Delta_c\| & 
    \leq \frac{1+\sqrt{5}}{2} 
        \sup_{c\in \mathcal{C}} \frac{\|D_c\|^2}{\{ \sigma_{\min}^+(D_c) \}^{4}}
        \cdot
        \max\left\{
             \{\lambda_{\min}(\hat{\Sigma}_n)\}^{-2}, 
            \{\lambda_{\min}(\Sigma)\}^{-2}
        \right\}
        \| \hat{\Sigma}_n- \Sigma \| 
        = o_{\P}(1), 
\end{align*}
where the last equality follows from condition (ii) in the theorem and the condition that $\hat{\Sigma}_n- \Sigma = o_{\P}(1)$. 

Fourth, by the property of pseudo inverse, 
$\Sigma^{1/2} D_c^\top \left(D_c \Sigma D_c^\top  \right)^{\dagger} 
D_c \Sigma^{1/2}$ is a projection matrix, where $\Sigma^{1/2}$ is the positive definite square root of $\Sigma$. 
Let $\Sigma^{-1/2}$ be the inverse of $\Sigma^{1/2}$. We then have, for any $c\in \mathcal{C}$, 
\begin{align*}
    Q_1(c) & = (\hat{\beta}_{n}- \beta_0)^\top \Sigma^{-1/2} \Sigma^{1/2} D_c^\top \left(D_c \Sigma D_c^\top  \right)^{\dagger} 
    D_c \Sigma^{1/2} \Sigma^{-1/2}(\hat{\beta}_{n}- \beta_0)
    \le (\hat{\beta}_{n}- \beta_0)^\top \Sigma^{-1} (\hat{\beta}_{n}- \beta_0). 
\end{align*}

Fifth, from the above first to third parts, 
and using the facts that $\sup_{c\in \mathcal{C}}
    \|
    D_c\|  = O(1)$ and $\| \hat{\beta}_{n}- \beta_0 \| = O_{\Pr}(n^{-1/2})$, 
we have 
\begin{align*}
    \sup_{c\in \mathcal{C}}|Q_2(c)|
    & \le 
    \sup_{c\in \mathcal{C}} \| 
    M_n^{(c)}\|  
    \cdot
    \sup_{c\in \mathcal{C}} \| \left(D_c \Sigma D_c^\top  \right)^{\dagger} \| 
    \cdot
    \sup_{c\in \mathcal{C}} \| M_n^{(c)}\|
    = \| \hat{\beta}_{n}- \beta_0 \|^4 \cdot O_{\Pr}(1) = O_{\P}(n^{-2}),\\
    \sup_{c\in \mathcal{C}}|Q_3(c)|
    & \le 
    \sup_{c\in \mathcal{C}} \| M_n^{(c)} \|
    \cdot
    \sup_{c\in \mathcal{C}}\| \left(D_c \Sigma D_c^\top  \right)^{\dagger} \| 
    \cdot 
    \sup_{c\in \mathcal{C}}
    \|
    D_c\| 
    \cdot \| \hat{\beta}_{n}- \beta_0 \|
    = 
    O_{\Pr}(\| \hat{\beta}_{n}- \beta_0 \|^3)= O_{\P}(n^{-3/2}),\\
    \sup_{c\in \mathcal{C}}|Q_4(c)|
    & \le \left( \sup_{c\in \mathcal{C}} \| D_c\| \| \hat{\beta}_{n}- \beta_0\| + 
    \sup_{c\in \mathcal{C}} \|M_n^{(c)}\| \right)^2  \| \sup_{c\in \mathcal{C}} \Delta_c \|
    = 
    \| \hat{\beta}_{n}- \beta_0\| ^2 \cdot o_{\P}(1)
    = o_{\P}(n^{-1}). 
    \end{align*}

From the above, we have  
\begin{align*}
    & \quad \ \sup_{c\in \mathcal{C}}n\{ g(\hat{\beta}_n, c) - g(\beta_0, c) \}^\top \left(D_c \Sigma D_c^\top  \right)^{\dagger} \{ g(\hat{\beta}_n, c) - g(\beta_0, c)\}
    \\
    & \le n \sup_{c\in \mathcal{C}} Q_1(c)
    + 
    n \sup_{c\in \mathcal{C}} Q_2(c) + n \sup_{c\in \mathcal{C}} Q_3(c) + n \sup_{c\in \mathcal{C}} Q_3(c)^\top + n \sup_{c\in \mathcal{C}} Q_4(c)\\
    & \le 
    n (\hat{\beta}_{n}- \beta_0)^\top \Sigma^{-1} (\hat{\beta}_{n}- \beta_0)
    + o_{\Pr}(1)\\
    & \converged \chi^2_m,
\end{align*}
where $\chi^2_m$ denotes a chi-squared random variable with degrees of freedom $m$, 
and the last convergence follows from the asymptotic distribution of $\hat{\beta}_{n}$ and Slutsky's theorem. 
This then implies that 
\begin{align*}
    & \quad \ \liminf_{n\rightarrow \infty}\Pr \left(\sup_{c\in \mathcal{C}}n\{ g(\hat{\beta}_n, c) - g(\beta_0, c) \}^\top \left(D_c \hat{\Sigma}_n D_c^\top  \right)^{\dagger} \{ g(\hat{\beta}_n, c) - g(\beta_0, c)\} \leq q_\alpha\right) \\
    & \geq 
    \Pr \left( \chi^2_m \leq \chi^2_{m, 1-\alpha}\right)
    =
    1-\alpha.
\end{align*}
Therefore, Proposition \ref{prop:uniform_CI_model} holds. 
\end{proof}

\begin{proof}[\bf Proof of Theorem \ref{thm:uniform_CI_model_pi}]
Note that, when 
\begin{align*}
    \sup_{c\in \mathcal{C}} n \{ g(\hat{\beta}_n, c) - \pi_{**\mid c} \}^\top \left(D_c \hat{\Sigma}_n D_c^\top  \right)^{\dagger} \{ g(\hat{\beta}_n, c) - \pi_{**\mid c}\} \leq \chi^2_{m, 1-\alpha}, 
\end{align*}
we must have 
$\pi_{**\mid c} \in \mathcal{S}_{c, \alpha}^g$ for all $c \in \mathcal{C}$. 
Theorem \ref{thm:uniform_CI_model_pi} then follows from Proposition \ref{prop:uniform_CI_model}. 
\end{proof}

\subsection{Proof of Proposition \ref{prop:multinom_cond}}

To prove Proposition \ref{prop:multinom_cond}, we need the following three lemmas. 

\begin{lemma}\label{lemma:eigen_M}
    Consider any $J\ge 2$ and any $g = (g_1, \ldots, g_J)^\top \in \mathbb{R}^J$ such that 
    $g_j \ge 0$ for all $j$ and $\sum_{j=1}^J g_j = 1$. 
    Let $M= \text{diag}(g) - g g^\top$. 
    Let $\lambda_2(M)$ and $\lambda_{\max}(M)$ be the second smallest and the large eigenvalues of $M$. 
    We then have 
    \begin{align*}
        \min_{1\le j\le J} g_j \le \lambda_2(M) 
        \le 
        \lambda_{\max}(M) \le \max_{1\le j \le J} g_j. 
    \end{align*}
\end{lemma}

\begin{proof}[\bf Proof of Lemma \ref{lemma:eigen_M}]
The upper bound on $\lambda_{\max}(M)$ follows immediately from
    \begin{align*}
        \lambda_{\max}(M) \le \lambda_{\max}\left( \text{diag}(g) \right) \le g_{\max}. 
    \end{align*}
Below we focus on the lower bound of $\lambda_2(M)$. 
Let $g_{\min} = \min_{1\le j\le J} g_j$. 
Note that $1_J = (1, \ldots, 1)^\top$ is an eigenvector of $M$ corresponding to eigenvalue zero. 
We can know that 
\begin{align*}
    \lambda_2(M)
    & = 
    \min_{x\ne 0: x\bot 1_J} 
    \frac{x^\top M x}{x^\top x}. 
\end{align*}
Let $h_j = g_j - g_{\min} \ge 0$ for $1\le j \le J$. 
For any $x\bot 1_J$, we then have $\sum_{j=1}^J x_j=0$, and consequently 
\begin{align*}
    x^\top M x
    & = 
    \sum_{j=1}^J g_j x_j^2 - \left( \sum_{j=1}^J g_j x_j \right)^2
    = 
    \sum_{j=1}^J h_j  x_j^2 + g_{\min} \sum_{j=1}^J x_j^2 - \left( \sum_{j=1}^J h_j x_j + g_{\min} \sum_{j=1}^J x_j \right)^2\\
    & = \sum_{j=1}^J h_j  x_j^2 + g_{\min} \sum_{j=1}^J x_j^2 - \left( \sum_{j=1}^J h_j x_j  \right)^2. 
\end{align*}
By the Cauchy--Schwarz inequality, 
\begin{align*}
    \left( \sum_{j=1}^J h_j x_j  \right)^2
    & \le \sum_{j=1}^J h_j \cdot \sum_{j=1}^J h_j x_j^2 
    = \left( \sum_{j=1}^J g_j - J g_{\min} \right)  \cdot \sum_{j=1}^J h_j x_j^2
    = \left( 1 - J g_{\min} \right)  \cdot \sum_{j=1}^J h_j x_j^2\\
    & \le \sum_{j=1}^J h_j x_j^2. 
\end{align*}
Thus, we must have, for any $x\bot 1_J$, 
\begin{align*}
    x^\top M x
    & = \sum_{j=1}^J h_j  x_j^2 + g_{\min} \sum_{j=1}^J x_j^2 - \left( \sum_{j=1}^J h_j x_j  \right)^2
    \ge g_{\min} \sum_{j=1}^J x_j^2 = g_{\min}  x^\top x. 
\end{align*}
This immediately implies that $\lambda_2(M) \ge g_{\min}$. 
Therefore, Lemma \ref{lemma:eigen_M} holds. 
\end{proof}

\begin{lemma}\label{lemma:sv_deri_g}
    Let $\beta = (\beta_2^\top, \ldots, \beta_J^\top)^\top \in \mathbb{R}^{(J-1)p}$. Define, for any $c\in \mathbb{R}^p$, 
    \begin{align*}
    g(\beta,  c) 
    = 
    \left\{1 + \sum_{j=2}^J \exp( c^\top \beta_j ) \right\}^{-1}
    \begin{pmatrix}
    1\\
    \exp( c^\top \beta_2 )\\
    \vdots\\
    \exp( c^\top \beta_J )
    \end{pmatrix}. 
\end{align*}
Let $M\equiv \text{diag}(g) - g g^\top$.
We then have, for any nonzero $c$,  
\begin{align*}
    \frac{\lambda_2(M) }{\sqrt{J}}\|c\| \le \sigma_{\min}^+ \left( \frac{\partial g}{\partial \beta} \right) 
    \le \sigma_{\max} \left( \frac{\partial g}{\partial \beta} \right) 
    \le \|c\| \lambda_{\max} (M).
\end{align*}
\end{lemma}
\begin{proof}
Consider any given $\beta$ and $c\ne 0$. 
By some algebra, we can show that
    \begin{align}\label{eq:deri_g}
    \frac{\partial g}{\partial \beta} 
    = 
    M 
    \cdot 
    \begin{pmatrix}
        e_2 c^\top & e_3 c^\top & \ldots & e_J c^\top 
    \end{pmatrix}. 
\end{align}

We first give an upper bound of $\sigma_{\max} ( {\partial g}/{\partial \beta})$. 
Consider any $x = (x_2^\top, x_3^\top, \ldots, x_J^\top)^\top \in \mathbb{R}^{(J-1)p}$, where $x_j \in \mathbb{R}^p$ for $2\le j \le J$. 
We have
\begin{align*}
    \frac{\partial g}{\partial \beta} x
    = 
    M 
    \cdot 
    \begin{pmatrix}
        e_2 c^\top & e_3 c^\top & \ldots & e_J c^\top 
    \end{pmatrix}
    \cdot
    \begin{pmatrix}
        x_2 \\
        \vdots\\
        x_J
    \end{pmatrix}
    = 
    M 
    \cdot
    \sum_{j=2}^J e_j c^\top x_j
    = 
    M \cdot
    \begin{pmatrix}
        0 \\
        c^\top x_2\\
        \vdots \\
         c^\top x_J
    \end{pmatrix}
    = 
    M \tilde{x},
\end{align*}
where $\tilde{x} = (0, c^\top x_2, \ldots, c^\top x_J)^\top$. 
Note that 
\begin{align*}
    \| \tilde{x} \|^2
    & = \sum_{j=2}^J (c^\top x_j)^2 \le 
    \sum_{j=2}^J \|c\|^2 \|x_j\|^2 =  \|c\|^2 \sum_{j=2}^J  \|x_j\|^2
    = \|c\|^2 \|x\|^2. 
\end{align*}
We then have 
\begin{align*}
    \left\| \frac{\partial g}{\partial \beta} x \right\|
    & \le 
    \|M\| \| \tilde{x} \| = \|c\|\|M\| \| x \| =  \|c\| \lambda_{\max}(M) \| x \|.
\end{align*}
This immediately implies that 
$
\sigma_{\max} ( \partial g/\partial \beta ) \le  \|c\| \lambda_{\max} (M). 
$

We then consider a lower bound on $\sigma_{\min}^{+} ( {\partial g}/{\partial \beta})$. 
Consider any $x = (x_2^\top, x_3^\top, \ldots, x_J^\top)^\top \in \mathbb{R}^{(J-1)p}$, where $x_j \in \mathbb{R}^p$ for $2\le j \le J$, such that $x\bot \Ker({\partial g}/{\partial \beta})$. 
From \eqref{eq:deri_g}, we have 
\begin{align*}
    \Ker({\partial g}/{\partial \beta})
    & \supset 
    \Ker\left( 
    \begin{pmatrix}
        e_2 c^\top & e_3 c^\top & \ldots & e_J c^\top 
    \end{pmatrix}
    \right)\\
    & 
    = 
    \left\{
    y = (y_2^\top, y_3^\top, \ldots, y_J^\top)^\top: 
    c^\top y_2 = \ldots = c^\top y_J = 0, 
    \text{$y_j \in \mathbb{R}^p$ for $2\le j \le J$}
    \right\}. 
\end{align*}
Thus, $x$ must have the form $x^\top = (a_2 c^\top, \ldots, a_J c^\top)$ for some $a \equiv (a_2, \ldots, a_J) \in \mathbb{R}^{J-1}$ and $a \ne 0$. 
We then have 
\begin{align*}
    \frac{\partial g}{\partial \beta}  x
    & = M 
    \cdot 
    \sum_{j=2}^J e_j a_j \|c\|^2
    = \|c\|^2 
    \cdot M 
    \cdot 
    \begin{pmatrix}
    0 \\
    a_2 \\
    \vdots\\
    a_J
    \end{pmatrix}
    = \|c\|^2 
    \cdot M \tilde{a},
\end{align*}
where $\tilde{a} \equiv (0, a_2, \ldots, a_J)^\top$. 
Note that $1_{J}$ is in $\Ker(M)$. 
Let $b = \tilde{a} - J^{-1} 1_{J} 1_J^\top \tilde{a}$ be the project of $\tilde{a}$ onto the space orthogonal to $1_{J}$. We then have 
\begin{align*}
    \| \frac{\partial g}{\partial \beta}  x \|
    & = \|c\|^2 
    \cdot \| M\tilde{a} \|
    = \|c\|^2 
    \cdot \| M b \| \ge 
    \|c\|^2 \lambda_2(M) \|b\|, 
\end{align*}
where $\lambda_2(M)$ denotes the second smallest eigenvalue of $M$. 
Note that 
\begin{align*}
    \|b\|^2 = \|\tilde{a}\|^2 - J^{-1} ( 1_J^\top \tilde{a})^2 
    = \|a\|^2 - J^{-1} (1_{J-1}^\top a)^2
    \ge 
    \|a\|^2 -  J^{-1} (J-1)\|a\|^2
    = \frac{1}{J} \|a\|^2,
\end{align*}
where the inequality follows from Cauchy--Schwarz inequality. 
Consequently, 
\begin{align*}
    \| \frac{\partial g}{\partial \beta}  x \|
    & \ge 
    \|c\|^2 \lambda_2(M) \|b\| 
    \ge  \frac{1}{\sqrt{J}}\|c\|^2 \lambda_2(M) \|a\|. 
\end{align*}
Because, by definition, 
$
    \|x\|^2 = a_2^2  \|c\|^2 + \ldots + a_J^2 \|c\|^2 = \| a\|^2 \|c\|^2,
$
we then have 
\begin{align*}
    \| \frac{\partial g}{\partial \beta}  x \|
    & \ge 
    \frac{1}{\sqrt{J}}\|c\| \lambda_2(M) \|x\|. 
\end{align*}
Because $x$ can be any vector orthogonal to $\Ker({\partial g}/{\partial \beta})$, we must have 
$
    \sigma_{\min}^+ ( {\partial g}/{\partial \beta} )  \ge  \lambda_2(M) \|c\|/\sqrt{J}. 
$

From the above, Lemma \ref{lemma:sv_deri_g} holds. 
\end{proof}

\begin{lemma}\label{lemma:g_hess}
    Define $g$ in the same way as in Lemma \ref{lemma:sv_deri_g}. 
    For any $\beta = (\beta_2^\top, \ldots, \beta_J^\top)^\top \in \mathbb{R}^{(J-1)p}$, $c\in \mathbb{R}^p$ and $1\le i \le J$, we have 
    \begin{align*}
        \left\|\frac{\partial^2 g_i}{\partial \beta \partial \beta} \right\| \le 2(J-1) \|c\|^2. 
    \end{align*}
\end{lemma}
\begin{proof}[\bf Proof of Lemma \ref{lemma:g_hess}]
By some algebra, we can show that
\begin{align*}
    \frac{\partial^2 g_i}{\partial \beta_k \partial \beta_j}
    & = 
    \left\{ g_i(\delta_{ik} - g_k)(\delta_{ij} - g_j)
    - g_i g_j (\delta_{jk} - g_k) \right\} cc^\top, 
    \quad 
    (1\le i \le J, 2\le j,k \le J). 
\end{align*}
Because $g_1, \ldots, g_J$ are all between $0$ and $1$, we have, for any $1\le i \le J$ and $2\le j,k \le J$,
\begin{align*}
    \left\|\frac{\partial^2 g_i}{\partial \beta_k \partial \beta_j} \right\|_{F}
    \le 
    \left| g_i(\delta_{ik} - g_k)(\delta_{ij} - g_j)
    - g_i g_j (\delta_{jk} - g_k) \right| \cdot \| cc^\top \|_F
    \le 2  \| cc^\top \|_F \le 2 \|c\|_F^2 = 2\|c\|^2. 
\end{align*}
This then implies that 
\begin{align*}
    \left\|\frac{\partial^2 g_i}{\partial \beta \partial \beta} \right\|^2
    \le 
    \left\|\frac{\partial^2 g_i}{\partial \beta \partial \beta} \right\|_{F}^2 
    = 
    \sum_{j=2}^J \sum_{k=2}^J \left\|\frac{\partial^2 g_i}{\partial \beta_k \partial \beta_j} \right\|_{F}^2 
    \le (J-1)^2 \cdot 4 \|c\|^4.
\end{align*}
We can then derive Lemma \ref{lemma:g_hess}.   
\end{proof}

\begin{proof}[\bf Proof of Proposition \ref{prop:multinom_cond}]
Condition (i) in Theorem \ref{thm:uniform_CI_model_pi} holds obviously. 
Recall that $\tilde{c} = (1,c^\top)^\top$. 
From Lemma \ref{lemma:g_hess}, for $1\le i \le 4$, 
\begin{align*}
    \left\|\frac{\partial^2 g_i}{\partial \beta \partial \beta} \right\| \le 2\cdot (4-1) \|\tilde{c}\|^2 =  6 (1 + \|c\|^2) 
\end{align*}
and is thus uniformly bounded over all $c\in \mathcal{C}$ and $\beta\in \mathbb{R}^m$. 
This implies condition (iii) in Theorem \ref{thm:uniform_CI_model_pi}. 
Below we prove condition (ii) in Theorem \ref{thm:uniform_CI_model_pi}.  

Let $g_i(\beta, c)$ denote the $i$th component of $g(\beta, c)$, for $1\le i \le 4$. 
From Lemmas \ref{lemma:eigen_M} and \ref{lemma:sv_deri_g}, the nonzero singular values of $D_c$ are bounded from below by 
\begin{align*}
    \sigma_{\min}^+ \left( D_c \right) 
    \ge 
    \frac{\lambda_2(M(\beta_0, c)) }{\sqrt{4}}\|\tilde{c}\|
    \ge 
    \frac{\lambda_2(M(\beta_0, c)) }{2}
    \ge 
    \frac{1}{2} \min_i g_i(\beta_0, c), 
\end{align*}
where the second last inequality holds because $\tilde{c}$ includes an intercept term, 
and they are bounded from above by 
\begin{align*}
    \sigma_{\max} \left( D_c \right) 
    \le 
    \|c\| \lambda_{\max} (M(\beta_0, c))
    \le 
    \|c\| \max_{1\le i \le 4} g_i(\beta_0, c ),
\end{align*}
where $M(\beta_0, c)\equiv \text{diag}(g(\beta_0, c)) - g(\beta_0, c) g(\beta_0, c)^\top$.
Because $c$ is uniformly bounded over $\mathcal{C}$, 
we can know that, for $1\le i \le 4$, $g_i(\beta_0, c )$ is also uniformly bounded over $\mathcal{C}$. 
These then imply condition (ii) in Theorem \ref{thm:uniform_CI_model_pi}. 

From the above, Proposition \ref{prop:multinom_cond} holds.
\end{proof}

}

\end{document}